\providecommand{\tabularnewline}{\\}
\theoremstyle{plain}
\newtheorem{assumption}{\protect\assumptionname}
\theoremstyle{definition}
 \newtheorem{example}{\protect\examplename}
\theoremstyle{plain}
\newtheorem{lem}{\protect\lemmaname}
\theoremstyle{plain}
\newtheorem{lyxalgorithm}{\protect\algorithmname}
\theoremstyle{plain}
\newtheorem{thm}{\protect\theoremname}
\theoremstyle{plain}
\newtheorem{prop}{\protect\propositionname}
\providecommand{\algorithmname}{Algorithm}
\providecommand{\assumptionname}{Assumption}
\providecommand{\examplename}{Example}
\providecommand{\lemmaname}{Lemma}
\providecommand{\propositionname}{Proposition}
\providecommand{\theoremname}{Theorem}
\begin{document}
\title{Efficient and Convergent Sequential Pseudo-Likelihood Estimation of
Dynamic Discrete Games}
\author{Adam Dearing\thanks{Cornell University and NBER. S.C. Johnson Graduate School of Management,
S.C. Johnson College of Business, 304 Sage Hall, 141 E. Ave., Ithaca
NY 14853. Email: \texttt{aed237@cornell.edu}.} ~and Jason R. Blevins\thanks{The Ohio State University. Department of Economics, 410 Arps Hall,
1945 N. High St., Columbus OH 43210. Email: \texttt{blevins.141@osu.edu}.}\\
}
\date{April 24, 2024}
\maketitle
\begin{abstract}
We propose a new sequential Efficient Pseudo-Likelihood ($k$-EPL)
estimator for dynamic discrete choice games of incomplete information.
$k$-EPL considers the joint behavior of multiple players simultaneously,
as opposed to individual responses to other agents' equilibrium play.
This, in addition to reframing the problem from conditional choice
probability (CCP) space to value function space, yields a computationally
tractable, stable, and efficient estimator. We show that each iteration
in the $k$-EPL sequence is consistent and asymptotically efficient,
so the first-order asymptotic properties do not vary across iterations.
Furthermore, we show the sequence achieves higher-order equivalence
to the finite-sample maximum likelihood estimator with iteration and
that the sequence of estimators converges almost surely to the maximum
likelihood estimator at a nearly-superlinear rate when the data are
generated by any regular Markov perfect equilibrium, including equilibria
that lead to inconsistency of other sequential estimators. When utility
is linear in parameters, $k$-EPL iterations are computationally simple,
only requiring that the researcher solve linear systems of equations
to generate pseudo-regressors which are used in a static logit/probit
regression. Monte Carlo simulations demonstrate the theoretical results
and show $k$-EPL's good performance in finite samples in both small-
and large-scale games, even when the game admits spurious equilibria
in addition to one that generated the data. We apply the estimator
analyze competition in the U.S. wholesale club industry.

\textbf{Keywords:} dynamic discrete games, dynamic discrete choice,
multiple equilibria, pseudo maximum likelihood estimation.

\textbf{JEL Classification:} C57, C63, C73, L13.

\newpage{}
\end{abstract}

\section{\label{sec:Introduction}Introduction}

Estimation of dynamic discrete choice models -- particularly dynamic
discrete games of incomplete information -- is a topic of considerable
interest in economics. Broadly, likelihood-based estimation of these
models takes the form

\begin{align*}
\max_{(\theta,Y)\in\Theta\times\mathcal{Y}} & Q_{N}(\theta,Y)\\
\text{{s.t.\:}} & G(\theta,Y)=0,
\end{align*}
where $Q_{N}$ is the log-likelihood function based on $N$ independent
markets, $\theta$ is a finite-dimensional vector of parameters, $Y$
is a vector of important auxiliary parameters, and $G(\theta,Y)=0$
is a vector equality constraint that represents equilibrium conditions.
The parameters $\theta$ usually consist of the structural parameters
of the model. Common examples of auxiliary parameters $Y$ include
expected/integrated value functions or conditional choice probabilities,
since the equality constraint is often derived from an equilibrium
fixed point condition of the form $G(\theta,Y)\equiv Y-\Gamma(\theta,Y)=0$.

One approach to estimating these models is to directly impose the
fixed point equation for each trial value of $\theta$ visited by
the optimization algorithm by solving for $Y_{\theta}$ such that
$G(\theta,Y_{\theta})=0$. This approach was pioneered by \citet{Miller1984},
\citet{Wolpin1984}, \citet{Pakes1986}, and \citet{RustGMC1987}
for single-agent models, where the fixed point is unique and can be
computed via standard value function iteration or backwards induction.
Solution algorithms are available for dynamic games \citep{PakesMcguire1994,PakesMcGuire2001},
but it is often infeasible to nest those within an estimation routine
because the computational burden can be quite large. Furthermore,
in games the model may be incomplete due to multiple solutions $Y_{\theta}$
\citep{Tamer2003}.

These issues with the nested fixed-point approach led researchers
to extend conditional choice probability (CCP) estimators -- first
introduced in the seminal work of \citet{HotzMil1993} for single-agent
models -- to the case of dynamic discrete games. Of particular interest
here is the nested pseudo-likelihood approach of Aguirregabiria and
Mira \citeyearpar{AgMir2002,AgMir2007}.\footnote{Some other examples of CCP estimators are described in \citet{HotzMSS1994,BajBenkLevin2007,PakesOstBerry2007,PesSD2008}.}
They suggest using a $k$-step nested pseudo-likelihood ($k$-NPL)
approach, which defines a sequence of estimators, as an algorithm
for computing the nested pseudo-likelihood (NPL) estimator, a fixed
point of the sequence. In single-agent models, \citet{AgMir2002}
show that the $k$-NPL estimator is efficient for $k\geq1$ when initialized
with a consistent CCP estimate, in the sense that it has the same
limiting distribution as the (partial) maximum likelihood estimator.
Furthermore, \citet{KasShim2008} showed that the sequence converges
to the true parameter values with probability approaching one in large
samples. Indeed, the Monte Carlo simulations in \citet{AgMir2002}
show that single-agent $k$-NPL reliably converges to the maximum
likelihood estimate. The combination of computational simplicity,
efficiency, and convergence stability make $k$-NPL an attractive
alternative to other approaches, such as nested fixed point (computationally
burdensome) or standard Newton or Fisher scoring steps on the full
maximum likelihood problem (often diverges in finite samples in practice).\footnote{This is a well-known limitation of standard Newton steps, and further
regularization is often needed to ensure convergence.}

Unfortunately, these attractive properties of $k$-NPL are lost in
dynamic games. \citet{AgMir2007} show that $k$-NPL estimates are
in general not efficient for $k\leq\infty$, although they show that
the $\infty$-NPL (taking $k\to\infty$ or until convergence) estimator
outperforms the $1$-NPL estimator in efficiency when both are consistent.
But \citet{PesSD2010} show that the sequence may fail to converge
to the equilibrium that generated the data, even with very good starting
values, so that $\infty$-NPL may not be consistent (see also, \citet{KasShim2012},
\citet{EgLaiSu2015}, and \citet{AgMarcoux2019}). \citet{KasShim2012}
show that inconsistency occurs when the NPL mapping is unstable at
the data-generating equilibrium, which is essentially equivalent to
best-response instability of the equilibrium.\footnote{The NPL mapping is first-order equivalent to the best-response mapping
at the equilibrium.}

Another type of CCP estimator is the minimum-distance estimator (\citet{AltMill1998,PesSD2008}).
This estimator is both consistent and efficient in dynamic games,
and \citet{BugniBunt2020} develop a sequential version of this estimator,
which we refer to as $k$-MD. However, Monte Carlo simulations show
that these econometric properties come at the expense of greatly increased
computational burden, taking 12 to 26 times longer per iteration than
$k$-NPL in even the simplest small-scale dynamic games.\footnote{\citet{BugniBunt2020} perform Monte Carlo simulations with a very
small-scale game (two players, two actions, four states), and even
in that setting they remark, ``... computing the optimally weighted
1-MD, 2-MD, and 3-MD estimators takes us roughly 33\%, 75\%, and 80\%
more time than computing the 20-{[}NPL{]} estimator, respectively.''
These translate to roughly 26, 17.5, and 12 times longer per iteration
for each respective case.} This difference in computation time is likely to grow with the size
of the game, which would be a serious concern for empirical applications.
This leaves the researcher with an undesirable tradeoff between the
computationally simple $k$-NPL sequence and the more burdensome $k$-MD
sequence with better efficiency properties.

But there is another concern shared by the $k$-NPL and $k$-MD sequences:
finite sample performance of successive iterations. The $k$-MD estimator
uses the NPL mapping to update choice probabilities between iterations,
so there is reason to be concerned that it may mimic $k$-NPL's finite
sample properties when the data-generating equilibrium is NPL-unstable.
While the first-order asymptotic analysis of \citet{BugniBunt2020}
implies that both sequences are consistent for \emph{finite} $k$,
this asymptotic consistency does not necessarily lead to good performance
in finite samples. Indeed, the finite-sample performance of $k$-NPL
deteriorates with $k$ when the equilibrium is NPL-unstable. For example,
\citet{PesSD2008} consider a finite number of iterations for $k$-NPL
and find that it is severely biased in large-but-finite samples when
the equilibrium is NPL-unstable. In our own Monte Carlo simulations
presented later, we find that substantial bias appears rather quickly,
even for low values of $k$. This issue may also be a concern for
$k$-MD, since it is also consistent for \emph{finite} $k$ but has
unknown stability properties as $k\to\infty$, which may lead to deterioration
in finite-sample performance even for fixed $k$, similar to $k$-NPL.
This concern arises because \citet{KasShim2012} show that instability/inconsistency
of the $k$-NPL sequence arises from instability of the NPL mapping
used to update the choice probabilities between iterations.\footnote{However, a rigorous econometric analysis of the the $k$-MD estimator's
behavior as $k\to\infty$ is beyond the scope of this paper.}

With these concerns about various sequential methods in mind, an important
question arises: is there a CCP-based sequence that achieves a balance
between computational simplicity, asymptotic efficiency, and good
finite-sample properties with any number of iterations -- including
as $k\to\infty$ (iterating to convergence) -- in dynamic games?
The primary contribution of this paper is to provide such a method,
which we name the $k$-step Efficient Pseudo-Likelihood ($k$-EPL)
estimator. We show that $k$-EPL estimates are first-order asymptotically
equivalent to the maximum likelihood estimate for any number of iterations,
$k\geq1$. Thus, every estimate in the sequence is efficient. Furthermore,
we also show that higher-order improvements are achieved with iteration,
so the $k$-EPL sequence converges to the finite-sample maximum likelihood
estimator almost surely. The convergence rate is fast, approaching
super-linear as $N\to\infty$. This convergence result for $k$-EPL
holds even when the data-generating equilibrium is best-response unstable,
rendering $k$-NPL inconsistent.

One key distinction between $k$-EPL and $k$-NPL lies in how we incorporate
simultaneous play by multiple agents. While $k$-NPL focuses on single
agents' responses to a combination of an exogenous state transition
process and other agents' equilibrium play, $k$-EPL incorporates
the simultaneous nature of the game and is based on the joint behavior
of multiple players. Incorporating this additional information yields
increased asymptotic precision.

Despite this conceptual modification, our  $k$-EPL estimator retains
a simple computational structure similar to $k$-NPL. When utility
is linear in the parameters of interest, both $k$-EPL and $k$-NPL
iterations proceed in two stages: (i) solve a set of linear systems
to generate pseudo-regressors; and (ii) use the pseudo-regressors
in a static logit/probit maximum likelihood problem. The linear systems
only need to be solved once per iteration in the sequence, and the
static logit/probit problem is a low-dimensional, strictly concave
problem that has a unique solution and is easy to solve with out-of-the-box
optimization software. Because $k$-EPL incorporates all players simultaneously,
 the linear systems in $k$-EPL have larger dimension than those in
k-NPL. While this increases the relative computation time in large-dimensional
problems, the increase appears much less severe than the computation
time required for $k$-MD. We also find that iterating $k$-EPL to
convergence can be faster than doing so with $k$-NPL when the game
is not too large, since the need for fewer iterations results in lower
overall computation time.

One interesting implication of our higher-order analysis is that iterating
$k$-EPL can also provide a convenient algorithm for computing the
maximum likelihood estimator. We explore this in some of our Monte
Carlo simulations and find that it performs quite well, even with
random starting values. However, our primary focus is on the entire
$k$-EPL sequence, beginning with consistent initial estimates. We
find that much of the practical improvement from iteration is achieved
with just a few iterations in our Monte Carlo experiments, suggesting
that low values of $k$ can be quite effective when the initial CCP
estimates are consistent.

In recent related work, \citet{AgMarcoux2019} studied the finite-sample
properties of $\infty$-NPL ($k$-NPL with $k\to\infty$) and introduced
a variant of the $k$-NPL algorithm that updates the conditional choice
probabilities by applying spectral methods to the CCP updates. The
goal of their algorithm is to improve convergence properties of $\infty$-NPL
for unstable fixed points.\footnote{In the population or in finite samples, the NPL operator may have
spectral radius larger than one for some equilibria, rendering it
unstable. Conversely, the spectral radius of the EPL operator is zero
in the population and near zero in finite samples.} However, there are several differences between their work and ours.
First, they limit their analysis of the spectral algorithm to computing
the best fixed point of the $k$-NPL sequence, whereas we provide
analysis for all the iterations in the $k$-EPL sequence. Second,
upon convergence, the $k$-NPL and spectral $k$-NPL algorithms do
not produce the maximum likelihood estimator, and convergence can
require many iterations. In contrast, our $k$-EPL estimator has the
same limiting distribution as the MLE at each iteration and usually
converges locally to the MLE after few iterations in finite samples.
We verify these properties in our simulation studies.

Our work is also related to methods leveraging Neyman orthogonalization,
which has played a central role in recent advances in the broader
econometrics literature (see, e.g., \citet{ChernozhukovEtAl_TEJ2018,ChernozhukovEtAl_ECMA2022a,ChernozhukovEtAL_ECMA2022b}).
$k$-EPL leverages a type of quasi-Newton step in its construction,
leading to an important ``zero Jacobian'' property. Consequently,
each estimate in the $k$-EPL sequence is asymptotically Neyman orthogonal
to the previous estimate, and which leads to many of $k$-EPL's attractive
econometric properties.

We demonstrate the application of the $k$-EPL estimator through an
empirical analysis of the U.S. wholesale club industry, with a specific
focus on its three major players: Sam's Club, Costco, and BJ's. We
construct a structural dynamic model to examine the industry's competitive
landscape. Leveraging data on club stores operating across the United
States from 2009 to 2021, we employ $k$-EPL to estimate structural
parameters such as fixed costs, entry costs, the effect of market
size, and the competitive effect. Additionally, we consider a counterfactual
experiment designed to identify the key determinants of market entry
behavior and to explore their potential influence on the industry's
structure.

The remainder of the paper proceeds as follows. Section \ref{sec:Dynamic-Discrete-Game}
describes a generic dynamic discrete choice game of incomplete information.
Section \ref{sec:EPL} describes the $k$-EPL estimator, its asymptotic
and finite-sample properties, and its numerical implementation. Section
\ref{sec:Monte-Carlo-Simulations} provides Monte Carlo simulations,
and additional simulation results are included in the appendix. Section
\ref{sec:Empirical-Application} describes our empirical application
to the U.S. wholesale club industry. Section \ref{sec:Conclusion}
concludes. All proofs appear in the Appendix.

\section{\label{sec:Dynamic-Discrete-Game}Dynamic Discrete Games of Incomplete
Information}

Here we describe a canonical stationary dynamic discrete game of complete
information in the style of \citet{AgMir2007} and \citet{PesSD2008}.
Time is discrete, indexed by $t=1,2,3.\dots$. In a given market,
there are $J$ firms indexed by $j\in\mathcal{J}=\{1,2,\dots,|\mathcal{J}|\}$.
Given a vector of state variables observable to all agents and the
econometrician, $x_{t}$, and its own private information $\varepsilon_{t}^{j}$,
each firm chooses an action, $a_{t}^{j}\in\mathcal{A}=\{0,1,2,\dots,|\mathcal{A}|-1\}$.
Action zero is the ``outside option'' when applicable. All players
choose their actions simultaneously.

Agents have flow utilities (profits), $U^{j}(x_{t},a_{t}^{j},a_{t}^{-j},\varepsilon_{t}^{j};\theta_{u})$,
where $a_{t}^{-j}$ are the actions of the other players. States transition
according to $p(x_{t+1},\varepsilon_{t+1}\mid a_{t},x_{t},\varepsilon_{t};\theta_{f})$,
and the discount factor is $\beta\in(0,1)$. Agents choose actions
to maximize expected discounted utility,
\[
\mathrm{E}\left\{ \sum_{s=0}^{\infty}\beta^{s-t}U^{j}(x_{s},a_{s}^{j},a_{s}^{-j},\varepsilon_{s}^{j};\theta_{u})\,\bigg|\,x_{t},\varepsilon_{t}^{j}\right\} .
\]
The primary parameter of interest is $\theta=(\theta_{u},\theta_{f})$.
Furthermore, we impose the following standard assumptions on the primitives.
\begin{assumption}
\label{assu:Additive-Separability}(Additive Separability) $U^{j}(x_{t},a_{t}^{j},a_{t}^{-j},\varepsilon_{at}^{j};\theta_{u})=\bar{u}(x_{t},a_{t}^{j},a_{t}^{-j};\theta_{u})+\varepsilon_{t}^{j}(a_{t}^{j})$.
\end{assumption}
\begin{assumption}
\label{assu:Conditional-Independence}(Conditional Independence) $p(x_{t+1},\varepsilon_{t+1}\mid x_{t},a_{t},\varepsilon_{t};\theta_{f})=g(\varepsilon_{t+1})f(x_{t+1}\mid x_{t},a_{t};\theta_{f})$,
where $g(\varepsilon_{t+1})$ is absolutely continuous with respect
to the Legesgue measure on $\mathbb{R}^{|\mathcal{A}|\times|\mathcal{J}|}$.
\end{assumption}
\begin{assumption}
\label{assu:Independent-Private-Values}(Independent Private Values)
Private values are independently distributed across players.
\end{assumption}
\begin{assumption}
\label{assu:Finite-Observed-State}(Finite Observed State Space) $x_{t}\in\mathcal{X}=\{1,2,\dots,|\mathcal{X}|\}$.
\end{assumption}
Assumptions 1--4 here correspond to Assumptions 1--4 in \citet{AgMir2007}.
In most applications in the literature, the private shocks are assumed
to be either i.i.d. Type 1 Extreme Value or normal, both of which
satisfy the assumptions.
\begin{example}
\label{exa:Wholesale-Club-Store}Wholesale Club Store Entry and Exit:
Firms are wholesale club stores (Costco, Sam's Club, BJ's, etc.) making
decisions of whether to operate in a market ($a_{it}^{j}=1$, ``entry'')
or not ($a_{it}^{j}=0$, ``exit''). After the entry/exit decision
is made in each period, the firms receive profits that result from
a static equilibrium competition (e.g., in prices or quantities).
The outcome of the static competition equilibrium depends on (i) the
number of active firms; and (ii) the market size, $s_{it}$. The observed
state going into period $t$ is $x_{it}=(s_{it},a_{i,t-1})$, which
includes the market size and indicators for incumbency. The per-period
profit for active firm $j$ is then given by 
\[
\bar{u}^{j}(x_{it},a_{it}^{j}=1,a_{it}^{-j};\theta)=\theta_{\text{FC},j}+\theta_{\text{RS}}s_{it}-\theta_{\text{RN}}\ln\left(1+\sum_{l\neq j}a_{it}^{l}\right)-\theta_{\text{EC}}(1-a_{i,t-1}^{j}).
\]
 Here, $\theta_{FC,j}<0$ is the fixed cost of operation for firm
$j$, $\theta_{EC}>0$ is the entry cost (which is not paid by incumbents),
$\theta_{RS}>0$ represents the effect of market size on flow profit,
and $\theta_{RN}>0$ represents the effect of competition on flow
profits. Flow profit for an inactive firm is normalized to zero: $\bar{u}^{j}(x_{it},a_{it}^{j}=0,a_{it}^{-j};\theta)=0$,
which is required for identification.\footnote{See, e.g., the discussion of Example 1 in \citet{AgMir2007}. We note
that this is not without loss of generality for counterfactuals, as
discussed in \citet{KalScoSouz2017}.}
\end{example}
The operative equilibrium concept here will be that of a Markov Perfect
Nash equilibrium. We will consider stationary equilibria only, so
from here we drop the time subscript. Because moves are simultaneous,
the actions of player $j$ do not depend directly on $a^{-j}\in\mathcal{A}^{|\mathcal{J}|-1}$,
but rather on $P^{-j}(x)\in\Delta^{|\mathcal{J}|-1}$, where $P^{-j}(x)$
is player $j$'s belief about the other players' probability of playing
the corresponding actions in state $x$ and $\Delta$ is the unit
simplex in $\mathbb{R}^{\left|\mathcal{A}\right|-1}$. So, from here
on out we will work with the following utility function and transition
probabilities:
\[
u^{j}(x,a^{j};P^{-j},\theta_{u})=\sum_{a^{-j}\in\mathcal{A}^{|\mathcal{J}|-1}}P^{-j}(x,a^{-j})\bar{u}(x,a^{j},a^{-j};\theta_{u})
\]
 
\[
f^{j}(x'\mid x,a^{j};P^{-j},\theta_{f})=\sum_{a^{-j}\in\mathcal{A}^{|\mathcal{J}|-1}}P^{-j}(x,a^{-j})f(x'\mid x,a^{j},a^{-j};\theta_{f})
\]

Now consider the vector of player $j$'s (expected) choice-specific
value functions, $v^{j}\in\mathbb{R}^{|\mathcal{X}|\times|\mathcal{A}|}$,
and define the corresponding choice probabilities as $\Lambda^{j}(x,a^{j};v^{j})$,
which is the probability agent $j$ chooses action $a^{j}$ in state
$x$, conditional on having choice-specific value function $v^{j}$.\footnote{We note that the choice-specific value functions, $v^{j}$, are also
often referred to as conditional value functions.} In equilibrium, the choice probabilities will be $P^{j}(a)=\Lambda^{j}(a;v^{j})$.
And let 
\[
\Lambda^{-j}(v^{-j})=(\Lambda^{1}(v^{1}),\dots,\Lambda^{j-1}(v^{j-1}),\Lambda^{j+1}(v^{j+1}),\dots,\Lambda^{|\mathcal{J}|}(v^{|\mathcal{J}|})),
\]
so that in equilibrium $P^{-j}=\Lambda^{-j}(v^{-j})$. \citet{AgMir2007}
show that in equilibrium, the choice-specific value functions are
equal to 
\[
v^{j}(x,a^{j})=u^{j}(x,a^{j};P^{-j},\theta_{u})+\beta\sum_{x'}f^{j}(x'\mid x,a^{j};P^{-j},\theta_{f})\Gamma^{j}(x';\theta,P),
\]
 where 
\[
\Gamma^{j}(\theta,P)=\left(I-\beta F(\theta_{f},P)\right)^{-1}\sum_{a^{j}}P^{j}(a^{j})*\left(u^{j}(a^{j};P^{-j},\theta_{u})+e(a^{j};P^{j})\right)
\]
 maps into \emph{ex-ante} (or integrated) value function space. Here,
$e\left(a^{j};P^{j}\right)$ stacks the values of $e(x,a^{j};P^{j})\equiv E[\varepsilon^{j}(a^{j})\mid x,a^{j},P^{j}]$,
as defined in Aguirregabiria and Mira (\citeyear{AgMir2007}, Equation
11), and $F(\theta_{f},P)$ is an unconditional state transition matrix
with elements $F(\theta_{f},P)\{k,l\}=\sum_{a\in\mathcal{A}^{J}}\left(\Pi_{j=1}^{J}P^{j}(a^{j}\mid x=k)\right)f(x'=l\mid x,a;\theta_{f})$.\footnote{See footnote 6 in \citet{AgMir2007}.}
They then define the NPL operator, $\Psi(\theta,P)$, such that $\Psi^{j}(\theta,P)=\Lambda^{j}(\Gamma^{j}(\theta,P))$
and combining all players yields the following fixed-point condition
that describes any Markov perfect equilibrium (Aguirregabiria and
Mira \citeyear{AgMir2007}, Lemma 1): 
\[
P=\Psi(\theta,P).
\]

While this equilibrium representation based on the NPL operator is
often useful, we will ultimately want to work with an alternative
representation when implementing our new estimator. This alternative
arises due to a change of variables from $P$ space to $v$ space.
(See Section~\ref{subsec:Computational-Implementation} for a detailed
discussion of the importance of this change.) Define the function
\begin{align*}
\Phi^{j}(x,a^{j};v^{j},v^{-j},\theta) & =u^{j}(x,a;\Lambda^{-j}(v^{-j}),\theta_{u})+\beta\sum_{x'}f^{j}(x'\mid x,a^{j};\Lambda^{-j}(v^{-j}),\theta_{f})S(v^{j}(x')),
\end{align*}
where $\Phi:\Theta\times\mathbb{R}^{|\mathcal{J}|\times|\mathcal{X}|\times|\mathcal{A}|}\to\mathbb{R}^{|\mathcal{J}|\times|\mathcal{X}|\times|\mathcal{A}|}$
and $S(\cdot)$ is McFadden's social surplus function.\footnote{For example, $S(v^{j}(x))=\ln(\sum_{a^{j}}\exp(v^{j}(x,a^{j}))+\bar{\gamma}$
when the private values are i.i.d. and follow the type 1 extreme value
distribution, where $\bar{\gamma}$ is the Euler-Mascheroni constant.} This $\Phi(\cdot)$ function allows us to characterize the equilibrium
with an alternative fixed-point equation, as described in the following
Lemma.
\begin{lem}
\label{lem:Rep-Lemma}(Representation Lemma) Under Assumptions \ref{assu:Additive-Separability}--\ref{assu:Finite-Observed-State},
choice-specific value functions characterize a Markov perfect equilibrium
for $\theta$ if and only if $v^{j}(x,a^{j})=\Phi^{j}(x,a^{j};\theta,v^{j},v^{-j})$
for all $(j,x,a)\in\mathcal{J}\times\mathcal{X}\times\mathcal{A}$.
Or more succinctly, 
\[
v=\Phi(\theta,v).
\]
\end{lem}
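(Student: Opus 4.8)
The plan is to show that the fixed-point equation $v=\Phi(\theta,v)$ is precisely the Aguirregabiria--Mira fixed-point equation $P=\Psi(\theta,P)$ re-expressed after the change of variables $P^j\mapsto v^j$, the point being that McFadden's social surplus $S(\cdot)$ evaluated at a player's choice-specific value functions returns exactly that player's ex-ante (integrated) value function. Since Lemma~1 of \citet{AgMir2007} already characterizes the Markov perfect equilibria through $P=\Psi(\theta,P)$, it suffices to translate that condition. For a candidate $v$ write $P=\Lambda(v)$ (so $P^{-j}=\Lambda^{-j}(v^{-j})$) and let $\tilde v^j(x,a^j)=u^j(x,a^j;P^{-j},\theta_u)+\beta\sum_{x'}f^j(x'\mid x,a^j;P^{-j},\theta_f)\,\Gamma^j(x';\theta,P)$ be the choice-specific value functions induced by $P$; by definition, $v$ is the vector of choice-specific value functions of some Markov perfect equilibrium for $\theta$ if and only if $v=\tilde v$, since $v=\tilde v$ forces $P^j=\Lambda^j(v^j)=\Lambda^j(\tilde v^j)=\Psi^j(\theta,P)$ for every $j$.

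The key step is the identity $\Gamma^j(\theta,P)=S(v^j)$, valid whenever $v^j=\tilde v^j$ (equivalently, whenever $v^j=\Phi^j(\cdot;\theta,v^j,v^{-j})$). To obtain it I would invoke the standard social-surplus identity that holds under Additive Separability and full support of $g$: for any $w\in\mathbb{R}^{|\mathcal{A}|}$, $S(w)=\sum_{a}\Lambda(a;w)\bigl(w(a)+e(a;\Lambda(w))\bigr)$, where $e(\cdot;\cdot)$ is the Hotz--Miller expected-shock correction already appearing in the formula for $\Gamma^j$. Applying this with $w=v^j(x)$ for each $x$, then multiplying the relation $v^j(x,a^j)=u^j(x,a^j;P^{-j},\theta_u)+\beta\sum_{x'}f^j(x'\mid x,a^j;P^{-j},\theta_f)\,\Gamma^j(x';\theta,P)$ by $P^j(x,a^j)$ and summing over $a^j$, one finds that the vector $(S(v^j(x)))_{x}$ satisfies the linear system $z=\sum_{a^j}P^j(a^j)*\bigl(u^j(a^j;P^{-j},\theta_u)+e(a^j;P^j)\bigr)+\beta F^j(\theta_f,P)\,z$. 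Since $\beta<1$ and $F^j(\theta_f,P)$ is stochastic, $I-\beta F^j(\theta_f,P)$ is invertible, so the solution is unique and equals $\Gamma^j(\theta,P)$ by its definition; hence $S(v^j)=\Gamma^j(\theta,P)$.

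Given the identity, both directions are short. If $v=\Phi(\theta,v)$, then substituting $\Gamma^j(\theta,P)=S(v^j)$ into $\Phi^j$ shows $v^j(x,a^j)=u^j(x,a^j;P^{-j},\theta_u)+\beta\sum_{x'}f^j(x'\mid x,a^j;P^{-j},\theta_f)\,\Gamma^j(x';\theta,P)=\tilde v^j(x,a^j)$, i.e.\ $v=\tilde v$, so $P=\Psi(\theta,P)$ and, by \citet{AgMir2007}, Lemma~1, $v$ is the choice-specific value function of a Markov perfect equilibrium for $\theta$. Conversely, if $v=\tilde v$ for a Markov perfect equilibrium $P=\Lambda(v)$, the same identity gives $\Gamma^j(\theta,P)=S(v^j)$, and plugging this into the definition of $\tilde v^j$ yields $v^j(x,a^j)=\Phi^j(x,a^j;\theta,v^j,v^{-j})$. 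The main obstacle is the middle step: checking carefully that the social-surplus identity together with the Bellman recursion collapses to exactly Aguirregabiria and Mira's closed form for $\Gamma^j$ --- in particular that the $e(a^j;P^j)$ terms generated by $S$ coincide with those in their definition and that $F^j(\theta_f,P)$ is the belief-integrated transition matrix. The change of variables itself is harmless under Assumptions~\ref{assu:Additive-Separability}--\ref{assu:Finite-Observed-State}, which make $\Lambda^j$ a well-defined function of $v^j$ alone and $I-\beta F^j(\theta_f,P)$ a finite invertible matrix.
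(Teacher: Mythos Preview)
Your argument is correct, but it takes a different route from the paper's own proof. You translate the fixed-point condition $v=\Phi(\theta,v)$ into the Aguirregabiria--Mira NPL condition $P=\Psi(\theta,P)$ by establishing the identity $S(v^j)=\Gamma^j(\theta,P)$ via the social-surplus formula and the linear system defining $\Gamma^j$. The paper instead bypasses $\Psi$ and $\Gamma^j$ entirely: it observes that $v^{-j}$ enters $\Phi^j$ only through $P^{-j}=\Lambda^{-j}(v^{-j})$, writes $\phi^j(\theta,v^j,P^{-j})\equiv\Phi^j(\theta,v^j,v^{-j})$, and verifies $\|\nabla_{v^j}\phi^j\|_\infty=\beta<1$, so that for fixed beliefs $P^{-j}$ the map $v^j\mapsto\phi^j(\theta,v^j,P^{-j})$ is a Bellman-type contraction whose unique fixed point delivers player $j$'s best response $\sigma^j(\theta,P^{-j})=\Lambda^j(v^j)$. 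The equivalence $v=\Phi(\theta,v)\iff P=\sigma(\theta,P)$ is then immediate.

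The paper's argument is shorter and more primitive: it relies only on the contraction property of the single-agent Bellman operator and the best-response characterization of MPE, and it never needs the closed-form $(I-\beta F^j)^{-1}$ expression for the ex-ante value. Your argument, on the other hand, makes explicit the link between $S(v^j)$ and $\Gamma^j(\theta,P)$, which is conceptually useful later in the paper when comparing the $v$-space representation $\Phi$ with the $P$-space representation $\Psi$ used in $k$-NPL. One small point worth tightening in your write-up: the biconditional ``$v$ is the choice-specific value vector of some MPE iff $v=\tilde v$'' is asserted with only the ``if'' direction argued; the ``only if'' direction is immediate from the paper's cited formula for equilibrium choice-specific values (together with $P=\Lambda(v)$), but you should say so explicitly.
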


\section{\label{sec:EPL}The $k$-EPL Estimator}

This section describes the $k$-EPL estimator and discusses its asymptotic
and finite-sample properties, as well as computational aspects of
its implementation in dynamic discrete choice games. We begin by discussing
maximum likelihood estimation, subject to an equilibrium constraint
based on some nuisance parameter, $Y$. The model is parameterized
by a finite-dimensional vector, $\theta\in\Theta\subset\mathbb{R}^{|\Theta|}$,
and a constraint $G(\theta,Y)=0$ where $Y\in\mathcal{Y}\subset\mathbb{R}^{|\mathcal{Y}|}$
and $G:\Theta\times\mathcal{Y}\to\mathbb{R}^{|\mathcal{Y}|}$. The
true parameter values are $\theta^{*}$ and $Y^{*}$, with $G(\theta^{*},Y^{*})=0$.
Note that there may be other values of $Y$ satisfying the constraint
at $\theta^{*}$, but we will assume that the data are generated from
only one such value, a common assumption in the literature.

The alternative statements of the equilibrium conditions in Section
\ref{sec:Dynamic-Discrete-Game} yield two potential choices for $Y$
and the corresponding constraint. Our asymptotic consistency and efficiency
results in this section do not depend on the choice of nuisance parameter,
but this choice can have serious implications for the computational
implementation of the $k$-EPL estimator. So, for computational purposes
we ultimately use the choice-specific value functions as the nuisance
parameter ($Y\equiv v$), and the constraint comes from the equation
in Lemma \ref{lem:Rep-Lemma} ($v-\Phi(\theta,v)=0$). We discuss
the computational concerns in more detail in Section \ref{subsec:Computational-Implementation}.
But for now, we present the asymptotic theory with the generic nuisance
parameter, $Y$.

Let $w_{i}$ for $i=1,\dots,N$ denote the observations from $N$
independent markets, and define
\begin{align*}
Q_{N}(\theta,Y) & \equiv N^{-1}\sum_{n=1}^{N}q_{i}(\theta,Y)=N^{-1}\sum_{i=1}^{N}\ln Pr(w_{i}\mid\theta,Y),
\end{align*}
where $q_{i}(\theta,Y)\equiv\ln Pr(w_{i}\mid\theta,Y)$. Furthermore,
let $Q^{*}(\theta,Y)\equiv\mathrm{E}[Q_{N}(\theta,Y)]$.
\begin{assumption}
\label{assu:Basics}(a) The observations $\{w_{i}:i=1,\dots,N\}$
are i.i.d. and generated by a single equilibrium $(\theta^{*},Y^{*})$.
(b) $\Theta$ and $\mathcal{Y}$ are compact and convex and $(\theta^{*},Y^{*})\in int(\Theta\times\mathcal{Y})$.
(c) $Q_{N}(\theta,Y)$ and $Q^{*}(\theta,Y)$ are twice continuously
differentiable. $Q^{*}$ has a unique maximum in $\Theta\times\mathcal{Y}$
subject to $G(\theta,Y)=0$, and the maximum occurs at $(\theta^{*},Y^{*})$.
(d) $G(\theta,Y)$ is thrice continuously differentiable and $\nabla_{Y}G(\theta^{*},Y^{*})$
is non-singular.
\end{assumption}
Assumptions \ref{assu:Basics}(a)-(c) echo standard identification
assumptions. We note that assuming that $Q^{*}$ has a unique maximum
does not rule out games with multiple equilibria. Non-singularity
of the Jacobian in (d) is the defining feature of regular Markov perfect
equilibria in the sense of \citet{DorEscobar2010}. Regularity essentially
means that the equilibrium is locally isolated and we can apply the
implicit function theorem to obtain $Y(\theta)$ locally.\footnote{\citet{AgMir2007} directly assume the local existence of $Y(\theta)$,
instead of appealing to the implicit function theorem.}

One method to estimate these models is via constrained maximum likelihood:
\begin{align*}
\left(\hat{\theta}_{\text{MLE}},\hat{Y}_{\text{MLE}}\right)=\arg & \max_{(\theta,Y)\in\Theta\times\mathcal{{Y}}}Q_{N}(\theta,Y)\\
 & \text{s.t. }G(\theta,Y)=0.
\end{align*}
 An equivalent statement is $\hat{\theta}_{\text{MLE}}=\textrm{arg max}_{\theta}\quad Q_{N}(\theta,Y(\theta))$,
where $G(\theta,Y(\theta))=0$ and $\hat{Y}_{\text{MLE}}=Y(\hat{\theta}_{\text{MLE}})$.
Pseudo-likelihood estimation replaces $Y(\theta)$ with some other
mapping. \citet{AgMir2007} define $Y\equiv P$ and replace $Y(\theta)\equiv P(\theta)$
with their $\Psi(\theta,\hat{P}_{k-1})$ for the $k$-th iteration
in the $k$-NPL sequence. However, this procedure suffers from the
issues discussed in Section \ref{sec:Introduction}.

Our $k$-step Efficient Pseudo-Likelihood ($k$-EPL) sequence instead
uses a ``Newton-like'' step, which provides a good approximation
to the full Newton step but uses a fixed value of $\nabla_{Y}G(\theta,Y)^{-1}$;
this value varies between steps but does not vary as the optimizer
searches over different values $\theta$ within each step. Algorithm
\ref{alg:EPL-Algorithm} below defines our sequential estimation procedure.
It uses our Newton-like mapping, $\Upsilon(\cdot)$, which is a function
of an initial compound parameter vector, $\gamma=(\theta,Y)$, and
an additional (possibly different) value of $\theta$.
\begin{lyxalgorithm}
\label{alg:EPL-Algorithm}($k$-step Efficient Pseudo-Likelihood,
or $k$-EPL)
\begin{itemize}
\item Step 1: Obtain strongly $\sqrt{N}$-consistent initial estimates $\hat{\gamma}_{0}=(\hat{\theta}_{0},\hat{Y}_{0})$.
\item Step 2: For $k\ge1$, obtain parameter estimates iteratively: 
\[
\hat{\theta}_{k}=\underset{\theta\in\Theta}{\textrm{\ensuremath{\arg\max}}}\quad Q_{N}\left(\theta,\Upsilon(\theta,\hat{\gamma}_{k-1})\right)
\]
 where 
\[
\Upsilon(\theta,\hat{\gamma}_{k-1})=\hat{Y}_{k-1}-\nabla_{Y}G(\hat{\theta}_{k-1},\hat{Y}_{k-1})^{-1}G(\theta,\hat{Y}_{k-1})
\]
 and update the auxiliary parameters:
\[
\hat{Y}_{k}=\Upsilon(\hat{\theta}_{k},\hat{\gamma}_{k-1}).
\]
\item Step 3: Increment $k$ and repeat Step 2 until desired value of $k$
is reached or until numerical convergence.
\end{itemize}
\end{lyxalgorithm}
This procedure enjoys some nice econometric properties, both asymptotically
and in finite samples. These properties arise from some convenient
features of the $\Upsilon(\cdot)$ function, which are detailed in
the following lemma.
\begin{lem}
\label{lem:Newt-Props} Let $\Upsilon(\theta,\gamma)$ denote the
operator defined in Algorithm \ref{alg:EPL-Algorithm} and define
$Y_{\theta}\equiv Y(\theta)$ and $\gamma_{\theta}\equiv(\theta,Y_{\theta})$.
Under Assumption \ref{assu:Basics}, if $\nabla_{Y}G(\theta,Y_{\theta})$
is non-singular, then the following properties hold:
\end{lem}
\begin{enumerate}
\item Roots of $G$ and fixed points of $\Upsilon$ are identical: $\Upsilon(\theta,\gamma_{\theta})=Y(\theta)\iff G(\theta,Y_{\theta})=0$.
\item $\nabla_{\theta}\Upsilon(\theta,\gamma_{\theta})=\nabla_{\theta}Y(\theta)$.
\item $\nabla_{\gamma}\Upsilon(\theta,\gamma_{\theta})=0$ (Zero Jacobian
Property).
\end{enumerate}
Lemma \ref{lem:Newt-Props} is the key to most of the results in this
section, which arise from applying the lemma at $(\theta^{*},\gamma^{*})$
and $(\hat{\theta}^{MLE},\hat{\gamma}^{MLE})$. For now, we note that
Result 3 of Lemma \ref{lem:Newt-Props} is analogous to the ``zero
Jacobian'' property from Proposition 2 of \citet{AgMir2002}, which
was the key to both their efficiency results and the finite-sample
convergence results of \citet{KasShim2008} for single-agent $k$-NPL.
By utilizing Newton-like steps on the equilibrium constraint, the
$k$-EPL algorithm restores this zero Jacobian property in dynamic
games.

One notable difference between $k$-EPL and some other sequential
estimators is that $k$-EPL is initialized from a consistent estimate
of both the structural parameter and the nuisance parameter, while
other estimators -- including $k$-NPL -- only require an initially
consistent estimate of the (nuisance) CCPs. Other examples include
the finite-dependence-based estimators (\citet{HotzMil1993,ArcidMiller2011})
and inequality-based estimators (\citet{BajBenkLevin2007}). While
none of these other estimators offer efficiency or convergence guarantees
as $k\to\infty$, they can be used to obtain a consistent parameter
estimate for initializing $k$-EPL. If the model exhibits finite-dependence,
then finite-dependence-based estimators can be particularly attractive
for obtaining initial estimates because of there computational simplicity.
However, we note that the example models in our Monte Carlo simulations
and empirical application may not have the finite dependence property
because exit is not permanent. So, we instead use a $1$-NPL estimate
for initialization.\footnote{Finite dependence is more difficult to establish in entry/exit games
without a permanent exit decision that leads directly to  a terminal
state property. See, e.g., the discussion in Section 5.2 of \citet{ArcidEllick2011}.
However, \citet{ArcidiaconoMiller2019} show how to derive finite
dependence in a class dynamic games that does not require a terminal
state.}

\subsection{Asymptotic Properties of $k$-EPL}

One implication of Lemma~\ref{lem:Newt-Props} is that $k$-EPL then
gives a sequence of asymptotically efficient estimators that converges
almost surely in large samples. We state this result formally in the
following Theorem.
\begin{thm}
\label{thm:kEPL-Properties}(Asymptotic Properties of $k$-EPL) Under
Assumption \ref{assu:Basics}, the $k$-EPL estimates computed with
Algorithm \ref{alg:EPL-Algorithm} satisfy the following for any $k\geq1$:
\begin{enumerate}
\item (Consistency) $\hat{\gamma}_{k}=(\hat{\theta}_{k},\hat{Y}_{k})$ is
a strongly consistent estimator of $(\theta^{*},Y^{*})$.
\item (Efficiency) $\sqrt{N}(\hat{\theta}_{k}-\theta^{*})\overset{d}{\to}\mathcal{\mathrm{N}}(0,\Omega_{\theta\theta}^{*-1})$,
where $\Omega_{\theta\theta}^{*}$ is the information matrix evaluated
at $\theta^{*}$.
\item (Large Sample Convergence) There exists a neighborhood of $\gamma^{*}=(\theta^{*},Y^{*})$,
$\mathcal{B}^{*}$, such that $\lim_{k\to\infty}\hat{\gamma}_{k}=\hat{\gamma}_{\text{MLE}}$
almost surely for any $\hat{\gamma}_{0}\in\mathcal{B}^{*}$. In other
words,
\[
\Pr\left[\lim_{k\rightarrow\infty}\hat{\gamma}_{k}=\hat{\gamma}_{\text{MLE}}\;\middle|\;\hat{\gamma}_{0}\in\mathcal{B}^{*}\right]=1.
\]
\end{enumerate}
\end{thm}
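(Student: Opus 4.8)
The plan is to prove the three parts in sequence, with Part 1 (consistency) and Part 2 (efficiency) sharing the same basic engine and Part 3 (convergence) using a contraction-style argument built on the Zero Jacobian Property. For Part 1, I would argue by induction on $k$. Suppose $\hat\gamma_{k-1} = (\hat\theta_{k-1},\hat Y_{k-1}) \to (\theta^*,Y^*)$ almost surely (the base case $k=0$ is Assumption~\ref{assu:Basics}, since strong $\sqrt N$-consistency implies strong consistency). The key observation is that $\Upsilon(\theta,\hat\gamma_{k-1})$ converges uniformly in $\theta$ to $\Upsilon(\theta,\gamma^*_\theta)$ — wait, more carefully, to the population object obtained by plugging in $(\theta^*,Y^*)$ for the ``frozen'' arguments: $\Upsilon(\theta,(\theta^*,Y^*)) = Y^* - \nabla_Y G(\theta^*,Y^*)^{-1} G(\theta,Y^*)$. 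Composing with $Q_N$, which converges uniformly to $Q^*$, the objective $Q_N(\theta,\Upsilon(\theta,\hat\gamma_{k-1}))$ converges uniformly a.s. to $Q^*(\theta,\Upsilon(\theta,(\theta^*,Y^*)))$. I then need to check that this limiting objective is uniquely maximized at $\theta^*$: at $\theta=\theta^*$, since $G(\theta^*,Y^*)=0$, we get $\Upsilon(\theta^*,(\theta^*,Y^*)) = Y^*$, so the limiting objective at $\theta^*$ equals $Q^*(\theta^*,Y^*)$, the constrained max from Assumption~\ref{assu:Basics}(c). For uniqueness one uses that by Result~1 of Lemma~\ref{prop:Newt-Props}, fixed points of $\Upsilon$ coincide with roots of $G$, together with the identification assumption; the slightly delicate point is that $\Upsilon(\theta,(\theta^*,Y^*))$ need not equal $Y(\theta)$ away from $\theta^*$, so the argument for uniqueness must go through the likelihood's concavity/identification structure rather than through the constraint directly. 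Standard consistency arguments for extremum estimators (e.g., uniform convergence plus unique maximizer on a compact set) then give $\hat\theta_k \to \theta^*$ a.s., and $\hat Y_k = \Upsilon(\hat\theta_k,\hat\gamma_{k-1}) \to Y^*$ a.s. by continuity, closing the induction.

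For Part 2 (efficiency), I would expand the first-order conditions for $\hat\theta_k$ around $\theta^*$ and exploit Lemma~\ref{prop:Newt-Props}. Write $\tilde q_k(\theta) \equiv Q_N(\theta,\Upsilon(\theta,\hat\gamma_{k-1}))$. Its gradient is $\nabla_\theta Q_N + (\nabla_\theta \Upsilon)^\top \nabla_Y Q_N$, evaluated at $(\theta,\Upsilon(\theta,\hat\gamma_{k-1}))$. The crucial point, via Result~2 of the lemma, is that $\nabla_\theta \Upsilon(\theta,\gamma^*_\theta) = \nabla_\theta Y(\theta)$ at the true parameter — so the total derivative of the pseudo-objective matches that of the concentrated true likelihood $Q_N(\theta,Y(\theta))$ to first order, and hence $\sqrt N(\hat\theta_k - \theta^*)$ has the same influence-function representation as $\hat\theta_{\mathrm{MLE}}$. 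I would make this precise via a mean-value expansion: $0 = \nabla\tilde q_k(\hat\theta_k) = \nabla\tilde q_k(\theta^*) + \nabla^2\tilde q_k(\bar\theta)(\hat\theta_k - \theta^*)$; show $\sqrt N\,\nabla\tilde q_k(\theta^*) \xrightarrow{d} \mathrm{N}(0,\Omega^*_{\theta\theta})$ using that at $\theta^*$ the $\nabla_Y Q_N$ term enters multiplied by $\nabla_\theta\Upsilon(\theta^*,\hat\gamma_{k-1})$, which converges to $\nabla_\theta Y(\theta^*)$, reproducing exactly the MLE score for the concentrated likelihood; and show $\nabla^2\tilde q_k(\bar\theta) \to -\Omega^*_{\theta\theta}$ a.s. using consistency from Part~1 and the fact that the extra Hessian terms involving $\nabla_\gamma \Upsilon$ vanish in the limit by Result~3 (the Zero Jacobian Property), since the ``frozen'' argument converges to $\gamma^*$. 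Solving gives the stated asymptotic normality with the efficient variance. This is essentially the dynamic-games analogue of the Aguirregabiria–Mira (2002) efficiency argument for single-agent $k$-NPL, with the Zero Jacobian Property doing the same work.

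For Part 3 (large-sample convergence to the MLE), I would set up the iteration as a fixed-point map and show it is a local contraction around $\hat\gamma_{\mathrm{MLE}}$ in finite samples. Define the sample update map $\gamma \mapsto \hat\Xi_N(\gamma)$ implicitly by: given $\hat\gamma_{k-1}=\gamma$, let $\hat\theta_k$ maximize $Q_N(\cdot,\Upsilon(\cdot,\gamma))$ and set $\hat\gamma_k = (\hat\theta_k, \Upsilon(\hat\theta_k,\gamma))$. The MLE $\hat\gamma_{\mathrm{MLE}}$ is a fixed point of $\hat\Xi_N$ (because at $\gamma = \hat\gamma_{\mathrm{MLE}}$, freezing the Newton-like step at the MLE reproduces the MLE's own first-order conditions, using Result~1 to identify $\Upsilon(\hat\theta_{\mathrm{MLE}},\hat\gamma_{\mathrm{MLE}}) = \hat Y_{\mathrm{MLE}}$). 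The contraction rate is governed by $\nabla_\gamma \hat\Xi_N(\hat\gamma_{\mathrm{MLE}})$; I would show this Jacobian has small operator norm — in fact converging to zero as $N\to\infty$ — because differentiating through the first-order condition for $\hat\theta_k$ and through $\Upsilon$ produces terms that all carry a factor of $\nabla_\gamma\Upsilon$ evaluated near $\gamma^*$, which by Result~3 is zero in the population and $O_p(\|\hat\gamma_{\mathrm{MLE}} - \gamma^*\|) = O_p(N^{-1/2})$ in finite samples. Hence on a neighborhood $\mathcal B^*$ of $\gamma^*$ (which, with probability approaching one, contains $\hat\gamma_{\mathrm{MLE}}$), the map $\hat\Xi_N$ is a contraction with modulus $O_p(N^{-1/2})$, so $\hat\gamma_k \to \hat\gamma_{\mathrm{MLE}}$ a.s. for any start in $\mathcal B^*$, and the per-step error shrinks by a factor that vanishes with $N$ — the ``nearly superlinear'' rate claimed in the introduction.

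The main obstacle I anticipate is the uniqueness-of-maximizer step in Part~1 and, relatedly, making the fixed-point identity $\hat\Xi_N(\hat\gamma_{\mathrm{MLE}}) = \hat\gamma_{\mathrm{MLE}}$ airtight in Part~3: the Newton-like map $\Upsilon(\theta,\gamma)$ agrees with the true implicit function $Y(\theta)$ and has the nice derivative properties only along the ``diagonal'' $\gamma = \gamma_\theta$, so one must carefully track how far the frozen argument is from that diagonal and argue that the relevant conclusions survive. Everything else is a fairly standard extremum-estimator and contraction-mapping computation once Lemma~\ref{prop:Newt-Props} is in hand; the lemma is doing essentially all of the conceptual work, and the proof is really about deploying its three conclusions at the two points $\gamma^*$ and $\hat\gamma_{\mathrm{MLE}}$.
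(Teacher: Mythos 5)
Your overall architecture matches the paper's: Parts 1 and 2 are proved exactly as the paper does them, namely by defining the profiled pseudo-objective $\tilde Q_N(\theta,\gamma)=Q_N(\theta,\Upsilon(\theta,\gamma))$, getting strong consistency from uniform a.s.\ convergence of $\tilde Q_N(\cdot,\hat\gamma_{k-1})$ to $\tilde Q^*(\cdot,\gamma^*)$ plus unique maximization at $\theta^*$ (the paper, like you, leans on the identification assumption here and does not belabor the point that $\Upsilon(\theta,\gamma^*)\neq Y(\theta)$ off the diagonal), and getting efficiency from a mean-value expansion of the first-order condition in which the Zero Jacobian Property makes the cross-derivative $\nabla_{\theta\gamma'}\tilde Q^*(\theta^*,\gamma^*)$ vanish and the generalized information matrix equality identifies the Hessian with $-\Omega^*_{\theta\theta}$. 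One bookkeeping difference in Part 2: the paper expands jointly in $(\theta,\gamma)$ around $(\theta^*,\gamma^*)$, which makes explicit that the contamination of the score from freezing $\hat\gamma_{k-1}$ is $\nabla_{\theta\gamma'}\tilde Q_N(\bar\theta,\bar\gamma)\,\sqrt N(\hat\gamma_{k-1}-\gamma^*)=o_p(1)\cdot O_p(1)$; your expansion in $\theta$ alone needs the same ingredient, i.e.\ strong $\sqrt N$-consistency (not mere consistency) of the previous iterate, propagated by induction from the $\sqrt N$-consistent $\hat\gamma_0$ of Step 1 --- your sketch says the relevant multipliers ``converge'' but does not track this rate, which is the one place where the paper is more careful and where looseness would actually break the argument. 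For Part 3 you take a genuinely different (and more self-contained) route: the paper simply invokes Proposition 1 of Kasahara and Shimotsu (2012), noting that $\nabla_{\theta\gamma'}\tilde Q^*(\theta^*,\gamma^*)=0$ forces the relevant spectral radius to be zero, whereas you construct the iteration map $\hat\Xi_N$, verify that $\hat\gamma_{\text{MLE}}$ is its fixed point, and show its Jacobian at the MLE is $O_p(N^{-1/2})$, concluding a local contraction --- this is in effect the machinery the paper deploys in its proof of Theorem~\ref{thm:Finite-Sample-Properties} (the map $H_N$ there is your $\hat\Xi_N$), so your Part 3 buys a direct argument at the cost of implicitly needing the Theorem~\ref{thm:Finite-Sample-Properties}-style regularity (uniqueness of the Step 2 maximizer and nonsingularity of $\nabla_Y G$ at the MLE, which hold a.s.\ for large $N$ by smoothness and $\hat\gamma_{\text{MLE}}\to\gamma^*$), and you should also be careful to reconcile the ``w.p.a.\ 1'' flavor of the finite-sample contraction statement with the almost-sure claim of Result 3, which the paper gets by working with the population spectral radius.
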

The results of Theorem \ref{thm:kEPL-Properties} for $k$-EPL in
games are shared by $k$-NPL in single-agent models, but not games.
In short, the zero Jacobian property ensures that $\hat{\gamma}_{k}=(\hat{\theta}_{k},\hat{Y}_{k})$
is asymptotically orthogonal to $\hat{\gamma}_{k-1}$, so that using
$\hat{\gamma}_{k-1}$ is asymptotically equivalent to using $\gamma^{*}=(\theta^{*},Y^{*})$
at each step. This drives the consistency (Result 1) and asymptotic
equivalence to MLE (Result 2) of each step. Intuitively, an EPL step
is similar to a Newton step on the full maximum likelihood problem,
although iterating on that procedure is notoriously unstable unless
properly regularized.\footnote{\citet{Train2009} discusses the need to alter the step size in order
to obtain global convergence (pp. 189-191). \citet{Nesterov2004}
also discuses divergence (Section 1.2).} Our Monte Carlo simulations in Section (\ref{sec:Monte-Carlo-Simulations})
show that $k$-EPL is stable without further regularization.

\subsection{Iteration to the Maximum Likelihood Estimate}

While the asymptotic distribution is insensitive to iteration, we
can obtain substantial finite-sample improvements by iterating and
can even compute the finite-sample MLE by iterating to convergence.
In this section, we proceed with a formal econometric analysis of
the local convergence rate of the iterations to the maximum likelihood
estimator, and we discuss the implications for finite sample performance.
Later on, the finite sample properties are illustrated in the Monte
Carlo simulations in Section \ref{sec:Monte-Carlo-Simulations}.

Our results for the convergence rate to MLE are similar to those of
the single-agent version of $k$-NPL in \citet{AgMir2002}, which
also has the zero Jacobian property. In their Monte Carlo simulations,
single-agent $k$-NPL iterations exhibit rapid finite-sample improvements
and reliably converged to the finite-sample MLE. \citet{KasShim2008}
then provided a formal econometric explanation for these results.
The analysis of $k$-EPL's finite sample properties in this section
is similar but also applies to games.

The only additional requirement for our finite-sample results is that
the Jacobian of the equality constraints, $G$, with respect to $Y$
is nonsingular at the finite-sample MLE.
\begin{assumption}
\label{assu:MLE-non-singular}$\nabla_{Y}G(\hat{\theta}_{\text{{MLE}}},\hat{Y}_{\text{{MLE}}})$
is non-singular.
\end{assumption}
Assumption \ref{assu:MLE-non-singular} guarantees the existence of
an implicit function, $Y(\theta)$, around $\hat{\theta}_{MLE}$ and
also that the quasi-Newton mapping, $\Upsilon(\theta,\hat{\gamma}_{MLE})$
is valid. Assumption \ref{assu:Basics} is enough to guarantee that
Assumption \ref{assu:MLE-non-singular} is satisfied almost surely
as $N\to\infty$, since it implies that $\det\left(\nabla_{Y}G\left(\hat{\theta}_{MLE},\hat{Y}_{MLE}\right)\right)\overset{a.s.}{\to}\det\left(\nabla_{Y}G\left(\theta^{*},Y^{*}\right)\right)\neq0$
by continuity of $\det(\cdot)$, continuity of $\nabla_{Y}G(\cdot)$
(Assumption 5(d)), and strong consistency of $\hat{\gamma}_{MLE}$.
Furthermore, the set of singular matrices has measure zero. So, we
view this as a relatively mild (but important) assumption.
\begin{thm}
\label{thm:Finite-Sample-Properties}(Local Convergence Results for
Iterating to MLE) Suppose Assumptions \ref{assu:Basics} and \ref{assu:MLE-non-singular}
hold and that the optimization problem in Step 2 of Algorithm \ref{alg:EPL-Algorithm}
has a unique solution for all $k\geq1$. Then,
\begin{enumerate}
\item The MLE is a fixed point of the EPL iterations: if $\hat{\gamma}_{k-1}=\hat{\gamma}_{\text{MLE}}$,
then $\hat{\gamma}_{k}=\hat{\gamma}_{\text{MLE}}$.
\item For all $k\geq1$, 
\[
\hat{\gamma}_{k}-\hat{\gamma}_{\text{MLE}}=O_{p}(N^{-1/2}||\hat{\gamma}_{k-1}-\hat{\gamma}_{\text{MLE}}||+||\hat{\gamma}_{k-1}-\hat{\gamma}_{\text{MLE}}||^{2}).
\]
\item W.p.a. 1 as $N\to\infty$, for any $\varepsilon>0$ there exists some
neighborhood of $\hat{\gamma}_{\text{MLE}}$, $\mathcal{B}$, such
that the EPL iterations define a contraction mapping on $\mathcal{B}$
with Lipschitz constant, $L<\varepsilon$.
\end{enumerate}
\end{thm}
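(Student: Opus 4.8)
The plan is to write one EPL iteration as a single map: for $\gamma=(\theta',Y')$ near $\hat\gamma_{\text{MLE}}$ let $\hat\theta(\gamma)\equiv\arg\max_{\theta}Q_{N}(\theta,\Upsilon(\theta,\gamma))$ and $\phi(\gamma)\equiv\bigl(\hat\theta(\gamma),\Upsilon(\hat\theta(\gamma),\gamma)\bigr)$, so $\hat\gamma_{k}=\phi(\hat\gamma_{k-1})$. All assertions hold with probability approaching one, using consistency of $\hat\gamma_{\text{MLE}}$ and (by Theorem~\ref{thm:kEPL-Properties}) of the iterates, interiority of $\hat\theta_{\text{MLE}}$, and $\hat\gamma_{\text{MLE}}-\gamma^{*}=O_{p}(N^{-1/2})$. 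Two preliminary facts organize everything. First, since $\hat Y_{\text{MLE}}=Y(\hat\theta_{\text{MLE}})$, the point $\hat\gamma_{\text{MLE}}$ is exactly $\gamma_{\hat\theta_{\text{MLE}}}$ in the notation of Lemma~\ref{prop:Newt-Props}, and Assumption~\ref{assu:MLE-non-singular} is precisely that lemma's non-singularity hypothesis there; hence all three conclusions of Lemma~\ref{prop:Newt-Props} hold at $(\hat\theta_{\text{MLE}},\hat\gamma_{\text{MLE}})$, in particular the zero Jacobian property $\nabla_{\gamma}\Upsilon(\hat\theta_{\text{MLE}},\hat\gamma_{\text{MLE}})=0$. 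Second, because the likelihood is correctly specified at $(\theta^{*},Y^{*})$, the information inequality makes $(\theta^{*},Y^{*})$ an \emph{unconstrained} interior maximizer of $Q^{*}$, so $\nabla_{Y}Q^{*}(\theta^{*},Y^{*})=0$; a mean-value expansion, a central limit theorem for $N^{-1}\sum_{i}\nabla_{Y}q_{i}(\theta^{*},Y^{*})$, and $\hat\gamma_{\text{MLE}}-\gamma^{*}=O_{p}(N^{-1/2})$ then give $\nabla_{Y}Q_{N}(\hat\theta_{\text{MLE}},\hat Y_{\text{MLE}})=O_{p}(N^{-1/2})$.

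For claim~1, interiority of $\hat\theta_{\text{MLE}}$ gives the profile first-order condition $\nabla_{\theta}Q_{N}(\hat\gamma_{\text{MLE}})+\nabla_{\theta}Y(\hat\theta_{\text{MLE}})^{\top}\nabla_{Y}Q_{N}(\hat\gamma_{\text{MLE}})=0$. The first-order condition of the Step~2 problem with $\hat\gamma_{k-1}=\hat\gamma_{\text{MLE}}$ is $\nabla_{\theta}Q_{N}(\theta,\Upsilon(\theta,\hat\gamma_{\text{MLE}}))+\nabla_{\theta}\Upsilon(\theta,\hat\gamma_{\text{MLE}})^{\top}\nabla_{Y}Q_{N}(\theta,\Upsilon(\theta,\hat\gamma_{\text{MLE}}))=0$; evaluating at $\theta=\hat\theta_{\text{MLE}}$ and applying Results~1 and~2 of Lemma~\ref{prop:Newt-Props} (so $\Upsilon(\hat\theta_{\text{MLE}},\hat\gamma_{\text{MLE}})=Y(\hat\theta_{\text{MLE}})=\hat Y_{\text{MLE}}$ and $\nabla_{\theta}\Upsilon(\hat\theta_{\text{MLE}},\hat\gamma_{\text{MLE}})=\nabla_{\theta}Y(\hat\theta_{\text{MLE}})$) reduces it to exactly the profile first-order condition, so $\hat\theta_{\text{MLE}}$ is a critical point of the Step~2 objective. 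Its Hessian at $\hat\theta_{\text{MLE}}$ converges to the population profile-likelihood Hessian (the $\nabla_{Y}Q^{*}=0$ cancellation collapses it to that), which is negative definite; hence $\hat\theta_{\text{MLE}}$ is an isolated critical point, and since the unique Step~2 maximizer $\hat\theta_{k}$ lies near $\theta^{*}$ by consistency, $\hat\theta_{k}=\hat\theta_{\text{MLE}}$. Then $\hat Y_{k}=\Upsilon(\hat\theta_{\text{MLE}},\hat\gamma_{\text{MLE}})=\hat Y_{\text{MLE}}$ by Result~1, i.e. $\phi(\hat\gamma_{\text{MLE}})=\hat\gamma_{\text{MLE}}$.

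Claims~2 and~3 both reduce to showing $\nabla_{\gamma}\phi(\hat\gamma_{\text{MLE}})=O_{p}(N^{-1/2})$. Let $\mathcal{F}(\theta,\gamma)\equiv\nabla_{\theta}Q_{N}(\theta,\Upsilon(\theta,\gamma))+\nabla_{\theta}\Upsilon(\theta,\gamma)^{\top}\nabla_{Y}Q_{N}(\theta,\Upsilon(\theta,\gamma))$; by claim~1, $\mathcal{F}(\hat\theta_{\text{MLE}},\hat\gamma_{\text{MLE}})=0$, and $\nabla_{\theta}\mathcal{F}$ there is invertible with $O_{p}(1)$ inverse (same Hessian limit as above), so the implicit function theorem gives $\nabla_{\gamma}\hat\theta(\hat\gamma_{\text{MLE}})=-[\nabla_{\theta}\mathcal{F}]^{-1}\nabla_{\gamma}\mathcal{F}$ evaluated there. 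Differentiating $\mathcal{F}$ in $\gamma$ produces three groups of terms: two carry the factor $\nabla_{\gamma}\Upsilon(\hat\theta_{\text{MLE}},\hat\gamma_{\text{MLE}})$, which is zero by Result~3 of Lemma~\ref{prop:Newt-Props}; the third is a contraction of second derivatives of $\Upsilon$ (bounded in probability by thrice-differentiability of $G$ in Assumption~\ref{assu:Basics}(d) and non-singularity of $\nabla_{Y}G$) against $\nabla_{Y}Q_{N}(\hat\theta_{\text{MLE}},\hat Y_{\text{MLE}})=O_{p}(N^{-1/2})$. Hence $\nabla_{\gamma}\mathcal{F}=O_{p}(N^{-1/2})$, so $\nabla_{\gamma}\hat\theta(\hat\gamma_{\text{MLE}})=O_{p}(N^{-1/2})$, and since $\nabla_{\gamma}[\Upsilon(\hat\theta(\gamma),\gamma)]$ at $\hat\gamma_{\text{MLE}}$ equals $\nabla_{\theta}\Upsilon\cdot\nabla_{\gamma}\hat\theta(\hat\gamma_{\text{MLE}})+\nabla_{\gamma}\Upsilon=O_{p}(N^{-1/2})+0$, we get $\nabla_{\gamma}\phi(\hat\gamma_{\text{MLE}})=O_{p}(N^{-1/2})$. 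For claim~2, a second-order expansion of $\phi$ about $\hat\gamma_{\text{MLE}}$ (valid because $\phi$ is continuously differentiable with locally Lipschitz gradient near $\hat\gamma_{\text{MLE}}$, by the smoothness in Assumption~\ref{assu:Basics}(c)--(d)) yields $\hat\gamma_{k}-\hat\gamma_{\text{MLE}}=\nabla_{\gamma}\phi(\hat\gamma_{\text{MLE}})(\hat\gamma_{k-1}-\hat\gamma_{\text{MLE}})+O_{p}(\|\hat\gamma_{k-1}-\hat\gamma_{\text{MLE}}\|^{2})$, which is the stated bound. For claim~3, continuity of $\nabla_{\gamma}\phi$ near $\hat\gamma_{\text{MLE}}$ together with $\|\nabla_{\gamma}\phi(\hat\gamma_{\text{MLE}})\|=O_{p}(N^{-1/2})$ implies that w.p.a.\ 1, for any $\varepsilon>0$ one can pick a convex neighborhood $\mathcal{B}$ of $\hat\gamma_{\text{MLE}}$ with $L:=\sup_{\gamma\in\mathcal{B}}\|\nabla_{\gamma}\phi(\gamma)\|<\varepsilon$; the mean value inequality then makes $\phi$ a contraction on $\mathcal{B}$ with Lipschitz constant $L$.

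The main obstacle is the computation $\nabla_{\gamma}\phi(\hat\gamma_{\text{MLE}})=O_{p}(N^{-1/2})$: one must get the chain/product-rule bookkeeping for $\nabla_{\gamma}\mathcal{F}$ exactly right so that the zero Jacobian property annihilates every $O_{p}(1)$ contribution, leaving only terms proportional to $\nabla_{Y}Q_{N}(\hat\gamma_{\text{MLE}})$, and then establish that this gradient is $O_{p}(N^{-1/2})$ — which rests on $(\theta^{*},Y^{*})$ being an interior, not merely constrained, maximizer of $Q^{*}$. Subsidiary but necessary: verifying that the Step~2 Hessian converges to a negative definite matrix (again via the $\nabla_{Y}Q^{*}=0$ cancellation), both to pin down the Step~2 solution in claim~1 and to invoke the implicit function theorem; and checking that $\phi$ is smooth enough near $\hat\gamma_{\text{MLE}}$ to support the quadratic remainder in claim~2.
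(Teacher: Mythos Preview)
Your proof is correct and follows the same overall architecture as the paper: write the EPL update as a map $\phi$ (the paper calls it $H_N$), show the MLE is a fixed point via first-order conditions, establish that $\nabla_\gamma\phi(\hat\gamma_{\text{MLE}})=O_p(N^{-1/2})$, and then Taylor-expand for Result~2 and use continuity of the Jacobian for Result~3. Your treatment of Result~1 is in fact more careful than the paper's, which simply asserts ``by examining the first-order conditions'' without checking second-order conditions or invoking the uniqueness hypothesis.

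The one substantive technical difference is how the $O_p(N^{-1/2})$ Jacobian bound is obtained. The paper works at the population level: it shows $\nabla_{\theta\gamma'}\tilde Q^*(\theta^*,\gamma^*)=0$ via the generalized information matrix equality (in the proof of Theorem~\ref{thm:kEPL-Properties}), so $\nabla_\gamma\tilde\theta^*(\gamma^*)=0$, and then argues that the sample version $\nabla_\gamma H_N(\hat\gamma_{\text{MLE}})$ inherits the rate from $\sqrt N$-consistency of $\hat\gamma_{\text{MLE}}$. You instead work directly at the sample MLE: you apply Lemma~\ref{prop:Newt-Props} at $(\hat\theta_{\text{MLE}},\hat\gamma_{\text{MLE}})$ (legitimate by Assumption~\ref{assu:MLE-non-singular}) so that $\nabla_\gamma\Upsilon$ vanishes there exactly, and then the only surviving term in $\nabla_\gamma\mathcal F$ is the mixed second derivative of $\Upsilon$ contracted with $\nabla_Y Q_N(\hat\gamma_{\text{MLE}})$, which you bound by $O_p(N^{-1/2})$ via the information inequality ($(\theta^*,Y^*)$ is an \emph{unconstrained} interior maximizer of $Q^*$, so $\nabla_Y Q^*(\theta^*,Y^*)=0$). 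Your route is more explicit about which term carries the $N^{-1/2}$ rate and avoids the somewhat loose step in the paper where the rate is inferred from ``$\sqrt N$-consistency of $\hat\gamma_{\text{MLE}}$'' alone; the paper's route, on the other hand, keeps everything phrased in terms of $\tilde Q$ and does not need to isolate $\nabla_Y Q_N$ separately. Both are valid and rest on the same underlying facts (the zero Jacobian property and correct specification of the likelihood).
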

The first result of Theorem \ref{thm:Finite-Sample-Properties} establishes
that the MLE is a fixed point of the $k$-EPL iterations in a finite
sample, similar to Aguirregabiria and Mira (\citeyear{AgMir2002},
Proposition 3) for single-agent $k$-NPL. The second result gives
an asymptotic analysis of convergence to MLE, which provides a theoretical
explanation of why we should expect iteration to yield improvements
in finite samples. This result is analogous to Proposition 2 of \citet{KasShim2008},
although their result only applies in the single-agent case. In short,
even though iteration on EPL provides no improvement up to $O_{p}(N^{-1/2})$,
it still yields higher-order improvements. To see why, suppose the
initial estimates are such that $\hat{\gamma}_{0}-\gamma^{*}=O_{p}(N^{-b})$
for $b\in(1/4,1/2]$, so that $||\hat{\gamma}_{0}-\hat{\gamma}_{\text{MLE}}||=O_{p}(N^{-b})$.\footnote{For $b\in(1/4,1/2]$, we have $\hat{\gamma}_{0}-\hat{\gamma}_{\text{MLE}}=\hat{\gamma}_{0}-\gamma^{*}-(\hat{\gamma}_{\text{MLE}}-\gamma^{*})=O_{p}(N^{-b})+O_{p}(N^{-1/2})=O_{p}(N^{-b})$
. Additionally, this can be used to show higher-order equivalence
to the MLE.} Repeated substitution gives $||\hat{\gamma}_{k}-\hat{\gamma}_{\text{MLE}}||=O_{p}(N^{-(k-1)/2-2b})$.
In particular, in the case where the state space is finite and frequency
or kernel estimates are used, $b=1/2$ and $||\hat{\gamma}_{k}-\hat{\gamma}_{\text{MLE}}||=O_{p}(N^{-(k+1)/2})$,
where $N^{-(k+1)/2}\to0$ as $k\to\infty$ for $N>1$. Our own Monte
Carlo simulations in Section \ref{sec:Monte-Carlo-Simulations} exhibit
such improvements.

The third result in Theorem \ref{thm:Finite-Sample-Properties} allows
us to consider EPL iterations as a computationally attractive algorithm
for computing the MLE. It establishes that we can expect the EPL iterations
to be a local contraction around the MLE in the finite sample with
a very fast convergence rate. The full proof appears in the appendix,
but it essentially proceeds by noting that the $k$-EPL sequence satisfies
$\hat{\gamma}_{k}=H_{N}(\hat{\gamma}_{k-1})$, where $\hat{\gamma}_{MLE}$
is a fixed point of the function $H_{N}$. And due to the zero-Jacobian
property in Lemma \ref{lem:Newt-Props} (Result 3), we obtain $\nabla_{\gamma}H_{N}(\hat{\gamma}_{MLE})\overset{a.s.}{\to}0$.\footnote{This drives the Neyman orthogonality discussed in the introduction.}

For the population analogue of the EPL iterations, the convergence
rate is then super-linear. However, we only have access to finite
samples in practice, so we should expect the convergence rate to be
linear with a small Lipschitz constant, implying that we'll need only
a few iterations to achieve convergence. We can therefore use EPL
iterations to compute the MLE even when a consistent $\hat{\gamma}_{0}$
is unavailable. We can simply use multiple starting values, iterate
to convergence, and use the converged estimate that provides the highest
log-likelihood.\footnote{\citet{AgMir2007} suggest a similar procedure to find $\infty$-NPL
when no initial consistent estimate is available, and multiple starting
values are often used when computing the maximum likelihood estimate
with other methods (\citet{SuJudd2012,EgLaiSu2015}).} We demonstrate this usage of $k$-EPL with Monte Carlo simulations
in the appendix and find that it works well.

Aside from $k$-EPL, there are two potential alternative algorithms
for computing the MLE: the nested fixed-point (NFXP) algorithm \'a
la \citet{RustGMC1987} and the MPEC approach proposed by \citet{SuJudd2012}
and extended to dynamic games by \citet{EgLaiSu2015}. The NFXP algorithm
searches over $\theta$ in an outer loop and finds $Y_{\theta}$ such
that $G(\theta,Y_{\theta})=0$ in an inner loop. MPEC leverages modern
optimization software to search over $\theta$ and $Y$ simultaneously,
only imposing that $G(\theta,Y)=0$ at the solution. The algorithm
of choice may depend on the structure of the model.

While this section discusses the $k$-EPL algorithm in the context
of a general constrained maximum-likelihood problem, we are ultimately
focused on estimating dynamic discrete choice games of incomplete
information. As discussed in the introduction, NFXP is often computationally
unattractive---or even infeasible---in such games.\footnote{We note that NFXP still performs well in single-agent dynamic models.
See \citet{DorJudd2012} and \citet{ABBE2016} for details on the
computational burden of computing equilibria in discrete-time dynamic
discrete games.} MPEC, however, remains feasible and performs well, as demonstrated
by \citet{EgLaiSu2015}. The key difference here between MPEC and
$k$-EPL is that $k$-EPL will be able to more heavily exploit the
structure of the problem.\footnote{\citet{EgLaiSu2015} exploit sparsity patterns in their MPEC implementation
but do not further exploit other features of the problem structure.} In Section \ref{subsec:Computational-Implementation}, we show that
--- much like $k$-NPL in single-agent models --- common modeling
assumptions lead to EPL iterations composed of two easily-computed
parts: solving linear systems to form pseudo-regressors, followed
by solving an unconstrained, globally concave maximization problem
\'a la static logit/probit with those pseudo-regressors. Neither
of these operations require sophisticated commercial optimization
software; and repeating them just a few times may ultimately be more
computationally attractive than using MPEC to simultaneously solve
for all variables in a non-concave, large-scale, constrained maximization
problem.

\subsection{\label{subsec:Computational-Implementation}Computational Details
and Choice of Nuisance Parameter}

$k$-EPL is particularly useful when we are interested in estimating
the flow utility parameters, $\theta_{u}$. In many cases -- including
our Monte Carlo experiments -- the transition parameter, $\theta_{f}$,
is known in advance. So, we focus on estimating only the flow utility
parameters and let $\theta\equiv\theta_{u}$. Alternatively, $\theta_{f}$
can be estimated in a first stage, with $\theta_{u}$ then estimated
via partial maximum likelihood. Similarly to single-agent $k$-NPL,
$k$-EPL iterations based on this partial MLE problem yield asymptotic
equivalence and finite-sample convergence to partial MLE because the
zero Jacobian property still applies.
\begin{assumption}
\label{assu:Linear-Utility-Index}(Linear Utility Index) $u^{j}(\theta,x,a^{j},P^{-j})=h(x,a^{j},P^{-j})'\theta$.
\end{assumption}
\begin{assumption}
\label{assu:Log-Concave-CCP-Mapping}(Log-Concave CCP Mapping) $\Lambda(\cdot)$
is log-concave.
\end{assumption}
Assumption \ref{assu:Linear-Utility-Index} requires the flow utilities
to be linear in $\theta$, which is a standard assumption for dynamic
discrete choice models.\footnote{See, e.g., \citet{RustGMC1987,AgMir2002,AgMir2007,BajBenkLevin2007,PakesOstBerry2007,PesSD2008,ArcidMiller2011,EgLaiSu2015,BugniBunt2020}.}
Assumption \ref{assu:Log-Concave-CCP-Mapping} requires log-concavity
of the mapping from choice-specific values into CCPs. A sufficient
condition for this is that the distribution of private shocks, $g(\cdot)$,
is log-concave (\citet{CaplinNalebuff1991}). Consequently, Assumption
\ref{assu:Log-Concave-CCP-Mapping} is satisfied in the ubiquitous
case of logit shocks, as well as when shocks follow a normal distribution.
These two assumptions have important computational implications, which
is the focus of the rest of this section.

The choice of nuisance parameter, $Y$, does not affect the asymptotic
results in Section \ref{sec:EPL}, but it \textit{does} have tremendous
implications for computation. In order to provide a computationally
simple estimator, we use the choice-specific values as the nuisance
parameter and the equilibrium condition from Lemma \ref{lem:Rep-Lemma}.
\begin{assumption}
\label{assu:Equilibrium-in-CCV}(Equilibrium in Choice-Specific Values)
$Y\equiv v$ and $G(\theta,v)\equiv v-\Phi(\theta,v)$.
\end{assumption}
Coupled with Assumptions \ref{assu:Linear-Utility-Index} and \ref{assu:Log-Concave-CCP-Mapping},
the choice of nuisance parameter and constraint in Assumption \ref{assu:Equilibrium-in-CCV}
leads to some convenient computational properties.

By Assumption \ref{assu:Linear-Utility-Index}, we have $u^{j}(\theta,x,a^{j},P^{-j})=h(x,a^{j},P^{-j})'\theta$.
Because $P^{-j}=\Lambda^{-j}(v^{-j})$, we can re-write this in terms
of $v$: $u^{j}(\theta,x,a^{j},v^{-j})=h(x,a^{j},v^{-j})'\theta$.
Inspecting the form of $\Phi(\theta,v)$, we see that it will be linear
in $\theta$ and therefore so will $G(\theta,v)=v-\Phi(\theta,v)$:
\[
G(\theta,v)=H(v)\theta+z(v),
\]
where $H(\cdot)$ is a matrix and $z(\cdot)$ is a vector. As a result,
$\Upsilon(\theta,\hat{\gamma}_{k-1})$ is also linear in $\theta$:
\begin{align*}
\Upsilon(\theta,\hat{\gamma}_{k-1}) & =\hat{v}_{k-1}-\nabla_{v}G(\hat{\theta}_{k-1},\hat{v}_{k-1})^{-1}G(\theta,\hat{v}_{k-1})\\
 & =\hat{v}_{k-1}-\nabla_{v}G(\hat{\theta}_{k-1},\hat{v}_{k-1})^{-1}\left(H(\hat{v}_{k-1})\theta+z(\hat{v}_{k-1})\right)\\
 & =-\nabla_{v}G(\hat{\theta}_{k-1},\hat{v}_{k-1})^{-1}H(\hat{v}_{k-1})\theta\\
 & \quad\quad+\hat{v}_{k-1}-\nabla_{v}G(\hat{\theta}_{k-1},\hat{v}_{k-1})^{-1}z(\hat{v}_{k-1})\\
 & \equiv A(\hat{\gamma}_{k-1})\theta+b(\hat{\gamma}_{k-1}).
\end{align*}
 Additionally, the optimization step in Algorithm \ref{alg:EPL-Algorithm}
($k$-EPL) becomes 
\[
\hat{\theta}_{k,EPL}=\underset{\theta\in\Theta}{\textrm{arg max}}\quad N^{-1}\sum_{i=1}^{N}\sum_{t}\sum_{j}\ln\Lambda\left(\Upsilon(x_{t},a_{t}^{j};\theta,\hat{\gamma}_{k-1})\right).
\]
 It turns out that this is a concave optimization problem, as described
in the next proposition.
\begin{prop}
\label{prop:Computation-Simplicity}Under Assumptions \ref{assu:Linear-Utility-Index}-\ref{assu:Equilibrium-in-CCV},
(i) $\Upsilon(\theta,\hat{\gamma}_{k-1})=A(\hat{\gamma}_{k-1})\theta+b(\hat{\gamma}_{k-1})$,
where $A(\hat{\gamma}_{k-1})\equiv-\nabla_{v}G(\hat{\theta}_{k-1},\hat{v}_{k-1})^{-1}H(\hat{v}_{k-1})$
and $b(\hat{\gamma}_{k-1})\equiv\hat{v}_{k-1}-\nabla_{v}G(\hat{\theta}_{k-1},\hat{v}_{k-1})^{-1}z(\hat{v}_{k-1})$;
and (ii) For $k$-EPL, $\hat{\theta}_{k}=\textrm{arg max}_{\theta\in\Theta}\quad N^{-1}\sum_{i=1}^{N}\sum_{t}\sum_{j}\ln\Lambda\left(\Upsilon(x_{t},a_{t}^{j};\theta,\hat{\gamma}_{k-1})\right),$
where the objective function is concave in $\theta$.
\end{prop}
\begin{proof}
Result (i) follows from the analysis immediately preceding the proposition.
Result (ii) arises because $\ln\Lambda(\cdot)$ is concave by Assumption
\ref{assu:Log-Concave-CCP-Mapping} and $\Upsilon(\cdot)$ is linear
in $\theta$ (Result (i)).
\end{proof}
Proposition \ref{prop:Computation-Simplicity} shows how our choice
of nuisance parameter and constraint lead to a computationally simple
estimation sequence. Computing $A(\hat{\gamma}_{k-1})$ and $b(\hat{\gamma}_{k-1})$
in Proposition \ref{prop:Computation-Simplicity} requires computing
$\nabla_{v}G(\hat{\theta}_{k-1},\hat{v}_{k-1})^{-1}H(\hat{v}_{k-1})$
and $\nabla_{v}G(\hat{\theta}_{k-1},\hat{v}_{k-1})^{-1}z(\hat{v}_{k-1})$,
respectively, which are the solutions to linear systems. Importantly,
these linear systems can be solved outside the optimization search
in Step 2 of Algorithm \ref{alg:EPL-Algorithm}. So, the computation
procedure alternates between i) computing ``pseudo-regressors''
by solving linear systems; and ii) maximizing a concave optimization
problem that using the pseudo-regressors as inputs. Furthermore, $\nabla_{v}G(\hat{\theta}_{k-1},\hat{v}_{k-1})$
can be computed analytically when $\Lambda(\cdot)$ and its derivative
have an analytic form, such as the logit or probit cases.

Notably, this computational simplicity is \textit{not} available
if the CCPs are chosen as the nuisance parameter. That is, when $Y\equiv p$
and $G(\theta,P)\equiv P-\Psi(\theta,P)$. In this case, 
\[
\Upsilon(\theta,\hat{\gamma}_{k-1})=\hat{P}_{k-1}-\left(I-\nabla_{P}\Psi(\hat{\theta}_{k-1},\hat{P}_{k-1})\right)^{-1}\left(P-\Psi(\theta,\hat{P}_{k-1})\right)
\]
 and the optimization step in the $k$-EPL algorithm solves 
\[
\hat{\theta}_{k,\text{EPL}}=\underset{\theta\in\Theta}{\textrm{arg max}}\quad N^{-1}\sum_{i=1}^{N}\sum_{t}\sum_{j}\ln\Upsilon(x_{t},a_{t}^{j};\theta,\hat{\gamma}_{k-1}).
\]
 Several computational issues arise. First, instead of solving linear
systems once before the optimization step, we must repeatedly solve
linear systems throughout the optimization because of the need to
compute $(I-\nabla_{P}\Psi(\hat{\theta}_{k-1},\hat{P}_{k-1}))^{-1}\Psi(\theta,\hat{P}_{k-1})$
for each new value of $\theta$ in the search. Second, we will lose
the guarantee of concavity of the optimization problem in each step.
Even though $\Psi(\theta,\hat{P}_{k-1})$ is log-concave in $\theta$,
this does not guarantee log-concavity of $\Upsilon(\theta,\hat{\gamma}_{k-1})$
because affine transformations of log-concave functions are not necessarily
log-concave. And third, the Newton-like steps rely on an implicit
linearization: even though $\Psi(\theta,\hat{P}_{k-1})$ maps into
the probability simplex, $\Upsilon(\theta,\hat{\gamma}_{k-1})$ can
arrive at values outside the simplex.\footnote{Technically, $\Psi(\cdot)$ maps into a Cartesian product of the interior
of the unit simplex due to each player having their own strategies.} Thus, we would need to add constraints to the optimization problem
to ensure $\Upsilon(\theta,\hat{\gamma}_{k-1})$ does not leave the
unit simplex. Whereas, the formulation in $v$-space does not require
any constraints because $v$ can take any value on a Cartesian product
of the real line.

\subsubsection{\label{subsec:Comparison-to-Other-Methods}Comparison to Other Methods}

While our EPL iterations with $Y\equiv v$ have a similar computational
structure to NPL iterations (with $Y\equiv P$) insofar as both require
solving linear systems then a globally concave optimization problem,
the dimension of the linear systems in $k$-EPL is larger. \citet{AgMir2007}
show that $k$-NPL requires solving $|J|(|\Theta|+1)$ different systems
of linear equations, each of dimension $|\mathcal{X}|$ , resulting
in a worst-case bound of $O((|\Theta|+1)|\mathcal{X}|^{3}|\mathcal{J}|)$
flops.\footnote{There are $|\Theta|+1$ systems for each of the $|\mathcal{J}|$ players.}
On the other hand, $k$-EPL requires solving $|\Theta|+1$ different
systems of linear equations, each of dimension $|\mathcal{J}||\mathcal{X}||\mathcal{A}|$,
resulting in a larger worst-case bound of $O((|\Theta|+1)|\mathcal{X}|^{3}|\mathcal{J}|^{3}|\mathcal{A}|^{3})$
flops. Sparsity of the linear systems -- a common feature in dynamic
discrete choice models (\citet{EgLaiSu2015}) -- can lower these
bounds for both $k$-NPL and $k$-EPL. Fortunately, in practice, in
our largest Monte Carlo experiment the relative difference is much
lower than suggested by the worst-case bounds.

We then see a tradeoff between efficiency and computational burden,
a common theme in estimating dynamic games. This theme appears in
Bugni and Bunting's \citeyearpar{BugniBunt2020} analysis comparing
their efficient $k$-MD (minimum distance) estimator to $k$-NPL.
Even in what is essentially the smallest scale game possible -- 2
players, 2 actions, 4 states -- they report for $k$-MD a large increase
in computational burden over $k$-NPL, with individual iterations
taking about 12 to 26 times longer on average, depending on the number
of iterations. It is perhaps reasonable to expect that this difference
will grow with the size of the game, which is a concern for practitioners
who must balance econometric efficiency with computational feasibility.
Our $k$-EPL estimator, on the other hand, induces a much less severe
increase in computational burden while retaining efficiency. In Section
\ref{sec:Monte-Carlo-Simulations}, we also explore a 2 player, 2
action, 4 state game and find that the difference between computational
time for $k$-EPL and $k$-NPL iterations is negligible in that setting.
Additionally, we explore a much larger-scale game -- 5 players, 2
actions, 160 states -- that is more representative of empirically
relevant models. In this larger-scale game, we find that there is
indeed an increase in time per iteration for $k$-EPL relative to
$k$-NPL, but this increase in the larger-scale game (between 4 to
8 times) is not even as large as the 12 to 26-fold increase for $k$-MD
in the much smaller-scale game.

Even with an increase in computation time per iteration relative to
$k$-NPL, $k$-EPL can still ultimately be more attractive than $k$-NPL.
First, its asymptotic efficiency, convergence properties, and rapid
finite-sample improvements are attractive features that may be worth
the increased computational burden of each iteration. Second, even
in cases where both $k$-EPL and $k$-NPL converge to consistent estimates,
$k$-EPL enjoys a much faster convergence rate than $k$-NPL, resulting
in fewer iterations to convergence. So, iterating to convergence on
$k$-EPL to obtain the finite-sample MLE can still be faster than
computing the $\infty$-NPL estimator (if it converges), even though
each individual iteration takes longer.

In many applications, the dominant source of computational burden
for either estimator will often be the size of the state space, $|\mathcal{X}|$,
since it can be large when $|\mathcal{A}|$ and $|\mathcal{J}|$ are
small and also tends to grow with both $|\mathcal{A}|$ and $|\mathcal{J}|$
in dynamic games. One simple yet salient illustration arises when
the state is determined by the previous actions of the players, so
that $|\mathcal{X}|=|\mathcal{A}|^{|\mathcal{J}|}$. Thus, as $|\mathcal{J}|$
grows, $|\mathcal{X}|$ ultimately becomes the main source of computational
burden for the linear systems in both $k$-NPL and $k$-EPL. To help
deal with large state spaces, \citet{AgMir2007} show that the linear
systems required for $k$-NPL can be solved via an iterative process
reminiscent of value function Bellman iteration, so that their worst-case
computational burden reduces to $O((|\Theta|+1)|\mathcal{X}|^{2}|\mathcal{J}|)$.
Similarly, $k$-EPL can also use alternative iterative methods to
solve the linear systems such as Krylov subspace methods, although
it cannot use Bellman-style iteration.

\subsection{\label{subsec:Single-Agent-Equivalence}Single-Agent Dynamic Discrete
Choice}

We conclude this section by showing that $k$-NPL in a single-agent
dynamic discrete choice model (\citealp{AgMir2002}) is equivalent
to $k$-EPL with a slightly modified definition of $\Upsilon(\cdot)$.
Here, we can work directly in probability space, $Y\equiv P$, and
let 
\[
G(\theta,P)=P-\Psi(\theta,P),
\]
 with $\Psi(\theta,P)$ defined as in Section \ref{sec:Dynamic-Discrete-Game}
but with only a single agent.

We now have
\[
\nabla_{P}G(\theta,P)=I-\nabla_{P}\Psi(\theta,P).
\]

Proposition 2 from \citet{AgMir2002} shows that $\nabla_{P}\Psi(\theta,P_{\theta})=0$,
where $P_{\theta}=\Psi(\theta,P_{\theta})$. Thus, $\nabla_{P}G(\theta,P_{\theta})=I$
for all $\theta$. So, we can use a modified definition of $\Upsilon(\cdot)$,
where $\nabla_{P}G(\hat{\theta}_{k-1},\hat{P}_{k-1})$ is simply replaced
with the identity matrix, $I$, and we obtain
\begin{align*}
\Upsilon(\theta,\hat{\gamma}_{k-1}) & =\hat{P}_{k-1}-I^{-1}\left(\hat{P}_{k-1}-\Psi(\theta,\hat{P}_{k-1})\right)\\
 & =\Psi(\theta,\hat{P}_{k-1}).
\end{align*}
This modified implementation of $k$-EPL is identical to $k$-NPL.

This equivalence of $k$-NPL to $k$-EPL in single agent models is
unsurprising for two reasons. First, we stated in the introduction
that the motivation for $k$-EPL is to extend the nice properties
of $k$-NPL from single-agent models to dynamic games. So there should
be, at the very least, substantial conceptual overlap between the
methods. Second, Aguirregabiria and Mira (\citeyear{AgMir2002}, Proposition
1(c)) show that their policy iterations are equivalent to Newton-like
iterations on the (ex-ante) value function in single-agent models.
Since $k$-EPL is built around Newton iterations, such an equivalence
is again suggestive of the relationship shown here.

\section{\label{sec:Monte-Carlo-Simulations}Monte Carlo Simulations}

In this section, we present Monte Carlo simulation results to illustrate
$k$-EPL's finite sample properties. The simulations presented here
are based on the dynamic game of entry and exit in Example \ref{exa:Wholesale-Club-Store},
parameterized to match a model with five heterogeneous firms from
\citet{AgMir2007}. The appendix includes further Monte Carlo simulations
for two other models: (i) a small-scale dynamic model from \citet{PesSD2008};
and (ii) a static game from \citet{PesSD2010}. The dynamic model
in \citet{PesSD2008} exhibits multiple -- possibly best-reply-unstable
-- equilibria (with data generated from only one of them), which
can be challenging for other iterative methods. The static game in
\citet{PesSD2010} provides a setting where we can easily compare
$\infty$-EPL to the MLE computed via the nested fixed-point algorithm.

The simulations in this section are based on an empirically relevant
model that forms the basis of many applications but has also been
used (sometimes in simplified forms) in simulation studies by \citet{KasShim2012},
\citet{EgLaiSu2015}, \citet{BugniBunt2020}, \citet{BlevKim2019},
and \citet{AgMarcoux2019}. In particular, \citet{AgMarcoux2019}
discuss in detail how the spectral radius of the NPL operator increases
with the strength of competition in the model. As such, we take as
our baseline case the parameters as Experiment 2 of \citet{AgMir2007}.
We then follow \citet{AgMarcoux2019} and increase the competitive
effect parameter, $\theta_{\text{RN}}$ to investigate how $k$-EPL
and $k$-NPL behave as the spectral radius of the NPL operator increases
to the point of instability and beyond.

The model is a dynamic entry-exit game with $\lvert\mathcal{J\rvert}=5$
firms that operate in $N$ independent markets. The firms have heterogeneous
fixed costs. There is a single common market state, the market size,
which can take one of five values: $s_{it}\in\lbrace1,2,3,4,5\rbrace$.
Market size follows a $5\times5$ transition matrix and we use the
same transition matrix as \citet{AgMir2007}. The other observable
states are the incumbency statuses of the five firms, denoted $a_{i,t-1}^{j}$.
There are therefore $5\times2^{5}=160$ distinct states in the model.
Hence the state in market $i$ at time $t$ can be represented in
vector form as $x_{it}=(s_{it},a_{i,t-1}^{1},a_{i,t-1}^{2},a_{i,t-1}^{3},a_{i,t-1}^{4},a_{i,t-1}^{5})$.

Given the state of the model at the beginning of the period, firms
simultaneously choose whether to operate in the market, $a_{it}^{j}=1$,
or not, $a_{it}^{j}=0.$ They make these decisions in order to maximize
expected discounted profits, where the period profit function for
an active firm is

\[
\bar{u}^{j}(x_{it},a_{it}^{j}=1,a_{it}^{-j};\theta)=\theta_{\text{FC},j}+\theta_{\text{RS}}s_{it}-\theta_{\text{RN}}\ln\left(1+\sum_{l\neq j}a_{it}^{l}\right)-\theta_{\text{EC}}(1-a_{i,t-1}^{j})
\]
and $\bar{u}^{j}(x_{it},a_{it}^{j}=0,a_{it}^{-j};\theta)=0$ for inactive
firms. The game is dynamic because firms must pay an entry cost $\theta_{\text{EC}}$
to enter the market and because firms have forward-looking expectations
about market size and entry decisions of rival firms. The private
information shocks $\varepsilon_{it}^{j}(a_{it}^{j})$ are independent
and identically distributed across time, markets, players, and actions
and follow the standard type I extreme value distribution.

We choose the model parameters following \citet{AgMir2007} and \citet{AgMarcoux2019}.
In particular, the fixed costs for the five firms are $\theta_{\text{FC},1}=-1.9$,
$\theta_{\text{FC},2}=-1.8$, $\theta_{\text{FC},3}=-1.7$, $\theta_{\text{FC},4}=-1.6$,
and $\theta_{\text{FC},5}=-1.5$. The coefficient on market size is
$\theta_{\text{RS}}=1$ and the common firm entry cost is $\theta_{\text{EC}}=1$.
The only parameter that differs across our three experiments is the
competitive effect $\theta_{\text{RN}}$, which we set to be $\theta_{\text{RN}}=1$
in Experiment 1, $\theta_{\text{RN}}=2.5$ in Experiment 2, and $\theta_{\text{RN}}=4$
in Experiment 3. For easy comparison, these parameter values correspond
closely to the ``very stable,'' ``mildly unstable,'' and ``very
unstable'' cases of \citet{AgMarcoux2019}.

For each experiment, we carry out $1000$ replications using two sample
sizes, $N=1600$ and $N=6400$, noting that $N=1600$ is the sample
size used by \citet{AgMir2007}. For each replication and each sample
size, we draw a sample of size $N$. With this sample we calculate
the iterative $k$-NPL and $k$-EPL estimates. For $k$-NPL, we follow
the original \citet{AgMir2007} implementation. We initialize $k$-NPL
with estimated semiparametric logit choice probabilities. We then
initialize $1$-EPL using the parameter estimates and value function
from the $1$-NPL iteration, which is consistent even in cases where
further iteration may lead to inconsistency. Aside from the initialization,
and using the same sample, the $k$-EPL iterations proceed independently
from the $k$-NPL iterations. For $k$-EPL, as before we represent
the equilibrium condition in terms of $v$ as $G(\theta,v)=v-\Phi(\theta,v)$
and use analytical derivatives for the Jacobian $\nabla_{v}G(\theta,v)$.

We report estimates from one, two, and three iterations of each estimator
as well as the converged values, which we denote as $\infty$-NPL
and $\infty$-EPL. We limit the number of iterations to 100 for both
estimators, and if the algorithm has not converged we use the estimate
from the final iteration. We use the same convergence criteria for
both estimators: at each iteration we check the sup norm of the change
in the parameter values and choice probabilities. Based on the criteria
used by \citet{AgMir2007}, if both are below $10^{-2}/K$, where
$K$ is the number of parameters, we terminate the iterations and
return the final converged estimate.

\begin{figure}
\begin{minipage}[t]{0.49\columnwidth}%
\subfloat[Experiment 1: $\hat{\theta}_{\text{RN}}$ when $\theta_{\text{RN}}=1$]{\includegraphics[width=1\textwidth]{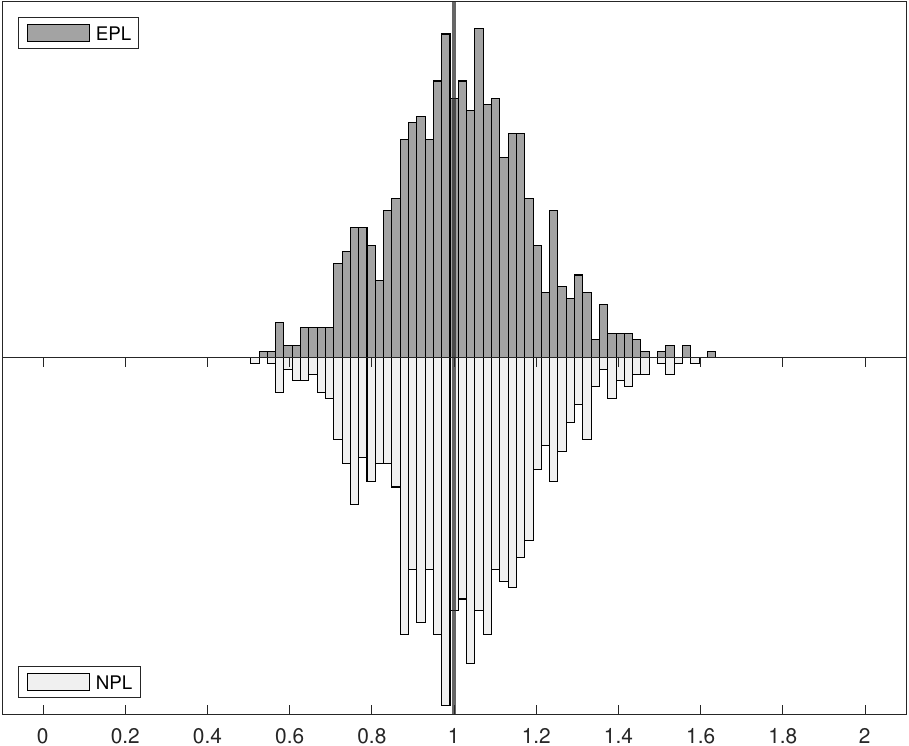}}

\subfloat[Experiment 2: $\hat{\theta}_{\text{RN}}$ when $\theta_{\text{RN}}=2.5$]{\includegraphics[width=1\textwidth]{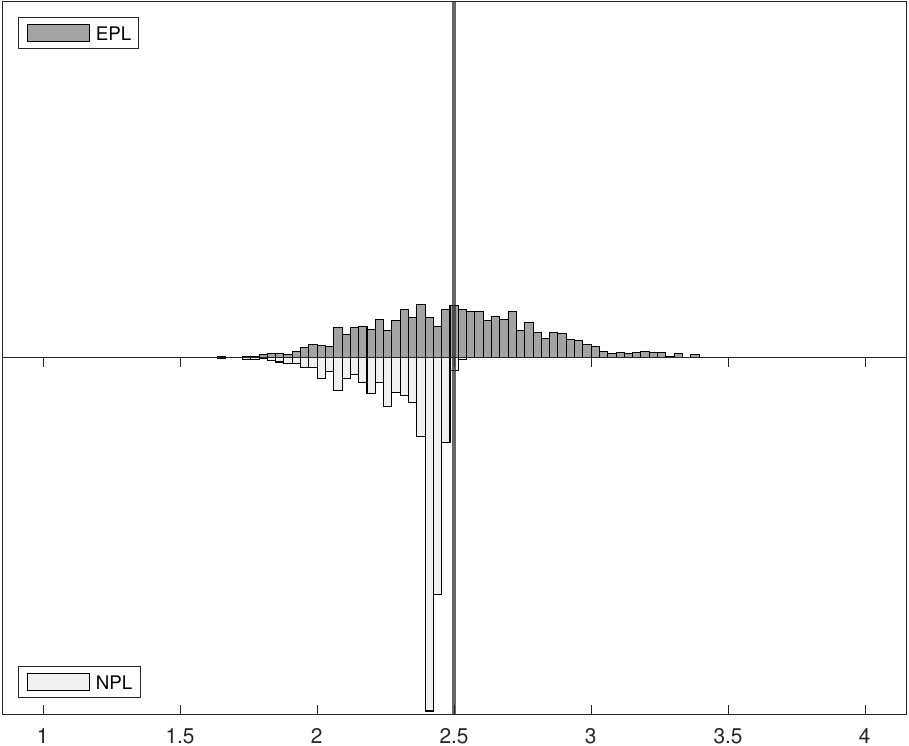}}

\subfloat[Experiment 3: $\hat{\theta}_{\text{RN}}$ when $\theta_{\text{RN}}=4$]{\includegraphics[width=1\textwidth]{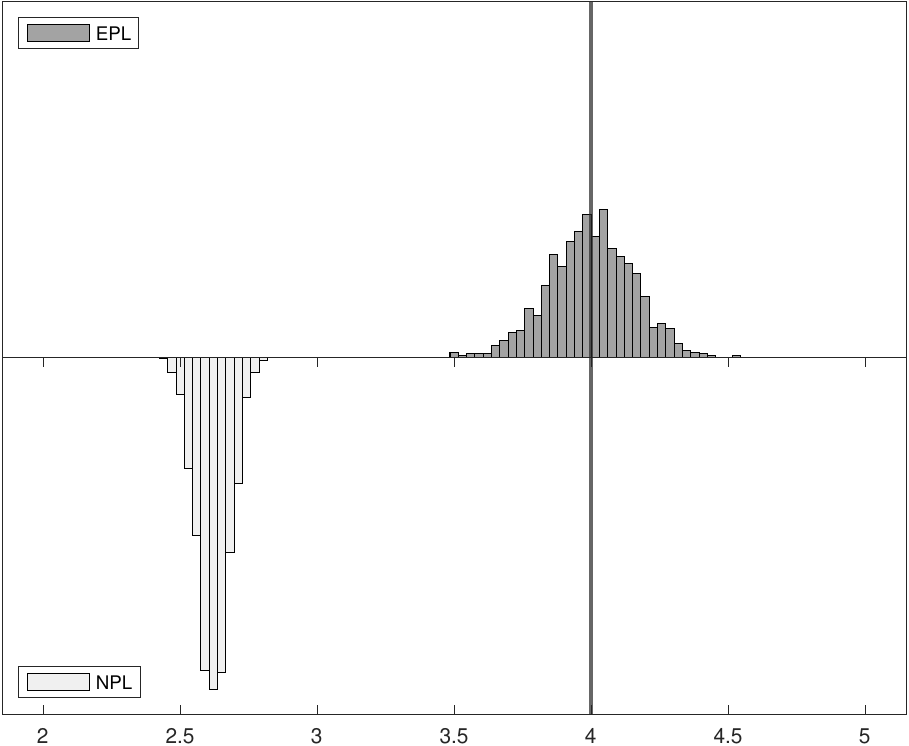}}%
\end{minipage}\hfill{}%
\begin{minipage}[t]{0.49\columnwidth}%
\subfloat[Experiment 1: $\hat{\theta}_{\text{EC}}$ when $\theta_{\text{RN}}=1$]{\includegraphics[width=1\textwidth]{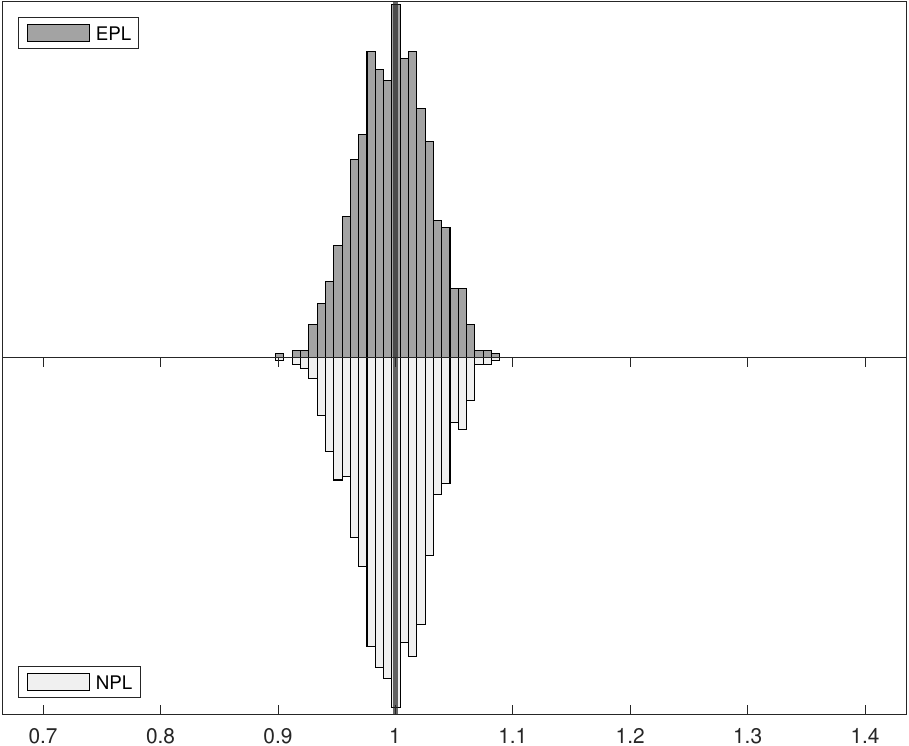}}

\subfloat[Experiment 2: $\hat{\theta}_{\text{EC}}$ when $\theta_{\text{RN}}=2.5$]{\includegraphics[width=1\textwidth]{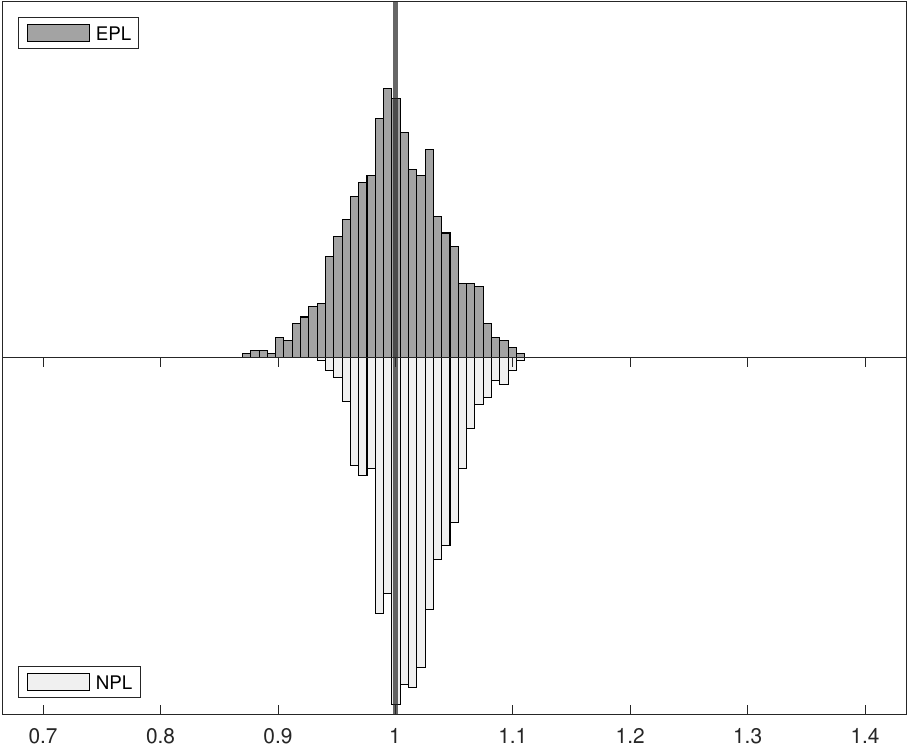}}

\subfloat[Experiment 3: $\hat{\theta}_{\text{EC}}$ when $\theta_{\text{RN}}=4$]{\includegraphics[width=1\textwidth]{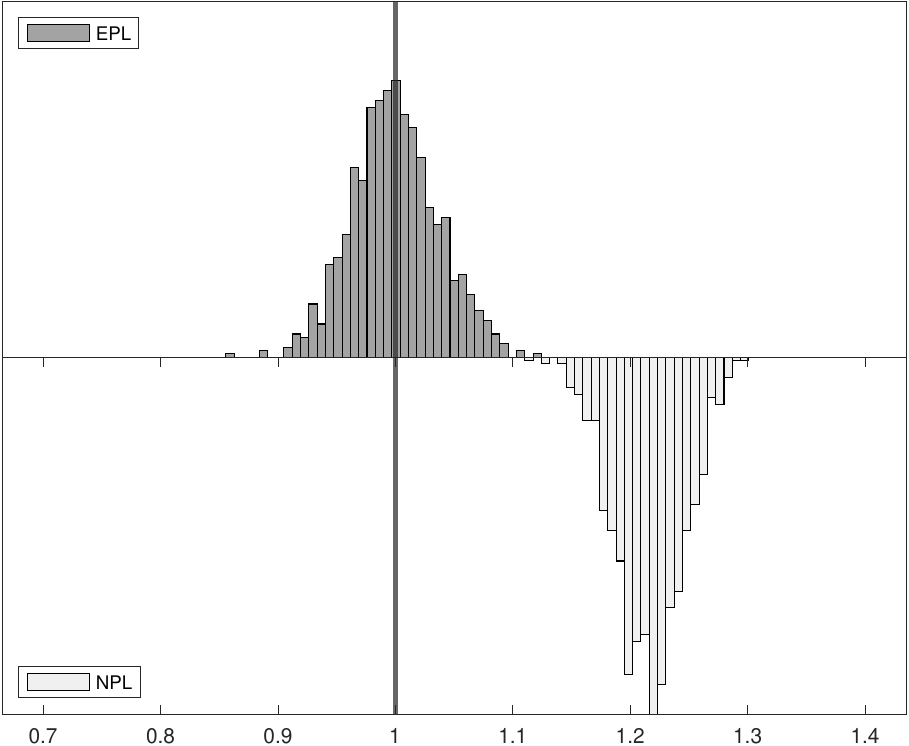}}%
\end{minipage}

\caption{\label{fig:ParameterHistograms}Distributions of $\hat{\theta}_{\text{RN}}$
and $\hat{\theta}_{\text{EC}}$ for $\infty$-EPL and $\infty$-NPL
Over 1000 Replications ($N=6400$)}
\end{figure}

Figure\ \ref{fig:ParameterHistograms} shows the Monte Carlo distributions
of two key parameter estimates in the model: $\hat{\theta}_{\text{RN}}$,
the competitive effect, and $\hat{\theta}_{\text{EC}}$, the entry
cost. Each panel compares two histograms: the histogram above in dark
gray corresponds to $\infty$-EPL and the histogram below in light
gray is for $\infty$-NPL. These histograms are based on data for
1,000 replications of each experiment with 6,400 observations.

The left panels of Figure\ \ref{fig:ParameterHistograms} show the
distributions of $\hat{\theta}_{\text{RN}}$, which is related to
the strength of competition in the market and is closely related with
the spectral radius of the NPL operator (while the spectral radius
of the EPL operator is always zero). Indeed, we can see that as the
competitive effect becomes large the distribution of $\infty$-NPL
estimates is truncated at around $\hat{\theta}_{\text{RN}}=2.5$ in
Experiment 2. It remains concentrated around $\hat{\theta}_{\text{RN}}=2.6$
in Experiment 3 even though the value used in the data generating
process was $\theta_{\text{RN}}=4$. Yet for all experiments the distributions
of $\infty$-EPL estimates are centered around the true values.

Although the true entry cost parameter is fixed at $\theta_{\text{EC}}=1$
in all three experiments, the distribution of estimates can be affected
when we vary the competitive effect from $\theta_{\text{RN}}=1$ to
$\theta_{\text{RN}}=4$. In the right panels of Figure\ \ref{fig:ParameterHistograms},
the truncation from above of $\hat{\theta}_{\text{RN}}$ in the $\infty$-NPL
distribution induces truncation from below in $\hat{\theta}_{\text{EC}}$,
as lower estimated values of competition result in higher estimated
entry costs, leading to distortion of both distributions and to parameter
estimates that would lead a policy-maker to possibly very different
economic implications. In Experiment 3, although the distribution
of $\infty$-NPL estimates again appear to be normally distributed
for both parameters, they are biased and are not centered around the
true values.

\begin{figure}
\begin{minipage}[t]{0.49\columnwidth}%
\subfloat[Experiment 1: Iterations]{\includegraphics[width=1\textwidth]{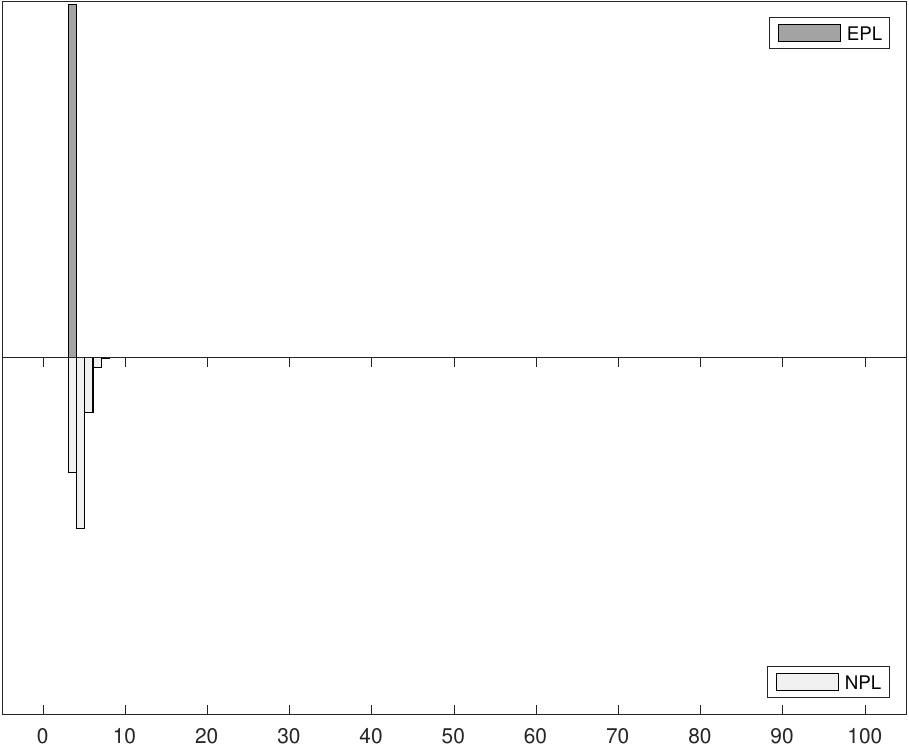}}

\subfloat[Experiment 2: Iterations]{\includegraphics[width=1\textwidth]{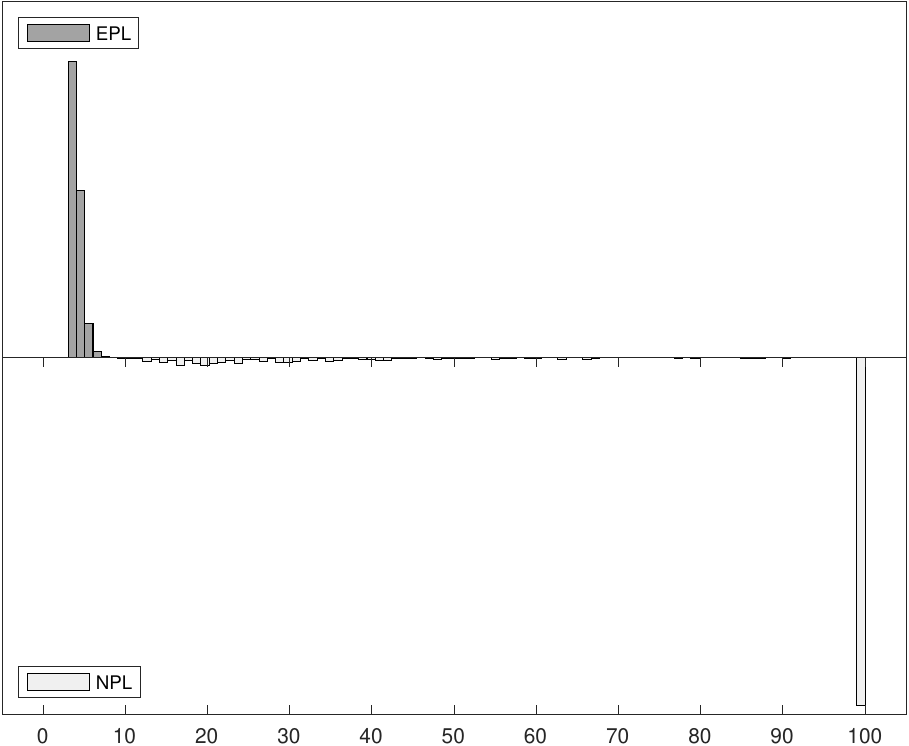}}

\subfloat[Experiment 3: Iterations]{\includegraphics[width=1\textwidth]{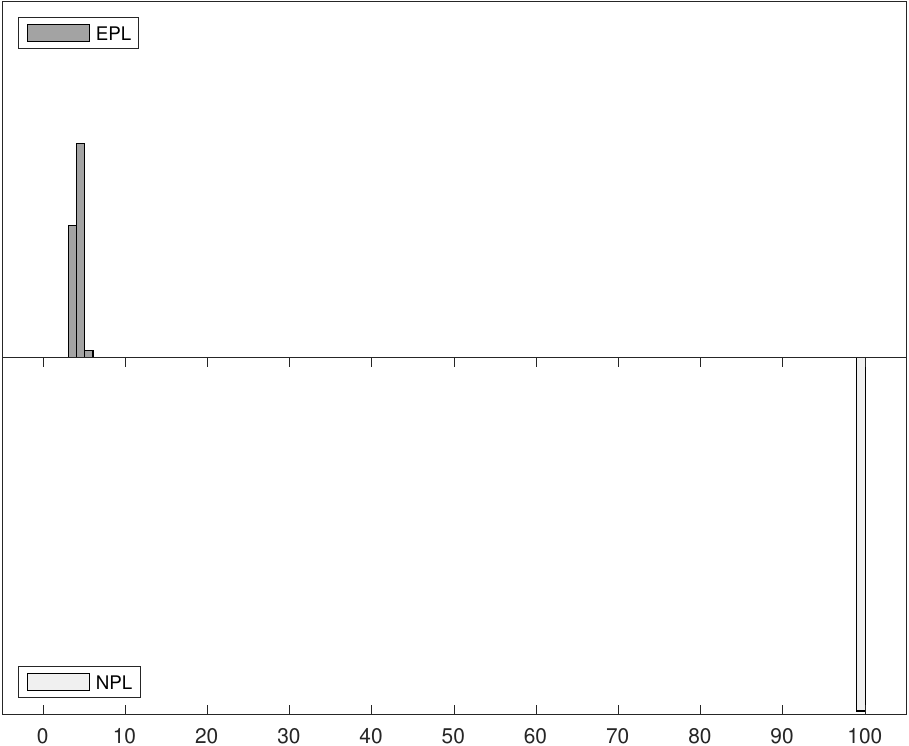}}%
\end{minipage}\hfill{}%
\begin{minipage}[t]{0.49\columnwidth}%
\subfloat[Experiment 1: Computational Times (sec.)]{\includegraphics[width=1\textwidth]{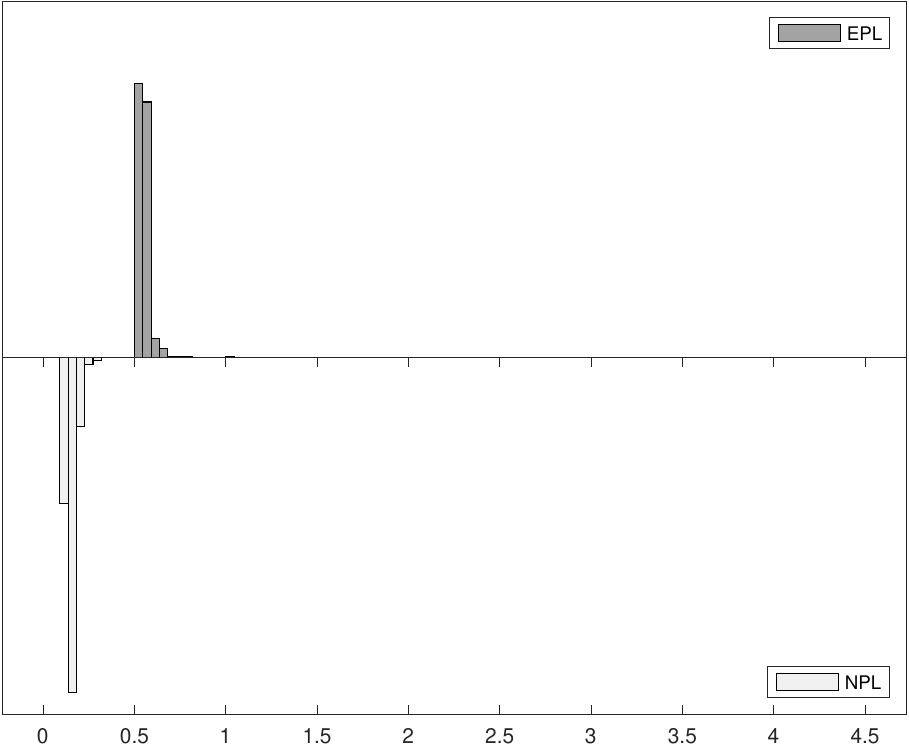}}

\subfloat[Experiment 2: Computational Times (sec.)]{\includegraphics[width=1\textwidth]{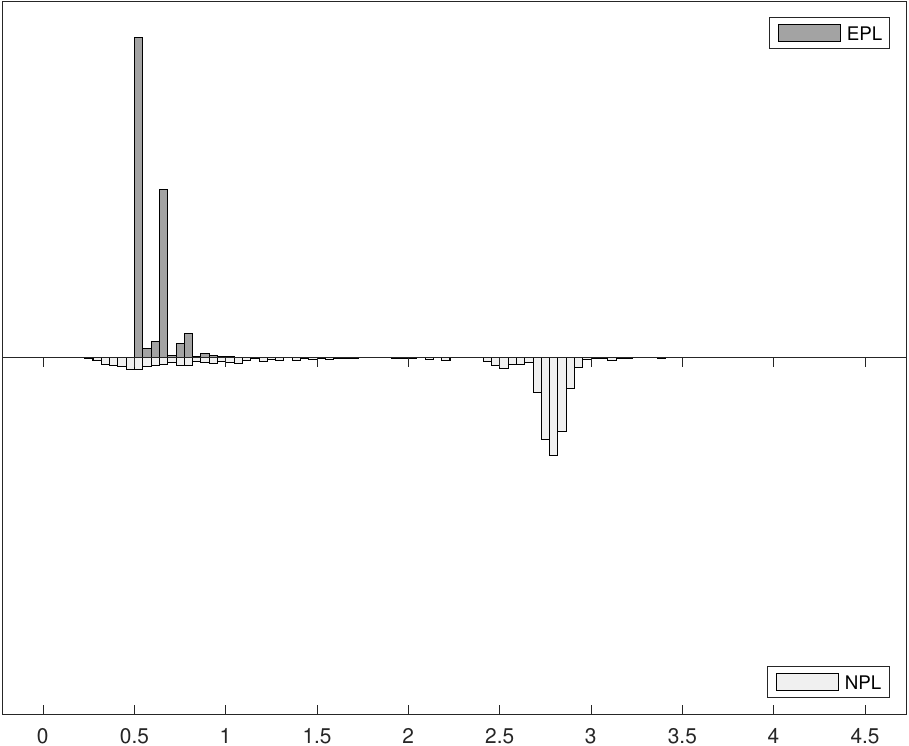}}

\subfloat[Experiment 3: Computational Times (sec.)]{\includegraphics[width=1\textwidth]{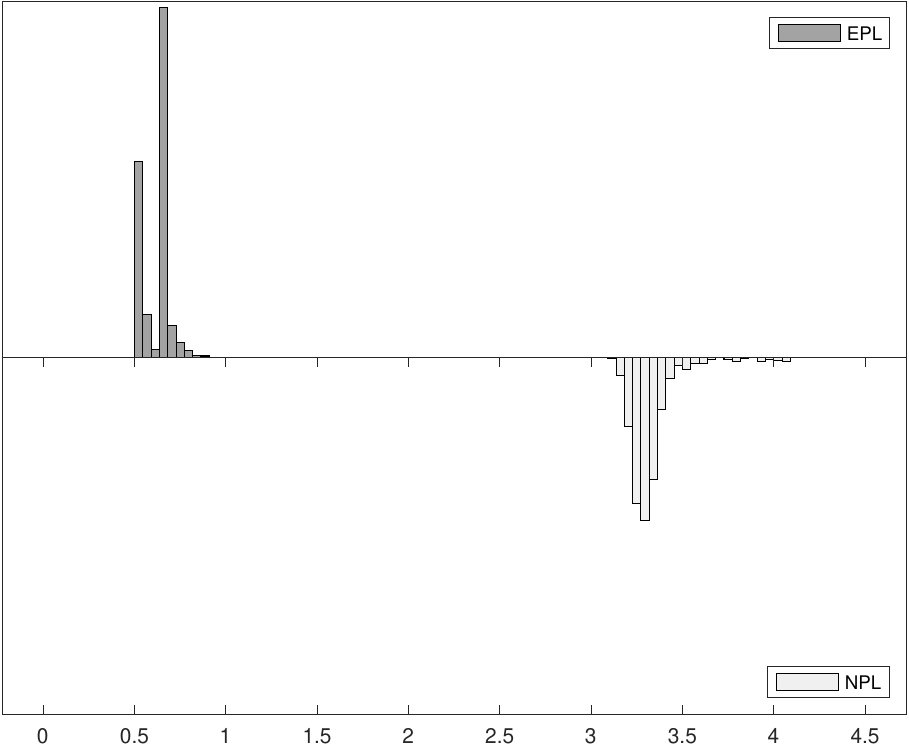}}%
\end{minipage}

\caption{\label{fig:IterationsTimes}Distributions of Iterations and Computational
Time for $\infty$-EPL and $\infty$-NPL Over 1000 Replications ($N=6400$)}
\end{figure}

Figure\ \ref{fig:IterationsTimes} shows the distributions of iteration
counts and computational times for $\infty$-EPL and $\infty$-NPL
across the three experiments.\footnote{Computational times are reported using Matlab R2020b on a 2019 Mac
Pro with a 3.5 GHz 8-Core Intel Xeon W processor.} The histograms in the left panels show the distribution of iteration
counts required to achieve convergence for both estimators. Recall
that the maximum number of iterations allowed was 100. $\infty$-EPL
requires fewer iterations for all experiments, especially for Experiments
2 and 3 where $\infty$-NPL sometimes fails to converge in Experiment
2 and always fails to converge in Experiment 3. $\infty$-EPL converged
for all replications in all experiments.

The histograms in the right panels of Figure\ \ref{fig:IterationsTimes}
show the distribution of total computational time, in seconds, across
the 1000 replications. In Experiment 1, where $k$-NPL is stable,
$\infty$-NPL is faster even though it requires more iterations on
average. In other words, each $k$-EPL iteration is more expensive
on average, but fewer are required. This is in line with the analysis
in Section~\ref{subsec:Computational-Implementation}. However, in
Experiments 2 and 3 the computational times for $\infty$-NPL increase
as it requires more iterations, eventually overtaking the time required
for $\infty$-EPL yet still frequently (Experiment 2) or always (Experiment
3) failing to converge.

Summary statistics for all parameter estimates and all three experiments
can be found in Tables~\ref{tab:exp1small}--\ref{tab:exp3large}.
Each table reports summary statistics for 1-, 2-, 3-, and $\infty$-NPL
and 1-, 2-, 3-, and $\infty$-EPL over the 1000 Monte Carlo replications
for $N=1600$ or $N=6400$. The upper two panels report the mean bias
and mean square error (MSE) for each parameter and each sequential
estimator. The next panel reports summary statistics for the number
of iterations completed until either convergence or failure at 100
iterations. This includes the median, maximum, and inter-quartile
range (IQR) of iteration counts across the replications as well as
the number of replications for which $\infty$-NPL failed to converge.
Finally, the bottom panel reports the computational times for each
estimator including the mean and median total time (for all completed
iterations) across the 1000 replications and the median time per iteration.

\begin{table}
\begin{tabular}{lrrrrrrrrr}
\hline 
\hline Bias & True & 1-NPL & 1-EPL & 2-NPL & 2-EPL & 3-NPL & 3-EPL & $\infty$-NPL & $\infty$-EPL\tabularnewline
\hline 
$\theta_{\text{FC},1}$ & -1.9 & -0.018 & 0.007 & 0.008 & 0.004 & 0.002 & 0.004 & 0.004 & 0.005\tabularnewline
$\theta_{\text{FC},2}$ & -1.8 & -0.016 & 0.007 & 0.007 & 0.003 & 0.002 & 0.004 & 0.003 & 0.004\tabularnewline
$\theta_{\text{FC},3}$ & -1.7 & -0.018 & 0.003 & 0.003 & -0.001 & -0.002 & 0.000 & -0.001 & 0.000\tabularnewline
$\theta_{\text{FC},4}$ & -1.6 & -0.015 & 0.003 & 0.004 & 0.000 & -0.001 & 0.001 & 0.000 & 0.001\tabularnewline
$\theta_{\text{FC},5}$ & -1.5 & -0.013 & 0.003 & 0.004 & 0.000 & -0.001 & 0.000 & 0.000 & 0.000\tabularnewline
$\theta_{\text{RS}}$ & 1.0 & -0.018 & 0.010 & 0.010 & 0.013 & 0.014 & 0.015 & 0.014 & 0.015\tabularnewline
$\theta_{\text{RN}}$ & 1.0 & -0.068 & 0.034 & 0.034 & 0.038 & 0.040 & 0.044 & 0.041 & 0.044\tabularnewline
$\theta_{\text{EC}}$ & 1.0 & -0.000 & 0.002 & 0.003 & -0.001 & -0.002 & -0.001 & -0.001 & -0.001\tabularnewline
\hline 
MSE & True & 1-NPL & 1-EPL & 2-NPL & 2-EPL & 3-NPL & 3-EPL & $\infty$-NPL & $\infty$-EPL\tabularnewline
\hline 
$\theta_{\text{FC},1}$ & -1.9 & 0.012 & 0.013 & 0.013 & 0.013 & 0.013 & 0.013 & 0.013 & 0.013\tabularnewline
$\theta_{\text{FC},2}$ & -1.8 & 0.012 & 0.012 & 0.012 & 0.012 & 0.012 & 0.012 & 0.012 & 0.012\tabularnewline
$\theta_{\text{FC},3}$ & -1.7 & 0.011 & 0.012 & 0.012 & 0.012 & 0.012 & 0.012 & 0.012 & 0.012\tabularnewline
$\theta_{\text{FC},4}$ & -1.6 & 0.010 & 0.011 & 0.011 & 0.011 & 0.011 & 0.011 & 0.011 & 0.011\tabularnewline
$\theta_{\text{FC},5}$ & -1.5 & 0.009 & 0.010 & 0.010 & 0.009 & 0.009 & 0.009 & 0.009 & 0.009\tabularnewline
$\theta_{\text{RS}}$ & 1.0 & 0.010 & 0.013 & 0.013 & 0.013 & 0.014 & 0.014 & 0.014 & 0.014\tabularnewline
$\theta_{\text{RN}}$ & 1.0 & 0.091 & 0.123 & 0.123 & 0.128 & 0.128 & 0.130 & 0.129 & 0.131\tabularnewline
$\theta_{\text{EC}}$ & 1.0 & 0.004 & 0.004 & 0.004 & 0.004 & 0.004 & 0.004 & 0.004 & 0.004\tabularnewline
\hline 
\multicolumn{2}{l}{Iterations} & 1-NPL & 1-EPL & 2-NPL & 2-EPL & 3-NPL & 3-EPL & $\infty$-NPL & $\infty$-EPL\tabularnewline
\hline 
\multicolumn{2}{l}{Median} & 1 & 1 & 2 & 2 & 3 & 3 & 5 & 4\tabularnewline
\multicolumn{2}{l}{Max} & 1 & 1 & 2 & 2 & 3 & 3 & 100 & 7\tabularnewline
\multicolumn{2}{l}{IQR} &  &  &  &  &  &  & 2 & 0\tabularnewline
\multicolumn{2}{l}{Non-Conv.} &  &  &  &  &  &  & 0.1\% & 0\%\tabularnewline
\hline 
\multicolumn{2}{l}{Time (sec.)} & 1-NPL & 1-EPL & 2-NPL & 2-EPL & 3-NPL & 3-EPL & $\infty$-NPL & $\infty$-EPL\tabularnewline
\hline 
\multicolumn{2}{l}{Total} & 17.88 & 118.42 & 31.33 & 234.82 & 44.35 & 350.07 & 76.55 & 481.70\tabularnewline
\multicolumn{2}{l}{Mean} & 0.02 & 0.12 & 0.03 & 0.23 & 0.04 & 0.35 & 0.08 & 0.48\tabularnewline
\multicolumn{2}{l}{Median} & 0.02 & 0.12 & 0.03 & 0.23 & 0.04 & 0.35 & 0.07 & 0.46\tabularnewline
\multicolumn{2}{l}{Med./Iter.} & 0.018 & 0.118 & 0.016 & 0.117 & 0.015 & 0.116 & 0.014 & 0.116\tabularnewline
\hline 
\end{tabular}

\caption{\label{tab:exp1small}Monte Carlo Results for \citet{AgMir2007}:
Experiment 1 ($\theta_{\text{RN}}=1)$, Small Sample ($N=1600$)}
\end{table}

\begin{table}
\begin{tabular}{lrrrrrrrrr}
\hline 
\hline Bias & True & 1-NPL & 1-EPL & 2-NPL & 2-EPL & 3-NPL & 3-EPL & $\infty$-NPL & $\infty$-EPL\tabularnewline
\hline 
$\theta_{\text{FC},1}$ & -1.9 & -0.018 & 0.003 & 0.003 & 0.000 & -0.001 & 0.000 & 0.000 & 0.000\tabularnewline
$\theta_{\text{FC},2}$ & -1.8 & -0.017 & 0.002 & 0.002 & -0.001 & -0.002 & -0.001 & -0.001 & -0.001\tabularnewline
$\theta_{\text{FC},3}$ & -1.7 & -0.016 & 0.001 & 0.001 & -0.002 & -0.002 & -0.001 & -0.001 & -0.001\tabularnewline
$\theta_{\text{FC},4}$ & -1.6 & -0.013 & 0.003 & 0.003 & -0.000 & -0.001 & -0.000 & -0.000 & -0.000\tabularnewline
$\theta_{\text{FC},5}$ & -1.5 & -0.012 & 0.001 & 0.001 & -0.002 & -0.003 & -0.002 & -0.002 & -0.002\tabularnewline
$\theta_{\text{RS}}$ & 1.0 & -0.022 & 0.003 & 0.003 & 0.004 & 0.005 & 0.005 & 0.005 & 0.005\tabularnewline
$\theta_{\text{RN}}$ & 1.0 & -0.078 & 0.010 & 0.010 & 0.011 & 0.012 & 0.013 & 0.013 & 0.013\tabularnewline
$\theta_{\text{EC}}$ & 1.0 & -0.001 & 0.000 & 0.001 & -0.002 & -0.003 & -0.002 & -0.002 & -0.002\tabularnewline
\hline 
MSE & True & 1-NPL & 1-EPL & 2-NPL & 2-EPL & 3-NPL & 3-EPL & $\infty$-NPL & $\infty$-EPL\tabularnewline
\hline 
$\theta_{\text{FC},1}$ & -1.9 & 0.003 & 0.003 & 0.003 & 0.003 & 0.003 & 0.003 & 0.003 & 0.003\tabularnewline
$\theta_{\text{FC},2}$ & -1.8 & 0.003 & 0.003 & 0.003 & 0.003 & 0.003 & 0.003 & 0.003 & 0.003\tabularnewline
$\theta_{\text{FC},3}$ & -1.7 & 0.003 & 0.003 & 0.003 & 0.003 & 0.003 & 0.003 & 0.003 & 0.003\tabularnewline
$\theta_{\text{FC},4}$ & -1.6 & 0.003 & 0.003 & 0.003 & 0.003 & 0.003 & 0.003 & 0.003 & 0.003\tabularnewline
$\theta_{\text{FC},5}$ & -1.5 & 0.003 & 0.003 & 0.003 & 0.002 & 0.003 & 0.002 & 0.003 & 0.002\tabularnewline
$\theta_{\text{RS}}$ & 1.0 & 0.003 & 0.003 & 0.003 & 0.003 & 0.003 & 0.003 & 0.003 & 0.003\tabularnewline
$\theta_{\text{RN}}$ & 1.0 & 0.027 & 0.030 & 0.030 & 0.030 & 0.030 & 0.030 & 0.031 & 0.030\tabularnewline
$\theta_{\text{EC}}$ & 1.0 & 0.001 & 0.001 & 0.001 & 0.001 & 0.001 & 0.001 & 0.001 & 0.001\tabularnewline
\hline 
\multicolumn{2}{l}{Iterations} & 1-NPL & 1-EPL & 2-NPL & 2-EPL & 3-NPL & 3-EPL & $\infty$-NPL & $\infty$-EPL\tabularnewline
\hline 
\multicolumn{2}{l}{Median} & 1 & 1 & 2 & 2 & 3 & 3 & 5 & 4\tabularnewline
\multicolumn{2}{l}{Max} & 1 & 1 & 2 & 2 & 3 & 3 & 8 & 4\tabularnewline
\multicolumn{2}{l}{IQR} &  &  &  &  &  &  & 1 & 0\tabularnewline
\multicolumn{2}{l}{Non-Conv.} &  &  &  &  &  &  & 0\% & 0\%\tabularnewline
\hline 
\multicolumn{2}{l}{Time (sec.)} & 1-NPL & 1-EPL & 2-NPL & 2-EPL & 3-NPL & 3-EPL & $\infty$-NPL & $\infty$-EPL\tabularnewline
\hline 
\multicolumn{2}{l}{Total} & 43.91 & 148.28 & 75.66 & 284.75 & 105.45 & 416.53 & 154.35 & 548.53\tabularnewline
\multicolumn{2}{l}{Mean} & 0.04 & 0.15 & 0.08 & 0.28 & 0.11 & 0.42 & 0.15 & 0.55\tabularnewline
\multicolumn{2}{l}{Median} & 0.04 & 0.15 & 0.07 & 0.28 & 0.10 & 0.41 & 0.15 & 0.55\tabularnewline
\multicolumn{2}{l}{Med./Iter.} & 0.043 & 0.147 & 0.037 & 0.141 & 0.034 & 0.138 & 0.031 & 0.136\tabularnewline
\hline 
\end{tabular}

\caption{\label{tab:exp1large}Monte Carlo Results for \citet{AgMir2007}:
Experiment 1 ($\theta_{\text{RN}}=1)$, Large Sample ($N=6400$)}
\end{table}

For Experiment 1, both the $k$-NPL and $k$-EPL estimators perform
equally well with low bias, as can be seen in Tables \ref{tab:exp1small}
and \ref{tab:exp1large}. The parameter with the most finite-sample
variation is also the main parameter of interest in our study: $\theta_{\text{RN}}.$
Note that in this model, in order to obtain estimates with performance
similar to the converged estimates it would suffice to stop at 3 iterations
with either estimator. Typically, both $\infty$-NPL and $\infty$-EPL
converge in 4 to 5 iterations. However, even in this specification
where $k$-NPL performs well, in one case out of 1000, $\infty$-NPL
fails to converge in 100 iterations or less while $\infty$-EPL always
converges in at most 7 iterations. In terms of computational time,
in this model due to the computational complexity, in the median experiment
one iteration of $k$-EPL is more expensive (0.3 seconds) than one
iteration of $k$-NPL (0.045 seconds). Because roughly the same number
of iterations are required in this model, the overall times for $\infty$-EPL
are also longer than for $\infty$-NPL. However, even with the increased
complexity each replication of $\infty$-EPL takes only around 1.3
seconds to estimate, so it remains quite feasible.

\begin{table}
\begin{tabular}{lrrrrrrrrr}
\hline 
\hline Bias & True & 1-NPL & 1-EPL & 2-NPL & 2-EPL & 3-NPL & 3-EPL & $\infty$-NPL & $\infty$-EPL\tabularnewline
\hline 
$\theta_{\text{FC},1}$ & -1.9 & -0.005 & -0.002 & -0.002 & -0.001 & -0.008 & -0.003 & 0.006 & -0.003\tabularnewline
$\theta_{\text{FC},2}$ & -1.8 & -0.002 & -0.004 & -0.005 & -0.002 & -0.005 & -0.004 & 0.010 & -0.004\tabularnewline
$\theta_{\text{FC},3}$ & -1.7 & 0.003 & -0.006 & -0.007 & -0.001 & -0.000 & -0.004 & 0.014 & -0.004\tabularnewline
$\theta_{\text{FC},4}$ & -1.6 & 0.007 & -0.008 & -0.009 & -0.002 & 0.002 & -0.006 & 0.016 & -0.006\tabularnewline
$\theta_{\text{FC},5}$ & -1.5 & 0.011 & -0.009 & -0.010 & -0.002 & 0.005 & -0.007 & 0.017 & -0.008\tabularnewline
$\theta_{\text{RS}}$ & 1.0 & 0.012 & 0.007 & 0.006 & -0.006 & -0.006 & 0.007 & -0.063 & 0.008\tabularnewline
$\theta_{\text{RN}}$ & 1.0 & 0.056 & 0.024 & 0.017 & -0.026 & -0.027 & 0.025 & -0.261 & 0.030\tabularnewline
$\theta_{\text{EC}}$ & 1.0 & -0.002 & -0.001 & -0.001 & 0.004 & 0.001 & -0.002 & 0.027 & -0.003\tabularnewline
\hline 
MSE & True & 1-NPL & 1-EPL & 2-NPL & 2-EPL & 3-NPL & 3-EPL & $\infty$-NPL & $\infty$-EPL\tabularnewline
\hline 
$\theta_{\text{FC},1}$ & -1.9 & 0.016 & 0.015 & 0.015 & 0.015 & 0.016 & 0.015 & 0.014 & 0.015\tabularnewline
$\theta_{\text{FC},2}$ & -1.8 & 0.016 & 0.015 & 0.015 & 0.014 & 0.015 & 0.015 & 0.013 & 0.015\tabularnewline
$\theta_{\text{FC},3}$ & -1.7 & 0.016 & 0.015 & 0.015 & 0.014 & 0.015 & 0.015 & 0.013 & 0.015\tabularnewline
$\theta_{\text{FC},4}$ & -1.6 & 0.016 & 0.015 & 0.015 & 0.014 & 0.015 & 0.015 & 0.012 & 0.015\tabularnewline
$\theta_{\text{FC},5}$ & -1.5 & 0.018 & 0.017 & 0.017 & 0.016 & 0.016 & 0.016 & 0.013 & 0.017\tabularnewline
$\theta_{\text{RS}}$ & 1.0 & 0.029 & 0.023 & 0.023 & 0.019 & 0.018 & 0.021 & 0.010 & 0.021\tabularnewline
$\theta_{\text{RN}}$ & 1.0 & 0.468 & 0.365 & 0.359 & 0.310 & 0.282 & 0.327 & 0.162 & 0.335\tabularnewline
$\theta_{\text{EC}}$ & 1.0 & 0.008 & 0.007 & 0.007 & 0.006 & 0.006 & 0.006 & 0.004 & 0.006\tabularnewline
\hline 
\multicolumn{2}{l}{Iterations} & 1-NPL & 1-EPL & 2-NPL & 2-EPL & 3-NPL & 3-EPL & $\infty$-NPL & $\infty$-EPL\tabularnewline
\hline 
\multicolumn{2}{l}{Median} & 1 & 1 & 2 & 2 & 3 & 3 & 100 & 5\tabularnewline
\multicolumn{2}{l}{Max} & 1 & 1 & 2 & 2 & 3 & 3 & 100 & 30\tabularnewline
\multicolumn{2}{l}{IQR} &  &  &  &  &  &  & 75.5 & 2\tabularnewline
\multicolumn{2}{l}{Non-Conv.} &  &  &  &  &  &  & 61.2\% & 0\%\tabularnewline
\hline 
\multicolumn{2}{l}{Time (sec.)} & 1-NPL & 1-EPL & 2-NPL & 2-EPL & 3-NPL & 3-EPL & $\infty$-NPL & $\infty$-EPL\tabularnewline
\hline 
\multicolumn{2}{l}{Total} & 18.26 & 120.53 & 32.92 & 239.61 & 47.60 & 358.72 & 1030.36 & 672.18\tabularnewline
\multicolumn{2}{l}{Mean} & 0.02 & 0.12 & 0.03 & 0.24 & 0.05 & 0.36 & 1.03 & 0.67\tabularnewline
\multicolumn{2}{l}{Median} & 0.02 & 0.12 & 0.03 & 0.24 & 0.05 & 0.36 & 1.44 & 0.60\tabularnewline
\multicolumn{2}{l}{Med./Iter.} & 0.018 & 0.120 & 0.016 & 0.119 & 0.016 & 0.119 & 0.014 & 0.118\tabularnewline
\hline 
\end{tabular}

\caption{\label{tab:exp2small}Monte Carlo Results for \citet{AgMir2007}:
Experiment 2 ($\theta_{\text{RN}}=2.5)$, Small Sample ($N=1600$)}
\end{table}

\begin{table}
\begin{tabular}{lrrrrrrrrr}
\hline 
\hline Bias & True & 1-NPL & 1-EPL & 2-NPL & 2-EPL & 3-NPL & 3-EPL & $\infty$-NPL & $\infty$-EPL\tabularnewline
\hline 
$\theta_{\text{FC},1}$ & -1.9 & -0.002 & -0.000 & -0.001 & -0.002 & -0.005 & -0.002 & 0.006 & -0.002\tabularnewline
$\theta_{\text{FC},2}$ & -1.8 & 0.002 & -0.002 & -0.003 & -0.002 & -0.002 & -0.002 & 0.008 & -0.002\tabularnewline
$\theta_{\text{FC},3}$ & -1.7 & 0.006 & -0.004 & -0.005 & -0.002 & 0.000 & -0.003 & 0.009 & -0.003\tabularnewline
$\theta_{\text{FC},4}$ & -1.6 & 0.012 & -0.004 & -0.005 & -0.001 & 0.004 & -0.002 & 0.010 & -0.002\tabularnewline
$\theta_{\text{FC},5}$ & -1.5 & 0.019 & -0.004 & -0.005 & -0.001 & 0.007 & -0.003 & 0.008 & -0.003\tabularnewline
$\theta_{\text{RS}}$ & 1.0 & -0.001 & 0.001 & 0.001 & -0.003 & -0.005 & 0.001 & -0.040 & 0.001\tabularnewline
$\theta_{\text{RN}}$ & 1.0 & 0.003 & 0.002 & -0.003 & -0.013 & -0.018 & 0.001 & -0.163 & 0.002\tabularnewline
$\theta_{\text{EC}}$ & 1.0 & 0.003 & -0.000 & 0.000 & 0.000 & 0.000 & -0.001 & 0.015 & -0.001\tabularnewline
\hline 
MSE & True & 1-NPL & 1-EPL & 2-NPL & 2-EPL & 3-NPL & 3-EPL & $\infty$-NPL & $\infty$-EPL\tabularnewline
\hline 
$\theta_{\text{FC},1}$ & -1.9 & 0.004 & 0.004 & 0.004 & 0.004 & 0.004 & 0.004 & 0.004 & 0.004\tabularnewline
$\theta_{\text{FC},2}$ & -1.8 & 0.004 & 0.004 & 0.004 & 0.004 & 0.004 & 0.004 & 0.003 & 0.004\tabularnewline
$\theta_{\text{FC},3}$ & -1.7 & 0.004 & 0.004 & 0.004 & 0.004 & 0.004 & 0.004 & 0.003 & 0.004\tabularnewline
$\theta_{\text{FC},4}$ & -1.6 & 0.004 & 0.004 & 0.004 & 0.004 & 0.004 & 0.004 & 0.003 & 0.004\tabularnewline
$\theta_{\text{FC},5}$ & -1.5 & 0.004 & 0.004 & 0.004 & 0.004 & 0.004 & 0.004 & 0.004 & 0.004\tabularnewline
$\theta_{\text{RS}}$ & 1.0 & 0.007 & 0.006 & 0.006 & 0.005 & 0.005 & 0.006 & 0.003 & 0.006\tabularnewline
$\theta_{\text{RN}}$ & 1.0 & 0.109 & 0.096 & 0.095 & 0.085 & 0.084 & 0.087 & 0.048 & 0.088\tabularnewline
$\theta_{\text{EC}}$ & 1.0 & 0.002 & 0.002 & 0.002 & 0.002 & 0.002 & 0.002 & 0.001 & 0.002\tabularnewline
\hline 
\multicolumn{2}{l}{Iterations} & 1-NPL & 1-EPL & 2-NPL & 2-EPL & 3-NPL & 3-EPL & $\infty$-NPL & $\infty$-EPL\tabularnewline
\hline 
\multicolumn{2}{l}{Median} & 1 & 1 & 2 & 2 & 3 & 3 & 100 & 4\tabularnewline
\multicolumn{2}{l}{Max} & 1 & 1 & 2 & 2 & 3 & 3 & 100 & 8\tabularnewline
\multicolumn{2}{l}{IQR} &  &  &  &  &  &  & 52 & 1\tabularnewline
\multicolumn{2}{l}{Non-Conv.} &  &  &  &  &  &  & 68.9\% & 0\%\tabularnewline
\hline 
\multicolumn{2}{l}{Time (sec.)} & 1-NPL & 1-EPL & 2-NPL & 2-EPL & 3-NPL & 3-EPL & $\infty$-NPL & $\infty$-EPL\tabularnewline
\hline 
\multicolumn{2}{l}{Total} & 35.34 & 135.62 & 64.32 & 266.26 & 92.87 & 395.24 & 2186.69 & 586.21\tabularnewline
\multicolumn{2}{l}{Mean} & 0.04 & 0.14 & 0.06 & 0.27 & 0.09 & 0.40 & 2.19 & 0.59\tabularnewline
\multicolumn{2}{l}{Median} & 0.03 & 0.13 & 0.06 & 0.26 & 0.09 & 0.39 & 2.75 & 0.53\tabularnewline
\multicolumn{2}{l}{Med./Iter.} & 0.034 & 0.135 & 0.032 & 0.132 & 0.031 & 0.131 & 0.028 & 0.130\tabularnewline
\hline 
\end{tabular}

\caption{\label{tab:exp2large}Monte Carlo Results for \citet{AgMir2007}:
Experiment 2 ($\theta_{\text{RN}}=2.5)$, Large Sample ($N=6400$)}
\end{table}

In Experiment 2, we begin to see a divergence between the two methods.
As reported by \citet{AgMarcoux2019}, the spectral radius of the
population NPL operator is slightly larger than one for this specification.
In finite random samples, sometimes the sample counterpart is stable
and sometimes it is unstable. This leads to the situation illustrated
by Table~\ref{tab:exp2small}, where $\infty$-NPL fails to converge
in 612 of 1000 replications. $\infty$-EPL, on the other hand, is
stable and converges for all 1000 replications. Importantly, the 1-NPL
estimator obtained without further iterations is always consistent.
However, there is substantial bias in the $\infty$-NPL estimates.
In an apparent contradiction, the MSE for $\theta_{\text{RN}}$ is
actually lower for $\infty$-NPL than for $\infty$-EPL. This pattern
of larger bias and lower MSE is seen again with the large sample size
in Table~\ref{tab:exp2large}, however it can be understood simply
by recalling the histogram of the $\theta_{\text{RN}}$ estimates
(panel (b) of Figure~\ref{fig:ParameterHistograms}). The $\infty$-NPL
sampling distribution appears to be truncated near 2.4, which perhaps
not coincidentally is near the value where the spectral radius exceeds
one (\citet{AgMarcoux2019}). Since this happens to be close to the
true parameter value, the MSE is artificially low. However, the sampling
distribution is neither normally distributed nor centered at the true
value. A K-S test for normality of the $\infty$-NPL estimates has
a $p$-value equal to zero up to three decimal places, while for $\infty$-EPL
the $p$-value is 0.622.

In Experiment 2 we also see a reversal of the order of computational
times: the non-convergent $\infty$-NPL cases require more iterations
and more time per iteration, while $\infty$-EPL always converges
in 8 or fewer iterations. Thus, in thinking about the trade-off between
robustness and computational time we should also consider convergence.
A non-convergent estimator may take longer and yield worse results
in the end. $k$-EPL does require more time per iteration in this
model, but it is more robust to the strength of competition in the
model.

\begin{table}
\begin{tabular}{lrrrrrrrrr}
\hline 
\hline Bias & True & 1-NPL & 1-EPL & 2-NPL & 2-EPL & 3-NPL & 3-EPL & $\infty$-NPL & $\infty$-EPL\tabularnewline
\hline 
$\theta_{\text{FC},1}$ & -1.9 & -0.100 & 0.010 & 0.008 & -0.002 & -0.095 & 0.002 & 0.012 & 0.002\tabularnewline
$\theta_{\text{FC},2}$ & -1.8 & -0.088 & 0.000 & -0.001 & -0.000 & -0.074 & 0.002 & 0.031 & 0.002\tabularnewline
$\theta_{\text{FC},3}$ & -1.7 & -0.066 & -0.010 & -0.011 & 0.000 & -0.040 & 0.001 & 0.062 & 0.002\tabularnewline
$\theta_{\text{FC},4}$ & -1.6 & -0.016 & -0.016 & -0.017 & 0.004 & 0.026 & 0.002 & 0.131 & 0.002\tabularnewline
$\theta_{\text{FC},5}$ & -1.5 & 0.050 & -0.007 & -0.005 & 0.009 & 0.156 & 0.000 & 0.211 & -0.000\tabularnewline
$\theta_{\text{RS}}$ & 1.0 & 0.042 & -0.005 & -0.019 & -0.009 & -0.108 & 0.001 & -0.244 & 0.001\tabularnewline
$\theta_{\text{RN}}$ & 1.0 & 0.198 & -0.067 & -0.147 & -0.055 & -0.640 & 0.004 & -1.382 & 0.008\tabularnewline
$\theta_{\text{EC}}$ & 1.0 & -0.009 & 0.025 & 0.031 & 0.010 & 0.073 & 0.001 & 0.218 & -0.000\tabularnewline
\hline 
MSE & True & 1-NPL & 1-EPL & 2-NPL & 2-EPL & 3-NPL & 3-EPL & $\infty$-NPL & $\infty$-EPL\tabularnewline
\hline 
$\theta_{\text{FC},1}$ & -1.9 & 0.038 & 0.022 & 0.022 & 0.023 & 0.032 & 0.023 & 0.019 & 0.023\tabularnewline
$\theta_{\text{FC},2}$ & -1.8 & 0.034 & 0.020 & 0.020 & 0.021 & 0.026 & 0.021 & 0.018 & 0.021\tabularnewline
$\theta_{\text{FC},3}$ & -1.7 & 0.030 & 0.020 & 0.019 & 0.019 & 0.021 & 0.020 & 0.020 & 0.020\tabularnewline
$\theta_{\text{FC},4}$ & -1.6 & 0.028 & 0.019 & 0.019 & 0.019 & 0.019 & 0.019 & 0.032 & 0.019\tabularnewline
$\theta_{\text{FC},5}$ & -1.5 & 0.039 & 0.022 & 0.021 & 0.019 & 0.041 & 0.019 & 0.057 & 0.019\tabularnewline
$\theta_{\text{RS}}$ & 1.0 & 0.022 & 0.006 & 0.006 & 0.004 & 0.015 & 0.004 & 0.061 & 0.004\tabularnewline
$\theta_{\text{RN}}$ & 1.0 & 0.602 & 0.126 & 0.131 & 0.090 & 0.456 & 0.086 & 1.924 & 0.086\tabularnewline
$\theta_{\text{EC}}$ & 1.0 & 0.022 & 0.007 & 0.007 & 0.005 & 0.009 & 0.005 & 0.051 & 0.005\tabularnewline
\hline 
\multicolumn{2}{l}{Iterations} & 1-NPL & 1-EPL & 2-NPL & 2-EPL & 3-NPL & 3-EPL & $\infty$-NPL & $\infty$-EPL\tabularnewline
\hline 
\multicolumn{2}{l}{Median} & 1 & 1 & 2 & 2 & 3 & 3 & 100 & 5\tabularnewline
\multicolumn{2}{l}{Max} & 1 & 1 & 2 & 2 & 3 & 3 & 100 & 9\tabularnewline
\multicolumn{2}{l}{IQR} &  &  &  &  &  &  & 0 & 1\tabularnewline
\multicolumn{2}{l}{Non-Conv.} &  &  &  &  &  &  & 100\% & 0\%\tabularnewline
\hline 
\multicolumn{2}{l}{Time (sec.)} & 1-NPL & 1-EPL & 2-NPL & 2-EPL & 3-NPL & 3-EPL & $\infty$-NPL & $\infty$-EPL\tabularnewline
\hline 
\multicolumn{2}{l}{Total} & 18.64 & 125.88 & 33.88 & 246.85 & 48.96 & 367.25 & 1512.68 & 654.05\tabularnewline
\multicolumn{2}{l}{Mean} & 0.02 & 0.13 & 0.03 & 0.25 & 0.05 & 0.37 & 1.51 & 0.65\tabularnewline
\multicolumn{2}{l}{Median} & 0.02 & 0.12 & 0.03 & 0.24 & 0.05 & 0.36 & 1.48 & 0.62\tabularnewline
\multicolumn{2}{l}{Med./Iter.} & 0.018 & 0.124 & 0.017 & 0.122 & 0.016 & 0.120 & 0.015 & 0.119\tabularnewline
\hline 
\end{tabular}

\caption{\label{tab:exp3small}Monte Carlo Results for \citet{AgMir2007}:
Experiment 3 ($\theta_{\text{RN}}=4)$, Small Sample ($N=1600$)}
\end{table}

\begin{table}
\begin{tabular}{lrrrrrrrrr}
\hline 
\hline Bias & True & 1-NPL & 1-EPL & 2-NPL & 2-EPL & 3-NPL & 3-EPL & $\infty$-NPL & $\infty$-EPL\tabularnewline
\hline 
$\theta_{\text{FC},1}$ & -1.9 & -0.098 & 0.008 & 0.007 & 0.000 & -0.098 & 0.000 & 0.013 & 0.000\tabularnewline
$\theta_{\text{FC},2}$ & -1.8 & -0.086 & -0.003 & -0.003 & 0.000 & -0.078 & 0.000 & 0.031 & 0.000\tabularnewline
$\theta_{\text{FC},3}$ & -1.7 & -0.063 & -0.015 & -0.016 & 0.001 & -0.045 & -0.001 & 0.062 & -0.001\tabularnewline
$\theta_{\text{FC},4}$ & -1.6 & -0.013 & -0.024 & -0.025 & 0.004 & 0.017 & 0.000 & 0.130 & 0.000\tabularnewline
$\theta_{\text{FC},5}$ & -1.5 & 0.057 & -0.025 & -0.023 & 0.009 & 0.139 & 0.000 & 0.207 & 0.000\tabularnewline
$\theta_{\text{RS}}$ & 1.0 & 0.034 & 0.011 & -0.003 & -0.010 & -0.096 & 0.000 & -0.243 & 0.000\tabularnewline
$\theta_{\text{RN}}$ & 1.0 & 0.155 & 0.024 & -0.054 & -0.057 & -0.570 & -0.001 & -1.377 & 0.000\tabularnewline
$\theta_{\text{EC}}$ & 1.0 & 0.001 & 0.011 & 0.016 & 0.008 & 0.060 & 0.000 & 0.216 & 0.000\tabularnewline
\hline 
MSE & True & 1-NPL & 1-EPL & 2-NPL & 2-EPL & 3-NPL & 3-EPL & $\infty$-NPL & $\infty$-EPL\tabularnewline
\hline 
$\theta_{\text{FC},1}$ & -1.9 & 0.016 & 0.005 & 0.005 & 0.005 & 0.015 & 0.005 & 0.005 & 0.005\tabularnewline
$\theta_{\text{FC},2}$ & -1.8 & 0.014 & 0.005 & 0.005 & 0.005 & 0.011 & 0.005 & 0.005 & 0.005\tabularnewline
$\theta_{\text{FC},3}$ & -1.7 & 0.011 & 0.005 & 0.005 & 0.005 & 0.007 & 0.005 & 0.008 & 0.005\tabularnewline
$\theta_{\text{FC},4}$ & -1.6 & 0.007 & 0.006 & 0.006 & 0.005 & 0.005 & 0.005 & 0.020 & 0.005\tabularnewline
$\theta_{\text{FC},5}$ & -1.5 & 0.012 & 0.006 & 0.006 & 0.005 & 0.024 & 0.005 & 0.046 & 0.005\tabularnewline
$\theta_{\text{RS}}$ & 1.0 & 0.006 & 0.002 & 0.001 & 0.001 & 0.010 & 0.001 & 0.060 & 0.001\tabularnewline
$\theta_{\text{RN}}$ & 1.0 & 0.167 & 0.037 & 0.034 & 0.023 & 0.335 & 0.023 & 1.900 & 0.023\tabularnewline
$\theta_{\text{EC}}$ & 1.0 & 0.005 & 0.002 & 0.002 & 0.001 & 0.004 & 0.001 & 0.048 & 0.001\tabularnewline
\hline 
\multicolumn{2}{l}{Iterations} & 1-NPL & 1-EPL & 2-NPL & 2-EPL & 3-NPL & 3-EPL & $\infty$-NPL & $\infty$-EPL\tabularnewline
\hline 
\multicolumn{2}{l}{Median} & 1 & 1 & 2 & 2 & 3 & 3 & 100 & 5\tabularnewline
\multicolumn{2}{l}{Max} & 1 & 1 & 2 & 2 & 3 & 3 & 100 & 6\tabularnewline
\multicolumn{2}{l}{IQR} &  &  &  &  &  &  & 0 & 1\tabularnewline
\multicolumn{2}{l}{Non-Conv.} &  &  &  &  &  &  & 100\% & 0\%\tabularnewline
\hline 
\multicolumn{2}{l}{Time (sec.)} & 1-NPL & 1-EPL & 2-NPL & 2-EPL & 3-NPL & 3-EPL & $\infty$-NPL & $\infty$-EPL\tabularnewline
\hline 
\multicolumn{2}{l}{Total} & 40.48 & 145.93 & 75.22 & 278.93 & 108.36 & 410.64 & 3332.77 & 620.36\tabularnewline
\multicolumn{2}{l}{Mean} & 0.04 & 0.15 & 0.08 & 0.28 & 0.11 & 0.41 & 3.33 & 0.62\tabularnewline
\multicolumn{2}{l}{Median} & 0.04 & 0.14 & 0.07 & 0.28 & 0.11 & 0.41 & 3.30 & 0.65\tabularnewline
\multicolumn{2}{l}{Med./Iter.} & 0.039 & 0.145 & 0.037 & 0.138 & 0.035 & 0.136 & 0.033 & 0.133\tabularnewline
\hline 
\end{tabular}

\caption{\label{tab:exp3large}Monte Carlo Results for \citet{AgMir2007}:
Experiment 3 ($\theta_{\text{RN}}=4)$, Large Sample ($N=6400$)}
\end{table}

Turning to Experiment 3, we increase the effect of competition even
further to where $\theta_{\text{RN}}=4$, well beyond the point where
$k$-NPL becomes unstable. In this case, both with small samples and
large samples, $\infty$-NPL fails to converge in all 1000 replications
but $\infty$-EPL converges in all 1000 replications. In these cases,
the bias in $\infty$-NPL estimates is larger and in this case and
so is the MSE, since the value of $\theta_{\text{RN}}$ is farther
from the point of truncation than in Experiment 2. The $\infty$-NPL
estimator systematically underestimates the competitive effect $\theta_{\text{RN}}$
and overestimates the entry cost $\theta_{\text{EC}}$.

Overall, across the three experiments the performance of $k$-EPL
is stable and of similar quality despite the increasing competitive
effect. This result agrees with our theoretical analysis showing that
$k$-EPL is stable, convergent, and efficient.

Finally, we note that with the $k$-EPL estimator there is very little
performance improvement after the first three iterations. The performance
of $\infty$-EPL is achieved already, up to two decimal places, by
3-EPL. Thus, for this model one can reduce the computational time
required while retaining efficiency and robustness by carrying out
only a few iterations of $k$-EPL.

\section{\label{sec:Empirical-Application}Application to U.S. Wholesale Club
Competition}

The U.S. wholesale club store industry is a retail segment that offers
members a wide range of merchandise at discounted prices. The industry
is dominated by three major players: Sam's Club, Costco, and BJ's
Wholesale Club. These companies operate as membership-based clubs,
with a focus on bulk purchasing and high-volume sales.

The modern wholesale club store industry emerged in the 1970s with
the founding of Price Club in San Diego, California. Price Club was
a pioneer in the industry, offering its members deep discounts on
products, including groceries, electronics, and household goods. In
1983, Costco was founded in Seattle, Washington, and quickly became
a major player in the industry. Sam's Club, a subsidiary of Walmart,
was also founded in 1983. BJ's Wholesale Club was founded in 1984
in Massachusetts. In 1993, Price Club merged with Costco and adopted
the Costco name (\citet{costco-about-us}).

Today, the wholesale club store industry is a major force in retail,
with the three main players generating over \$260 billion in annual
revenue, according to data from their 2021 annual reports. Costco
is the largest company in the industry, with over 800 stores worldwide
and annual revenue of over \$190 billion (\citet{costco-2021}). Sam's
Club is the second-largest, with around 600 stores and annual revenue
of over \$60 billion (\citet{sams-2021}). BJ's Wholesale Club is
the smallest of the three, with over 200 stores and annual revenue
of over \$16 billion (\citet{bj-2021}).

Although we focus on Costco, Sam's Club, and BJ's, which are by far
the largest firms in the industry, there are other smaller players.
An example is DirectBuy, which was the 4th largest firm (by total
markets served) in our sample. While it was a significant player in
the home furnishings and home improvement space, it was not a direct
competitor of Costco, Sam's Club, and BJ's and was significantly smaller.
DirectBuy faced financial difficulties and filed for bankruptcy in
2018 (\citet{furniture-today}).

\subsection{Data}

Our data come from the Data Axle Business Database which contains
information on businesses across the United States. The company uses
a variety of sources to gather information on businesses, including
public records, government filings, and proprietary data sources.
Our data begins in 2009 and ends in 2021. From this database we first
extract records for each Sam's Club, Costco, and BJ's location. We
augment this with ZIP code level population data obtained from NHGIS
(\citet{manson2022ipums}). We then aggregate to county-level markets
using the 2021 ZIP code to county crosswalk files provided by the
U.S. Department of Housing and Urban Development (HUD). We consider
all counties in the 50 states of the United States and the District
of Columbia that have between 20,000 and 600,000 residents. This serves
to exclude very small counties that would clearly not be considered
for entry as well as some very large, atypical markets.

Overall, our final sample consists of $N=1,600$ counties observed
over $T=12$ years.\footnote{Although we have 13 years of data, we require lagged actions to construct
the incumbency status state variable. As a result, the time dimension
of our sample is reduced to $T=12$.} Among these markets, the average peak population during our sample
was 104,841 with standard deviation 112,759.\footnote{We compute peak population for a market as the maximum over the years
in the sample.} In the model, our market size variable, $s_{it}$, is the logarithm
of population discretized into 5 equal bins.\footnote{We also tried 10 bins without any major substantive changes in the
results.} Table \ref{tab:summary_staistics} presents summary statistics for
our sample. On average, there are 0.348 active firms in each market
with a standard deviation of 0.622. The autoregressive coefficient
for the number of active firms is 0.987, indicating a strong positive
correlation between the number of active firms in the current period
and the previous period.\footnote{This refers to the estimated autoregressive coefficient in an AR(1)
regression of the number of current active firms on the number of
active firms in previous period.} The average number of entrants is 0.010, while the average number
of exits is 0.006. Excess turnover, defined as (\#entrants + \#exits)
- |\#entrants - \#exits|, is effectively zero. The correlation between
entries and exits is -0.007. The probability of being active is highest
for Sam's Club at 0.201, followed by Costco at 0.093 and BJ's at 0.054.
The distribution of market size is such that there are relatively
more small-to-medium size markets markets and relatively fewer large
markets.

\begin{table}[htbp]
\centering\caption{Wholesale Clubs: Summary Statistics}

\label{tab:summary_staistics} %
\begin{tabular}{lr}
\hline 
\hline Statistic & Value\tabularnewline
\hline 
Average active firms & 0.348\tabularnewline
S.D. active firms & 0.622\tabularnewline
AR(1) for active firms & 0.987\tabularnewline
Average entrants & 0.010\tabularnewline
Average exits & 0.006\tabularnewline
Excess turnover & 0.000\tabularnewline
Correlation between entries and exits & -0.007\tabularnewline
\hline 
Probability of being active & \tabularnewline
\quad{}Sam's Club & 0.201\tabularnewline
\quad{}Costco & 0.093\tabularnewline
\quad{}BJ's & 0.054\tabularnewline
\hline 
Distribution of market size & \tabularnewline
\quad{}$s=1$ & 0.332\tabularnewline
\quad{}$s=2$ & 0.295\tabularnewline
\quad{}$s=3$ & 0.179\tabularnewline
\quad{}$s=4$ & 0.125\tabularnewline
\quad{}$s=5$ & 0.069\tabularnewline
\hline 
Markets & 1,610\tabularnewline
Years & 12\tabularnewline
Observations (Markets $\times$ Years) & 19,320\tabularnewline
\hline 
\end{tabular}
\end{table}

\subsection{Model}

Our model of wholesale club competition follows the dynamic oligopoly
model with heterogeneous firms described in Example \ref{exa:Wholesale-Club-Store}
and also used in our Monte Carlo experiments in Section \ref{sec:Monte-Carlo-Simulations}.
In our application the firms are denoted $\mathcal{J}=\{\text{SC},\text{CC},\text{BJ}\}$.
In each market $i=1,\dots,N$ and time period $t=1,\dots,T$, firms
decide whether to operate in a market ($a_{it}^{j}=1$) or not ($a_{it}^{j}=0$).
The profit function has the following form for an active firm $j$:
\[
\bar{u}^{j}(x_{it},a_{it}^{j}=1,a_{it}^{-j};\theta)=\theta_{\text{FC},j}+\theta_{\text{RS}}s_{it}-\theta_{\text{RN}}\ln\left(1+\sum_{l\neq j}a_{it}^{l}\right)-\theta_{\text{EC}}(1-a_{i,t-1}^{j}).
\]
Here, $\theta_{FC,j}$ is the fixed cost of operation for firm j,
$\theta_{EC}$ is the cost incurred by a new entrant, $\theta_{RS}$
represents the effect of market size $s_{it}$ (the discretized logarithm
of population, defined above), and $\theta_{RN}$ captures the effect
of competition. When firm $j$ is inactive, $\bar{u}^{j}(x_{it},a_{it}^{j}=0,a_{it}^{-j};\theta)=0$.

\subsection{Structural Parameter Estimates}

Using $k$-EPL, we estimate the heterogeneous fixed costs $\theta_{\text{FC},\text{SC}}$,
$\theta_{\text{FC},\text{CC}}$, and $\theta_{\text{FC},\text{BJ}}$
as well as the entry cost $\theta_{\text{EC}}$, the coefficient on
market size $\theta_{\text{RS}}$, and the competitive effect $\theta_{\text{EC}}$
parameters of the model.

Table~\ref{tab:app:epl} reports the point estimates from the observed
sample along with standard errors and 95\% confidence intervals estimated
using 250 cross-sectional bootstrap replications. The estimates all
have the expected sign and are all significantly different from zero
at the 5\% level. Fixed costs are lowest for Sam's Club and highest
for BJ's. Firms are more profitable in larger markets and entry by
competitors reduces profits. The entry cost is large relative to fixed
costs, as expected.

\begin{table}[h]
\centering \caption{Wholesale Clubs: Parameter Estimates}
\label{tab:app:epl} %
\begin{tabular}{lrrr}
\hline 
\hline Parameter & Estimate & S.E. & 95\% CI\tabularnewline
\hline 
$\theta_{\text{FC},\text{SC}}$ & -0.136 & (0.030) & {[}-0.196, -0.079{]}\tabularnewline
$\theta_{\text{FC},\text{CC}}$ & -0.130 & (0.031) & {[}-0.188, -0.073{]}\tabularnewline
$\theta_{\text{FC},\text{BJ}}$ & -0.197 & (0.030) & {[}-0.255, -0.143{]}\tabularnewline
$\theta_{\text{RS}}$ & 0.106 & (0.009) & {[}0.090, 0.124{]}\tabularnewline
$\theta_{\text{RN}}$ & 0.137 & (0.030) & {[}0.087, 0.210{]}\tabularnewline
$\theta_{\text{EC}}$ & 8.855 & (0.163) & {[}8.559, 9.186{]}\tabularnewline
\hline 
\end{tabular}
\end{table}

We note that for this application $k$-NPL yields very similar results
to $k$-EPL. The estimated competitive effect is small, implying that
$k$-NPL is likely to be stable. However, we could not know this a
priori (recall that in our Monte Carlo Experiments, the $k$-NPL estimates
of the competitive effect are biased towards zero). However, ex post,
with the stable $k$-EPL estimates in hand, we can understand the
performance of $k$-NPL this setting.

\subsection{Counterfactual}

The industry under investigation is characterized by a significant
number of monopoly markets. In the latest year of our sample, 2021,
we observed a mere 14 triopoly markets (less than 1\% of our sample).
Only 7\% of the markets had a duopoly, with two firms present, while
20\% of the markets were monopolies, where a single firm operated.
Interestingly, 72\% of counties in our sample had no wholesale club
stores at all in 2021. Our counterfactual exercise aims to explore
the reasons behind this relative scarcity of duopoly and triopoly
markets. Are strong competitive effects or high costs responsible?

To address this question, we conduct a counterfactual simulation in
which we entirely eliminate the competitive effect in the model, allowing
firms to operate as independent agents without considering their competitors'
actions. If we observe a substantial increase in new entries, we may
deduce that strong competition is deterring other firms from entering
the market. Conversely, if we notice minimal change in entry behavior,
it may suggest that competitive effects are insignificant, and costs
are the primary factor driving firms' entry decisions.

To investigate this, we take the estimated structural parameters from
the observed sample, set $\theta_{\text{RN}}=0$, and compute the
counterfactual equilibrium. We solve the nonlinear system of equilibrium
equations using the estimated equilibrium as the starting value. Subsequently,
we compare the results of simulations conducted under this counterfactual
scenario to those obtained using the estimated model parameters. We
perform this procedure for each bootstrap replication to calculate
standard errors for the counterfactual quantities of interest. For
every simulation, we employ the observed market configuration at the
beginning of the sample in 2009 and simulate until the end of our
sample period in 2021.

In Table~\ref{tab:counterfactual_1}, we present aggregate statistics
from simulations using both the estimated and counterfactual parameters,
where the competitive effect has been set to zero. The reported figures
represent the estimated means and standard errors derived from simulated
sample paths using the parameters from the 250 bootstrap replications.
Before delving into the counterfactual analysis, it is worth noting
that this can also serve as a test of the model's fit. We observe
that the average number of active firms, entries, and exits, as well
as the distribution of the number of firms present, align reasonably
well with the observed values from the data. Upon examining the counterfactual,
we discover that removing the effect of competition does not significantly
reduce the number of unserved markets (from 1164 to 1156). However,
it does shift the distribution away from monopoly markets (from 341
down to 300) towards a higher prevalence of duopoly and triopoly markets
(from 93 up to 118, and from 11 up to 34, respectively). These findings
indicate that competition in this industry is relatively weak, and
costs play a more significant role in determining entry behavior.

\begin{table}[h]
\centering \caption{Wholesale Clubs: Counterfactual (Aggregate)}
\label{tab:counterfactual_1} %
\begin{tabular}{lrrrrr}
\toprule 
\hline  &  & \multicolumn{2}{c}{Simulated} & \multicolumn{2}{c}{Counterfactual}\tabularnewline
 & Observed & Mean & S.E. & Mean & S.E.\tabularnewline
\midrule 
Active Firms & 0.348 & 0.349 & (0.017) & 0.398 & (0.023)\tabularnewline
Entries & 0.010 & 0.010 & (0.001) & 0.016 & (0.002)\tabularnewline
Exits & 0.006 & 0.006 & (0.001) & 0.005 & (0.001)\tabularnewline
\midrule 
Markets with &  &  &  &  & \tabularnewline
\quad{}0 Firms & 1156 & 1164.665 & (27.805) & 1156.434 & (28.576)\tabularnewline
\quad{}1 Firms & 321 & 340.598 & (24.424) & 300.371 & (22.650)\tabularnewline
\quad{}2 Firms & 119 & 93.426 & (12.663) & 118.948 & (15.066)\tabularnewline
\quad{}3 Firms & 14 & 11.311 & (4.940) & 34.247 & (9.687)\tabularnewline
\bottomrule
\end{tabular}
\end{table}

Table~\ref{tab:counterfactual_2} provides a more detailed breakdown
of our simulations by firm and market size state $s$. We estimate
that, on average, Sam's Club would enter approximately 3 additional
small markets ($s=1,2,3$) and around 15 additional large markets
($s=4,5$). Costco would enter, on average, 4 additional small markets
and 28 large markets. BJ's would enter, on average, 3 additional small
markets and 27 large markets. Consequently, eliminating the competitive
effect appears to have a proportionally much larger impact on BJ's
compared to Costco and Sam's Club. This effect is also considerably
more pronounced in larger markets than in smaller ones.

\begin{table}[p]
\centering \caption{Wholesale Clubs: Counterfactual (By Market Size)}
\label{tab:counterfactual_2} %
\begin{tabular}{@{}lrrrrrr}
\hline 
\hline  & \multicolumn{2}{c}{Sam's Club} & \multicolumn{2}{c}{Costco} & \multicolumn{2}{c}{BJ's}\tabularnewline
\cmidrule(lr){2-3} \cmidrule(lr){4-5} \cmidrule(lr){6-7} & Mean & S.E. & Mean & S.E. & Mean & S.E.\tabularnewline
\hline 
\multicolumn{7}{l}{Estimated}\tabularnewline
\quad{}$s=1$ & 3.103 & (1.359) & 3.876 & (1.522) & 1.474 & (0.854)\tabularnewline
\quad{}$s=2$ & 19.384 & (4.153) & 12.480 & (3.134) & 5.739 & (2.011)\tabularnewline
\quad{}$s=3$ & 86.238 & (8.263) & 24.618 & (4.534) & 18.299 & (4.044)\tabularnewline
\quad{}$s=4$ & 127.656 & (11.218) & 54.597 & (7.234) & 29.962 & (5.098)\tabularnewline
\quad{}$s=5$ & 81.056 & (8.290) & 59.769 & (7.252) & 33.458 & (5.734)\tabularnewline
\hline 
\multicolumn{7}{l}{Counterfactual}\tabularnewline
\quad{}$s=1$ & 3.169 & (1.441) & 3.943 & (1.799) & 1.494 & (0.819)\tabularnewline
\quad{}$s=2$ & 19.892 & (4.232) & 12.889 & (3.268) & 5.784 & (2.208)\tabularnewline
\quad{}$s=3$ & 88.295 & (8.959) & 28.575 & (4.945) & 21.138 & (4.094)\tabularnewline
\quad{}$s=4$ & 135.488 & (11.393) & 70.326 & (8.880) & 41.387 & (7.177)\tabularnewline
\quad{}$s=5$ & 87.731 & (8.902) & 72.334 & (8.468) & 49.016 & (8.026)\tabularnewline
\hline 
\end{tabular}
\end{table}

Our analysis also incorporates average profit simulations using both
the estimated parameters and the counterfactual parameters. To achieve
this, for each simulation, we calculate $v^{j}(x,a^{j})+e(x,a^{j};P^{j})$
for each firm and average it over each simulated sample path of states
and choices.\footnote{We simulate actions by sampling from the distribution defined by $P^{j}$,
so $e(x,a^{j};P^{j})\equiv E[\varepsilon^{j}(a^{j})\mid x,a^{j},P^{j}]$
must be included to capture the contribution of $\varepsilon^{j}$
to average profit. With Type 1 Extreme Value shocks, $e(x,a^{j};P^{j})=-\ln P^{j}(x,a^{j})+\bar{\gamma}$,
where $\bar{\gamma}$ is the Euler-Mascheroni constant.} We perform this separately for the simulations based on the model
estimates and those under the counterfactual parameters. Comparing
these two unit-less measures of profitability offers another perspective
on our research question.

The results of this exercise are reported in Table~\ref{tab:counterfactual_3}.
The resulting changes in average profits are relatively minor: a 2.0\%
increase for Sam's Club, a 2.5\% increase for Costco, and a 1.9\%
increase for BJ's. These results provide further insights regarding
the relative significance of competition and costs in determining
market entry behavior. The modest changes in average profits under
the counterfactual scenario, where competitive effects are removed,
suggest that competition does not play a dominant role in shaping
firms' entry decisions. Instead, this reinforces our earlier findings
that costs are a more substantial determinant of entry behavior in
this industry.

\begin{table}[h]
\centering \caption{Wholesale Clubs: Counterfactual (Profits)}
\label{tab:counterfactual_3} %
\begin{tabular}{lrrrr}
\hline 
\hline  & \multicolumn{2}{c}{Simulated} & \multicolumn{2}{c}{Counterfactual}\tabularnewline
 & Mean & S.E. & Mean & S.E.\tabularnewline
\hline 
Sam's Club & 20.342 & (0.114) & 20.753 & (0.155)\tabularnewline
Costco & 19.566 & (0.080) & 20.051 & (0.146)\tabularnewline
BJ's & 19.020 & (0.051) & 19.384 & (0.131)\tabularnewline
\hline 
\end{tabular}
\end{table}

\section{\label{sec:Conclusion}Conclusion}

We proposed an iterative $k$-step Efficient Pseudo-Likelihood ($k$-EPL)
estimation sequence that extends the attractive econometric and computational
properties of the single-agent $k$-NPL sequence to games. The nice
econometric properties arise because $k$-EPL uses Newton-like steps
on the fixed point constraint at each iteration. As a result, $k$-EPL
is stable for all regular Markov perfect equilibria, each EPL iteration
has the same limiting distribution as the MLE, and further iterations
achieve higher-order equivalence and quickly converge to the finite-sample
MLE almost surely. Computational advantages follow from defining the
equilibrium conditions with choice-specific value functions, with
standard modeling assumptions reducing each EPL iteration to two steps:
(i) solving linear systems to generate pseudo-regressors, followed
by (ii) solving a globally concave static logit/probit maximum likelihood
problem using the pseudo-regressors.

In a real-world application, we use $k$-EPL to investigate the effect
of competition on entry and exit of U.S. wholesale club stores. Our
estimated model indicates that competition among wholesale club stores
has a relatively mild effect on their entry and exit.Our Monte Carlo
simulations show that $k$-EPL performs favorably in finite samples,
is robust to data-generating processes where standard $k$-NPL encounters
serious problems, and scales better than other iterative alternatives
to $k$-NPL.

One limitation of our analysis is that we did not consider time-invariant
unobserved heterogeneity in estimating dynamic discrete games. $k$-EPL
can easily accommodate a proxy variable approach (e.g., \citet{CollardWexler2013}),
where an observed time-invariant variable is used to proxy for the
the time-invariant unobserved heterogeneity. Without a proxy variable,
it may also be possible to modify the $k$-EPL algorithm to incorporate
time-invariant unobserved heterogeneity while preserving computational
convenience and econometric efficiency. However, we leave such a substantial
and challenging extension as an avenue for future research.

\newpage

\begin{center}
\textbf{Acknowledgements}\\
\end{center}

We thank J\'er\^ome Adda (editor) and three anonymous referees for
numerous comments which greatly improved the paper. We also thank
Dan Ackerberg, Victor Aguirregabiria, Lanier Benkard, Chris Conlon,
Arvind Magesan, Mathieu Marcoux, Robert Miller, Salvador Navarro,
John Rust, and Eduardo Souza-Rodrigues for comments and insightful
discussions. The paper has benefited from comments by seminar participants
at Boston College, Cornell University (Johnson), Georgetown University,
University of Notre Dame, University of Pennsylvania (Wharton), University
of Toronto, the 2019 International Industrial Organization Conference
(IIOC, Boston), the 2019 University of Calgary Empirical Microeconomics
Workshop (Banff), the 2020 ASSA Annual Meeting (San Diego), the Spring
2020 $\textrm{IO}^{2}$ Seminar, the 2020 Econometric Society World
Congress (Milan), the 2021 North American Meeting of the Econometric
Society, and the 2023 International Association for Applied Econometrics
conference (Oslo).

\bigskip{}

\begin{center}
\textbf{Data Availability Statement}\\
\end{center}

The data and code underlying this research is available on Zenodo
at \url{https://dx.doi.org/10.5281/zenodo.10582004}.

\bigskip{}

\bibliographystyle{chicago}
\bibliography{Biblio/bib1}

\newpage{}

\appendix

\section{Proofs}

Proofs are presented in order of appearance in the main text.

\subsection{Proof of Lemma \ref{lem:Rep-Lemma}}

We show that $v=\Phi(\theta,v)$ is a Bellman-like representation
of the best-response equilibrium conditions, $P=\sigma(\theta,P)$.
First, note that $P^{j}=\Lambda^{j}(v^{j})$ for all $j$. Using the
definition of $\Phi(\cdot)$ from Section \ref{sec:Dynamic-Discrete-Game}
and stacking all states for player $j$, we have 
\begin{align*}
\Phi^{j}(a^{j};v^{j},v^{-j},\theta) & =u^{j}(a^{j};\Lambda^{-j}(v^{-j}),\theta_{u})+\beta F^{j}(a^{j};\Lambda^{-j}(v^{-j});\theta_{f})S(v^{j}).
\end{align*}
 We see that $v^{-j}$ only influences $\Phi^{j}(\cdot)$ through
its effect on $P^{-j}=\Lambda^{-j}(v^{-j})$ and we can then define
$\phi^{j}(a^{j};\theta,v^{j},P^{-j})=\Phi^{j}(a^{j};\theta,v^{j},v^{-j})$.
It is straightforward to show that $||\nabla_{v^{j}}\phi^{j}||_{\infty}=\beta<1$.
To see this, note that 
\begin{align*}
\frac{\partial\phi^{j}(x,a^{j};\theta,v^{j},P^{-j})}{\partial v^{j}(x',a')} & =\beta f^{j}(x'\mid x,a^{j},P^{-j})P^{j}(x',a').\\
 & =\beta\times Pr(x',{a'}^{j}\mid x,a^{j},P^{-j}),
\end{align*}
where ${a'}^{j}$ denotes the future action of player $j$. So, we
have $\nabla_{v^{j}}\phi^{j}(v^{j})=\beta E^{j}(P^{-j})$, where $E^{j}(P^{-j})$
is a row-stochastic matrix with entries that represent transition
probabilities between state-action pairs: $E^{j}(P^{-j})\{k,l\}=Pr((x',{a'}^{j})=l\mid(x,a^{j})=k,P^{-j})$.
Since $E^{j}$ is row-stochastic, we have $||E^{j}(P^{-j})||_{\infty}=1$,
and it follows that $||\nabla_{v^{j}}\phi^{j}(v^{j})||_{\infty}=||\beta E^{j}(P^{-j})||_{\infty}=\beta||E^{j}(P^{-j})||_{\infty}=\beta$.
Consequently, $\phi^{j}(\cdot)$ is a Bellman-like contraction in
$v^{j}$ (fixing $\theta$ and $P^{-j}$) with a unique fixed point.
Player $j$'s best response is $\sigma^{j}(\theta,P^{-j})=\Lambda^{j}(v^{j})$
where $v^{j}=\phi^{j}(\theta,v^{j},P^{-j})$. Imposing $v=\Phi(\theta,v)$
is therefore equivalent to imposing $P=\sigma(\theta,P)$; their sets
of fixed points for a given $\theta$ are isomorphic.

\subsection{Proof of Lemma \ref{lem:Newt-Props}}

Let $\gamma=\left(\breve{\theta},Y\right)$ (to explicitly distinguish
$\theta$ from $\breve{\theta}$), so that 
\begin{align*}
\Upsilon(\theta,\gamma) & =\Upsilon\left(\theta,\left(\breve{\theta},Y\right)\right)\\
 & =Y-\left(\nabla_{Y}G\left(\breve{\theta},Y\right)\right)^{-1}G(\theta,Y).
\end{align*}
 Additionally, let $\gamma_{\theta}=(\theta,Y_{\theta})$. Then we
have 
\[
\Upsilon(\theta,\gamma_{\theta})=Y_{\theta}-\left(\nabla_{Y}G(\theta,Y_{\theta})\right)^{-1}G(\theta,Y_{\theta}).
\]
 Result 1 follows immediately because $\left(\nabla_{Y}G(\theta,Y_{\theta})\right)^{-1}$
is non-singular and $Y_{\theta}=Y(\theta)$ by definition, so $\Upsilon(\theta,\gamma_{\theta})=Y_{\theta}$
if and only if $G(\theta,Y_{\theta})=0$.

For Results 2 and 3, first consider partial derivatives of $\Upsilon(\theta,\gamma)$,
evaluated at $(\theta,\gamma_{\theta})$: 
\begin{align*}
\nabla_{\theta}\Upsilon(\theta,\gamma)\bigg|_{(\theta,\gamma)=(\theta,(\theta,Y_{\theta}))} & =-\left(\nabla_{Y}G\left(\breve{\theta},Y\right)\right)^{-1}\nabla_{\theta}G(\theta,Y)\quad\bigg|_{(\theta,\gamma)=(\theta,(\theta,Y_{\theta}))}\\
 & =-\left(\nabla_{Y}G\left(\theta,Y_{\theta}\right)\right)^{-1}\nabla_{\theta}G(\theta,Y_{\theta}).
\end{align*}
 Implicit differentiation on $G(\theta,Y_{\theta})=0$ yields
\begin{align*}
Y'(\theta) & =-(\nabla_{Y}G(\theta,Y_{\theta}))^{-1}\nabla_{\theta}G(\theta,Y_{\theta}),
\end{align*}
 proving Result 2. For Result 3, first note that we have 
\begin{equation}
\nabla_{\gamma}\Upsilon(\theta,\gamma)=\left[\begin{array}{c}
-\frac{\partial\nabla_{\gamma}G(\gamma)^{-1}}{\partial\theta}G(\theta,Y)\\
I-\nabla_{Y}G(\breve{\theta},Y)^{-1}\nabla_{Y}G(\theta,Y)-\frac{\partial\nabla_{\gamma}G(\gamma)^{-1}}{\partial\gamma}G(\theta,Y)
\end{array}\right]^{'}.\label{eq:Upsilon-jacobian}
\end{equation}
 This then leads to 
\begin{align*}
\nabla_{\gamma}\Upsilon\left(\theta,\gamma\right)\bigg|_{(\theta,\gamma)=(\theta,(\theta,Y_{\theta}))} & =\left[\begin{array}{c}
-\frac{\partial\nabla_{\gamma}G(\gamma)^{-1}}{\partial\theta}G(\theta,Y)\\
I-\nabla_{Y}G(\breve{\theta},Y)^{-1}\nabla_{Y}G(\theta,Y)-\frac{\partial\nabla_{\gamma}G(\gamma)^{-1}}{\partial\gamma}G(\theta,Y)
\end{array}\right]^{'}\bigg|_{(\theta,\gamma)=(\theta,(\theta,Y_{\theta}))}\\
 & =\left[\begin{array}{c}
-\frac{\partial\nabla_{\gamma}G(\gamma_{\theta})^{-1}}{\partial\theta}G(\theta,Y_{\theta})\\
I-\nabla_{Y}G(\theta,Y_{\theta})^{-1}\nabla_{Y}G(\theta,Y_{\theta})-\frac{\partial\nabla_{\gamma}G(\gamma_{\theta})^{-1}}{\partial\gamma}G(\theta,Y_{\theta})
\end{array}\right]^{'}\\
 & =\left[\begin{array}{c}
-\frac{\partial\nabla_{\gamma}G(\gamma_{\theta})^{-1}}{\partial\theta}\times0\\
I-I-\frac{\partial\nabla_{\gamma}G(\gamma_{\theta})^{-1}}{\partial\gamma}\times0
\end{array}\right]^{'}\\
 & =0.
\end{align*}
 where the second-to-last equality arises because $G(\theta,Y_{\theta})=0$.

\subsection{Proof of Theorem \ref{thm:kEPL-Properties}}

The proofs of Results 1 and 2 adapt the proofs of consistency and
asymptotic normality for the $1$-NPL estimator from \citet{AgMir2007}
to an inductive proof for $k$-EPL.\footnote{See the proofs of Propositions 1 and 2 in the Appendix of \citet{AgMir2007}.}
We do this by showing that strong $\sqrt{N}$-consistency of $\hat{\gamma}_{k-1}=(\hat{\theta}_{k-1},\hat{Y}_{k-1})$
implies the results for $\hat{\gamma}_{k}=(\hat{\theta}_{k},\hat{Y}_{k})$.
The proof of Result 3 follows the arguments very similar to those
used in the proofs of Proposition 2 of \citet{KasShim2008} and Proposition
7 in the supplementary material for \citet{KasShim2012}. Throughout,
we rely heavily on analysis similar to that from the proof of Lemma
\ref{lem:Newt-Props}.

It is helpful up-front to define $\tilde{q}_{i}(\theta,\gamma)=q_{i}(\theta,\Upsilon(\theta,\gamma))$,
$\tilde{Q}_{N}(\theta,\gamma)=N^{-1}\sum_{i=1}^{N}\tilde{q}_{i}(\theta,\gamma)$,
and $\tilde{\theta}_{N}(\gamma)=\textrm{arg max}_{\theta}\medspace\tilde{Q}_{N}(\theta,\gamma)$.
Similarly, $\tilde{Q}^{*}(\theta,\gamma)=E[\tilde{q}_{i}(\theta,\gamma)]$
and $\tilde{\theta}^{*}(\gamma)=\textrm{arg max}_{\theta}\medspace\tilde{Q}^{*}(\theta,\gamma)$.
Then, $\hat{\theta}_{k}=\tilde{\theta}_{N}(\hat{\gamma}_{k-1})$ and
$\hat{Y}_{k}=\Upsilon(\hat{\theta}_{k},\hat{\gamma}_{k-1})$.

\subsubsection{Result 1 (Strong consistency of $\hat{\theta}_{k}$ and $\hat{Y}_{k}$)}

We have uniform continuity of $\tilde{Q}^{*}(\theta,\gamma)$ and
that $\tilde{Q}_{N}(\theta,\gamma)$ converges almost surely and uniformly
in $(\theta,\gamma)\in\Theta\times(\Theta\times\mathcal{Y})$ to $\tilde{Q}^{*}(\theta,\gamma)$.
Also, $\hat{\gamma}_{k-1}$ converges almost surely to $\gamma^{*}$.
Appealing to Lemma 24.1 of \citet{GourierouxMonfort1995v2}, these
imply that $\tilde{Q}_{N}(\theta,\hat{\gamma}_{k-1})$ converges almost
surely and uniformly in $\theta\in\Theta$ to $\tilde{Q}^{*}(\theta,\gamma^{*})$.
Then since $\theta^{*}$ uniquely maximizes $\tilde{Q}^{*}(\theta,\gamma^{*})$
on $\Theta$, $\hat{\theta}_{k}$ converges almost surely to $\theta^{*}$
\citep[Property 24.2]{GourierouxMonfort1995v2}. Continuity of $\Upsilon(\theta,\gamma)$
and the Mann-Wald theorem then give almost sure convergence of $\hat{Y}_{k}$
to $Y^{*}$.

\subsubsection{Result 2 (Asymptotic Distribution of $\hat{\theta}_{k}$ and $\hat{Y}_{k}$)}

We will show that consistency of $\hat{\gamma}_{k-1}$ leads to asymptotic
normality of $\hat{\theta}_{k}$ and $\hat{Y}_{k}$, with their asymptotic
variance the same as the MLE. Using the properties of $\Upsilon(\cdot)$
defined in Algorithm \ref{alg:EPL-Algorithm}, the chain rule, and
the generalized information matrix equality \citep[p. 2163]{McFadNewey1994}
we obtain the following population equalities:
\[
\begin{aligned}\nabla_{\theta\theta'}\tilde{Q}^{*}(\theta^{*},\gamma^{*})= & -\Omega_{\theta\theta}^{*},\\
\nabla_{\theta\gamma'}\tilde{Q}^{*}(\theta^{*},\gamma^{*})= & 0.
\end{aligned}
\]

To establish these, first recall that $\tilde{Q}^{*}(\theta,\gamma)=\mathrm{{E}}[\ln f(w\mid\theta,\Upsilon(\theta,\gamma))]$
and let $\gamma=(\breve{\theta},\,Y)$ denote the components of $\gamma$
(to explicitly distinguish $\theta$ from $\breve{\theta}$). Then
by the generalized information matrix equality we have
\begin{align*}
\nabla_{\theta\theta'}\tilde{Q}^{*}(\theta^{*},\gamma^{*}) & =E[\nabla_{\theta\theta'}\ln f(w\mid\theta^{*},\Upsilon(\theta^{*},\gamma^{*}))]\\
 & =-E[\nabla_{\theta}\ln f(w\mid\theta^{*},\Upsilon(\theta^{*},\gamma^{*}))\nabla_{\theta}\ln f(w\mid\theta^{*},\Upsilon(\theta^{*},\gamma^{*}))']\\
 & =-\Omega_{\theta\theta}^{*},
\end{align*}
where the final equality is a direct implication of Lemma \ref{lem:Newt-Props}
(Result 2) because $\gamma^{*}=\gamma_{\theta^{*}}$. We also have
\begin{align*}
\nabla_{\theta\gamma'}\tilde{Q}^{*}(\theta^{*},\gamma^{*}) & =E[\nabla_{\theta\gamma'}\ln f(w\mid\theta^{*},\Upsilon(\theta^{*},\gamma^{*}))]\\
 & =-E[\nabla_{\theta}\ln f(w\mid\theta^{*},\Upsilon(\theta^{*},\gamma^{*}))\nabla_{\gamma}\ln f(w\mid\theta^{*},\Upsilon(\theta^{*},\gamma^{*}))']\\
 & =0.
\end{align*}
The last equality follows from the chain rule and Lemma \ref{lem:Newt-Props}:
\begin{align*}
\nabla_{\gamma}\ln f(w\mid\theta^{*},\Upsilon(\theta^{*},\gamma^{*})) & =\frac{{1}}{f(w\mid\theta^{*},\Upsilon(\theta^{*},\gamma^{*}))}\nabla_{Y}f(w\mid\theta^{*},\Upsilon(\theta^{*},\gamma^{*}))\nabla_{\gamma}\Upsilon(\theta^{*},\gamma^{*})\\
 & =0
\end{align*}
 where the second equality arises because $\nabla_{\gamma}\Upsilon(\theta^{*},\gamma^{*})=0$
by Lemma \ref{lem:Newt-Props} (Result 3).

Turning to the sample objective function, a Taylor expansion of the
first-order condition gives
\begin{align*}
0 & =\nabla_{\theta}\tilde{Q}_{N}(\theta^{*},\gamma^{*})+\nabla_{\theta\theta'}\tilde{Q}_{N}(\bar{\theta},\bar{\gamma})(\hat{\theta}_{k}-\theta^{*})+\nabla_{\theta\gamma'}\tilde{Q}_{N}(\bar{\theta},\bar{\gamma})(\hat{\gamma}_{k-1}-\gamma^{*}).
\end{align*}
Solving and scaling then yields
\begin{align*}
\sqrt{N}(\hat{\theta}_{k}-\theta^{*}) & =-\nabla_{\theta\theta'}\tilde{Q}_{N}(\bar{\theta},\bar{\gamma})^{-1}\left[\sqrt{N}\nabla_{\theta}\tilde{Q}_{N}(\theta^{*},\gamma^{*})+\nabla_{\theta\gamma'}\tilde{Q}_{N}(\bar{\theta},\bar{\gamma})\sqrt{N}(\hat{\gamma}_{k-1}-\gamma^{*})\right].
\end{align*}
By consistency of $\hat{\gamma}_{k-1}$ and $\hat{\theta}_{k}$ and
the Mann-Wald theorem we have $\nabla_{\theta\theta'}\tilde{Q}_{N}(\bar{\theta},\bar{\gamma})\overset{a.s.}{\to}-\Omega_{\theta\theta}^{*}$
and by the central limit theorem, $\sqrt{N}\nabla_{\theta}\tilde{Q}_{N}(\theta^{*},\gamma^{*})\overset{d}{\to}\mathcal{\mathrm{N}}(0,\Omega_{\theta\theta}^{*})$.
For the last term in square brackets we have $\sqrt{N}(\hat{\gamma}_{k-1}-\gamma^{*})=O_{p}(1)$
and $\nabla_{\theta\gamma'}\tilde{Q}_{N}(\bar{\theta},\bar{\gamma})=o_{p}(1)$.
Therefore,
\[
\sqrt{N}(\hat{\theta}_{k}-\theta^{*})\overset{d}{\to}\mathcal{\mathrm{N}}(0,\Omega_{\theta\theta}^{*-1}).
\]
Furthermore, because $\hat{Y}_{k}=\Upsilon(\hat{\theta}_{k},\hat{\gamma}_{k-1})$,
with $\Upsilon$ twice continuously differentiable in a neighborhood
of $(\theta^{*},Y^{*})$, consistency and asymptotic normality of
$\hat{Y}_{k}$ follow immediately. Asymptotic equivalence of $\hat{Y}_{k}$
and $\hat{Y}_{\text{MLE}}$ follow from asymptotic equivalence of
$\hat{\theta}_{k}$ and $\hat{\theta}_{\text{MLE}}$ and the properties
of $\Upsilon$. Strong $\sqrt{N}$-consistency of $\hat{\gamma}_{0}$
completes the proof by induction.

\subsubsection{Result 3 (Large Sample Convergence)}

This result follows from Kasahara and Shimotsu (\citeyear{KasShim2012},
Proposition 1). The zero Jacobian property ensures that the required
spectral radius is equal to zero: as established in Result 2 above
$\nabla_{\theta\gamma}\tilde{Q}^{*}(\theta^{*},\gamma^{*})=0$, so
the spectral radius is also zero.

\subsection{Proof of Theorem \ref{thm:Finite-Sample-Properties}}

By examining the first-order conditions, we can see that $\hat{\theta}_{\text{MLE}}=\tilde{\theta}_{N}(\hat{\gamma}_{\text{MLE}})$,
so that $\hat{Y}_{\text{MLE}}=\Upsilon(\tilde{\theta}_{N}(\hat{\gamma}_{\text{MLE}}),\hat{\gamma}_{\text{MLE}})$.
This proves Result 1: the MLE is a fixed point of the EPL iterations.

Now let $H_{N}$ denote the EPL iteration mapping by stacking the
updating equations for $\theta$ and $Y$ so that $\hat{\gamma}_{k}=H_{N}(\hat{\gamma}_{k-1})$:

\[
H_{N}(\gamma)=\left[\begin{array}{c}
H_{1,N}(\gamma)\\
H_{2,N}(\gamma)
\end{array}\right]\equiv\left[\begin{array}{c}
\tilde{\theta}_{N}(\gamma)\\
\Upsilon(\tilde{\theta}_{N}(\gamma),\gamma)
\end{array}\right].
\]
We then consider the first-order conditions evaluated at $\hat{\gamma}_{\text{MLE}}$.
First we have
\begin{equation}
\nabla_{\gamma}H_{1,N}(\hat{\gamma}_{\text{MLE}})=\nabla_{\gamma}\tilde{\theta}_{N}(\hat{\gamma}_{\text{MLE}})\overset{a.s.}{\to}\nabla_{\gamma}\tilde{\theta}^{*}(\gamma^{*})=0\label{eq:H1deriv}
\end{equation}
because $\nabla_{\theta\gamma'}\tilde{Q}^{*}(\theta^{*},\gamma^{*})=0$,
as shown in the proof of Theorem \ref{thm:kEPL-Properties} (Result
2). Second, we have
\[
\nabla_{\gamma}H_{2,N}(\hat{\gamma}_{\text{MLE}})=\nabla_{\theta}\Upsilon(\hat{\theta}_{\text{MLE}},\hat{\gamma}_{\text{MLE}})\nabla_{\gamma}\tilde{\theta}_{N}(\hat{\gamma}_{\text{MLE}})+\nabla_{\gamma}\Upsilon(\hat{\theta}_{\text{MLE}},\hat{\gamma}_{\text{MLE}}).
\]
recalling that $\tilde{\theta}_{N}(\hat{\gamma}_{\text{MLE}})=\hat{\theta}_{\text{MLE}}$.
The first term converges to zero in probability as in (\ref{eq:H1deriv}).
The second term is zero due to Lemma \ref{lem:Newt-Props} (Result
3).

The above analysis implies that $\hat{\gamma}_{\text{MLE}}=H_{N}(\hat{\gamma}_{\text{MLE}})$
and $\nabla_{\gamma}H_{N}(\hat{\gamma}_{\text{MLE}})\overset{a.s.}{\to}0$,
which are key to Results 2 and 3. To obtain Result 2, note that the
$\sqrt{N}$-consistency of $\hat{\gamma}_{\text{MLE}}$ implies $\nabla_{\gamma}H_{N}(\hat{\gamma}_{\text{MLE}})=0+O_{p}(N^{-1/2})$,
due to continuity of $\nabla_{\gamma}H_{N}(\cdot)$. So, by a first-order
expansion around $\hat{\gamma}_{\text{MLE}}$, 
\begin{align*}
\hat{\gamma}_{k} & =H_{N}(\hat{\gamma}_{k-1})\\
 & =H_{N}(\hat{\gamma}_{\text{MLE}})+\nabla_{\gamma}H_{N}(\hat{\gamma}_{\text{MLE}})(\hat{\gamma}_{k-1}-\hat{\gamma}_{\text{MLE}})+O_{p}(||\hat{\gamma}_{k-1}-\hat{\gamma}_{\text{MLE}}||^{2})\\
 & =\hat{\gamma}_{\text{MLE}}+(0+O_{p}(N^{-1/2}))(\hat{\gamma}_{k-1}-\hat{\gamma}_{\text{MLE}})+O_{p}(||\hat{\gamma}_{k-1}-\hat{\gamma}_{\text{MLE}}||^{2}).
\end{align*}
 It follows that 
\[
\hat{\gamma}_{k}-\hat{\gamma}_{\text{MLE}}=O_{p}(N^{-1/2}||\hat{\gamma}_{k-1}-\hat{\gamma}_{\text{MLE}}||+||\hat{\gamma}_{k-1}-\hat{\gamma}_{\text{MLE}}||^{2}).
\]
 For Result 3, we appeal to continuity of $\nabla_{\gamma}H_{N}(\cdot)$
and $||\cdot||$. For any $\varepsilon>0$, if $||\nabla_{\gamma}H_{N}(\hat{\gamma}_{\text{MLE}})||<\varepsilon$,
then there exists some neighborhood around $\hat{\gamma}_{\text{MLE}}$,
$\mathcal{B}$, such that $H_{N}(\cdot)$ is a contraction mapping
on $\mathcal{B}$ with Lipschitz constant, $L<\varepsilon$, and fixed
point $\hat{\gamma}_{\text{MLE}}$ (as established in Result 1). We
have $\nabla_{\gamma}H_{N}(\hat{\gamma}_{\text{MLE}})\overset{a.s.}{\to}0$,
so that $||\nabla_{\gamma}H_{N}(\hat{\gamma}_{\text{MLE}})||<\varepsilon$
w.p.a. 1 as $N\to\infty$. Result 3 follows immediately.

\section{\label{sec:Additional-Monte-Carlo}Additional Monte Carlo Results}

\subsection{\citet{PesSD2008}: Dynamic Game with Multiple Equilibria}

Here, we conduct simulations where we estimate the model from \citet{PesSD2008},
a small-scale dynamic game with multiple possible equilibria. There
are two firms indexed by $j\in\{1,2\}$ who choose an action in each
market $i$, denoted $a_{i}^{j}\in\{0,1\}$, where 1 is entry and
0 is exit. The observed state variable $x_{i}=(x_{i}^{1},x_{i}^{2})\in\{(0,0),(0,1),(1,0),(1,1)\}$
represents the incumbency status of firms 1 and 2, respectively. Flow
utilities are period profits:
\[
\tilde{u}^{j}(x_{i},a_{i}^{j}=1)=\theta_{\mathrm{M}}+\theta_{\mathrm{C}}a_{i}^{-j}+\theta_{\mathrm{EC}}(1-x_{i}^{j})+\varepsilon_{1}^{j},
\]
\[
\tilde{u}^{j}(x_{i},a_{i}^{j}=0)=\theta_{\mathrm{SV}}x_{i}^{j}+\varepsilon_{0}^{j},
\]
where $\theta_{\mathrm{EC}}$ represents the entry cost, $\theta_{\mathrm{SV}}$
is the scrap value, $\theta_{\mathrm{M}}$ is the monopoly profit,
and $\theta_{\mathrm{C}}$ is the effect of competition on profit.
The discount factor is $\beta^{*}\in(0,1)$. The data are generated
using the parameter values $(\theta_{\mathrm{M}},\theta_{\mathrm{C}},\theta_{\mathrm{EC}},\theta_{\mathrm{SV}},\beta)=(1.2,-2.4,-0.2,0.1,0.9)$.
The private shocks have distribution $\varepsilon_{a}^{j}\sim\mathcal{\mathrm{N}}(0,0.5)$.
We note that this is a slightly different parameterization of the
model than the one used by \citet{PesSD2008}, but it is straightforward
to show that the resulting flow utility values are the same and hence
the equilibria are also the same.

There are multiple equilibria in the game, and we generate data from
equilibria (i), (ii), and (iii) from \citet{PesSD2008}. The NPL mapping
is unstable for two of the three equilibria, but the EPL mapping,
due to its Newton-like form, is stable for all three equilibria. Specifically,
equilibrium (i) is asymmetric and NPL-stable, equilibrium (ii) is
asymmetric and NPL-unstable, and equilibrium (iii) is symmetric and
NPL-unstable. In their Monte Carlo simulations, \citet{PesSD2008}
find that $k$-NPL performs well for equilibrium (i) but becomes severely
biased for equilibria (ii) and (iii) as $k$ grows, which is in line
with the theoretical analysis of \citet{KasShim2012}.\footnote{\citet{PesSD2008} use the terminology ``$k$-PML'' for $k$-NPL
and iterate until $k=20$.} In contrast, we expect that $k$-EPL will perform well for all three
equilibria.

We estimate $(\theta_{\mathrm{M}},\theta_{\mathrm{C}},\theta_{\mathrm{EC}})$
and assume the other parameter values are known and held fixed at
$\theta_{SV}=0.1$ and $\beta=0.9$. We present results for estimation
using $k$-NPL and $k$-EPL. The initial estimates of the conditional
choice probabilities are sample frequencies, $\hat{P}^{j}(x,a^{j})$.
We generate the data by first taking $N\in\{250,1000\}$ i.i.d. draws
from the stationary distribution of the observed state, $x$, for
each equilibrium. One interpretation of this sampling procedure is
that each of the draws from the stationary distribution of $x$ represents
an independent market. For each of these $N$ draws we then sample
actions for each player using the equilibrium choice probabilities.
We carry out 1000 replications for each sample size. For $\infty$-NPL
and $\infty$-EPL we terminate the algorithm when $|\hat{\theta}_{k}-\hat{\theta}_{k-1}|<10^{-6}$
or after 100 iterations. Computational times reported are minutes
of ``wall clock'' time required to carry out the full set of replications.\footnote{Experiments were carried out using MATLAB R2018a on a 2017 iMac Pro
in parallel using 18 Intel Xeon 2.3 GHz cores.}

In order to implement $k$-EPL in this context, we use the fixed-point
constraint in choice-specific value function space, defined in Section
\ref{sec:Dynamic-Discrete-Game}. Computing $\nabla_{v}G(\hat{\theta}_{k-1},\hat{v}_{k-1})$
for $k=1$ requires initial estimates $(\hat{\theta}_{0},\hat{v}_{0})$.
We use $\hat{\theta}_{0}=\hat{\theta}_{1\textrm{-NPL}}$, the estimate
from $1$-NPL, which is similar to the way \citet{PesSD2008} use
$\hat{\theta}_{1\textrm{-NPL}}$ to obtain an estimate of the efficient
weighting matrix used in their minimum-distance estimator. For each
player, we then set $\hat{v}_{0}^{j}(x,a^{j})=u^{j}(x,a^{j};\hat{\theta}_{0},\hat{P}^{-j})+\beta f^{j}(x,a^{j};\hat{P}^{-j})'\Gamma^{j}(\hat{\theta}_{0},\hat{P})$,
where 
\[
\Gamma^{j}(\theta,P)=\left(I-\beta F^{j}(\theta,P)\right)^{-1}\sum_{a^{j}}P^{j}(a^{j})*\left(u^{j}(a^{j};\theta,P^{-j})+e(a^{j};P^{j})\right)
\]
 maps $(\theta,P)$ into an \emph{ex-ante} value function for player
$j$, as in Aguirregabiria and Mira (\citeyear{AgMir2007}, p. 10).

\subsubsection{Results for NPL-Stable Equilibrium (i)}

Table \ref{tab:PSD-Eq1} shows results for equilibrium (i), for which
$k$-NPL is stable and consistent. We consider both the one-step $k$-NPL
and $k$-EPL estimators (for $k=1$) as well as the converged estimators
($k=\infty$). For $k=\infty$, we report the total estimation time
across all datasets, as well as the median and interquartile range
(IQR) of the number of iterations. For the large sample experiments,
we obtained convergence in fewer than 100 iterations for all algorithms
in almost all datasets, with $\infty$-NPL and $\infty$-EPL failing
to converge in only 5 and 1 out of 1000 datasets, respectively. Convergence
rates were somewhat lower, especially for $\infty$-NPL, with the
smaller sample size. Our reported results include all datasets, including
those where we obtain non-convergence.

\begin{table}[h]
\caption{\label{tab:PSD-Eq1}Monte Carlo Results for \citet{PesSD2008} NPL-Stable
Equilibrium (i)}

\centering{}%
\begin{tabular}{ccccr@{\extracolsep{0pt}.}lr@{\extracolsep{0pt}.}lr@{\extracolsep{0pt}.}lr@{\extracolsep{0pt}.}lr@{\extracolsep{0pt}.}lr@{\extracolsep{0pt}.}lr@{\extracolsep{0pt}.}l}
\hline 
\hline Obs. & Parameter & Statistic &  & \multicolumn{2}{c}{1-NPL} & \multicolumn{2}{c}{} & \multicolumn{2}{c}{1-EPL} & \multicolumn{2}{c}{} & \multicolumn{2}{c}{$\infty$-NPL} & \multicolumn{2}{c}{} & \multicolumn{2}{c}{$\infty$-EPL}\tabularnewline
\hline 
\multirow{10}{*}{\begin{turn}{90}
$N=250$
\end{turn}} & \multirow{2}{*}{$\theta_{\mathrm{M}}=1.2$} & Mean Bias &  & -0&0579 & \multicolumn{2}{c}{} & -0&0277 & \multicolumn{2}{c}{} & -0&0158 & \multicolumn{2}{c}{} & 0&0277\tabularnewline
 &  & MSE &  & 0&0461 & \multicolumn{2}{c}{} & 0&0376 & \multicolumn{2}{c}{} & 0&0368 & \multicolumn{2}{c}{} & 0&0312\tabularnewline
 & \multirow{2}{*}{$\theta_{\mathrm{C}}=-2.4$} & Mean Bias &  & 0&1120 & \multicolumn{2}{c}{} & 0&0425 & \multicolumn{2}{c}{} & 0&0294 & \multicolumn{2}{c}{} & -0&0482\tabularnewline
 &  & MSE &  & 0&1642 & \multicolumn{2}{c}{} & 0&1061 & \multicolumn{2}{c}{} & 0&0585 & \multicolumn{2}{c}{} & 0&0512\tabularnewline
 & \multirow{2}{*}{$\theta_{\mathrm{EC}}=-0.2$} & Mean Bias &  & -0&0393 & \multicolumn{2}{c}{} & -0&0205 & \multicolumn{2}{c}{} & -0&0270 & \multicolumn{2}{c}{} & -0&0039\tabularnewline
 &  & MSE &  & 0&0494 & \multicolumn{2}{c}{} & 0&0338 & \multicolumn{2}{c}{} & 0&0116 & \multicolumn{2}{c}{} & 0&0045\tabularnewline
 & Converged & \% &  & \multicolumn{2}{c}{} & \multicolumn{2}{c}{} & \multicolumn{2}{c}{} & \multicolumn{2}{c}{} & 92&6\% & \multicolumn{2}{c}{} & 97&5\%\tabularnewline
 & \multirow{2}{*}{Iterations} & Median &  & \multicolumn{2}{c}{} & \multicolumn{2}{c}{} & \multicolumn{2}{c}{} & \multicolumn{2}{c}{} & \multicolumn{2}{c}{70} & \multicolumn{2}{c}{} & \multicolumn{2}{c}{8}\tabularnewline
 &  & IQR &  & \multicolumn{2}{c}{} & \multicolumn{2}{c}{} & \multicolumn{2}{c}{} & \multicolumn{2}{c}{} & \multicolumn{2}{c}{28} & \multicolumn{2}{c}{} & \multicolumn{2}{c}{2}\tabularnewline
 & Time (min.) & Total &  & \multicolumn{2}{c}{} & \multicolumn{2}{c}{} & \multicolumn{2}{c}{} & \multicolumn{2}{c}{} & 0&4481 & \multicolumn{2}{c}{} & 0&1047\tabularnewline
\hline 
\multirow{10}{*}{\begin{turn}{90}
$N=1000$
\end{turn}} & \multirow{2}{*}{$\theta_{\mathrm{M}}=1.2$} & Mean Bias &  & -0&0165 & \multicolumn{2}{c}{} & -0&0050 & \multicolumn{2}{c}{} & -0&0044 & \multicolumn{2}{c}{} & 0&0033\tabularnewline
 &  & MSE &  & 0&0116 & \multicolumn{2}{c}{} & 0&0107 & \multicolumn{2}{c}{} & 0&0083 & \multicolumn{2}{c}{} & 0&0059\tabularnewline
 & \multirow{2}{*}{$\theta_{\mathrm{C}}=-2.4$} & Mean Bias &  & 0&0340 & \multicolumn{2}{c}{} & 0&0119 & \multicolumn{2}{c}{} & 0&0076 & \multicolumn{2}{c}{} & -0&0052\tabularnewline
 &  & MSE &  & 0&0423 & \multicolumn{2}{c}{} & 0&0320 & \multicolumn{2}{c}{} & 0&0106 & \multicolumn{2}{c}{} & 0&0076\tabularnewline
 & \multirow{2}{*}{$\theta_{\mathrm{EC}}=-0.2$} & Mean Bias &  & -0&0127 & \multicolumn{2}{c}{} & -0&0061 & \multicolumn{2}{c}{} & -0&0059 & \multicolumn{2}{c}{} & -0&0012\tabularnewline
 &  & MSE &  & 0&0123 & \multicolumn{2}{c}{} & 0&0086 & \multicolumn{2}{c}{} & 0&0018 & \multicolumn{2}{c}{} & 0&0008\tabularnewline
 & Converged & \% &  & \multicolumn{2}{c}{} & \multicolumn{2}{c}{} & \multicolumn{2}{c}{} & \multicolumn{2}{c}{} & 99&5\% & \multicolumn{2}{c}{} & 99&9\%\tabularnewline
 & \multirow{2}{*}{Iterations} & Median &  & \multicolumn{2}{c}{} & \multicolumn{2}{c}{} & \multicolumn{2}{c}{} & \multicolumn{2}{c}{} & \multicolumn{2}{c}{70} & \multicolumn{2}{c}{} & \multicolumn{2}{c}{6}\tabularnewline
 &  & IQR &  & \multicolumn{2}{c}{} & \multicolumn{2}{c}{} & \multicolumn{2}{c}{} & \multicolumn{2}{c}{} & \multicolumn{2}{c}{19} & \multicolumn{2}{c}{} & \multicolumn{2}{c}{1}\tabularnewline
 & Time (min.) & Total &  & \multicolumn{2}{c}{} & \multicolumn{2}{c}{} & \multicolumn{2}{c}{} & \multicolumn{2}{c}{} & 0&5847 & \multicolumn{2}{c}{} & 0&0743\tabularnewline
\hline 
\end{tabular}
\end{table}

Comparing 1-NPL to 1-EPL in Table \ref{tab:PSD-Eq1}, we see that
1-EPL has lower mean bias and MSE for all three parameters of interest.
However, both of these are outperformed by estimators iterated to
convergence, illustrating the gains from such iterations in finite
samples. For the larger sample size $\infty$-EPL has the lowest bias,
MSE, number of iterations, and computation time. Even for this equilibrium
where $k$-NPL is expected to perform well, the efficiency of $k$-EPL
yields improvements. Time per iteration is similar for $k$-NPL and
$k$-EPL here, so the lower computational times are driven by the
significant reduction in the number of iterations to convergence.\footnote{Each iteration reduces to solving a linear system and then estimating
a static binary probit model.} In this case for the smaller sample size, the results are more mixed.
The mean bias of $k$-EPL is higher (in absolute value) for two parameter
values, but the MSE is lower for all three. However, as before, convergence
is much faster.

\subsubsection{Results for NPL-Unstable Equilibria (ii) and (iii)}

\begin{table}[h]
\caption{\label{tab:PSD-Eq2}Monte Carlo Results for \citet{PesSD2008} NPL-Unstable
Equilibrium (ii)}

\centering{}%
\begin{tabular}{ccccr@{\extracolsep{0pt}.}lr@{\extracolsep{0pt}.}lr@{\extracolsep{0pt}.}lr@{\extracolsep{0pt}.}lr@{\extracolsep{0pt}.}lr@{\extracolsep{0pt}.}lr@{\extracolsep{0pt}.}l}
\hline 
\hline Obs. & Parameter & Statistic &  & \multicolumn{2}{c}{1-NPL} & \multicolumn{2}{c}{} & \multicolumn{2}{c}{1-EPL} & \multicolumn{2}{c}{} & \multicolumn{2}{c}{$\infty$-NPL} & \multicolumn{2}{c}{} & \multicolumn{2}{c}{$\infty$-EPL}\tabularnewline
\hline 
\multirow{10}{*}{\begin{turn}{90}
$N=250$
\end{turn}} & \multirow{2}{*}{$\theta_{\mathrm{M}}=1.2$} & Mean Bias &  & -0&1322 & \multicolumn{2}{c}{} & -0&1461 & \multicolumn{2}{c}{} & -0&2099 & \multicolumn{2}{c}{} & -0&0309\tabularnewline
 &  & MSE &  & 0&0902 & \multicolumn{2}{c}{} & 0&0988 & \multicolumn{2}{c}{} & 0&0622 & \multicolumn{2}{c}{} & 0&0740\tabularnewline
 & \multirow{2}{*}{$\theta_{\mathrm{C}}=-2.4$} & Mean Bias &  & 0&2793 & \multicolumn{2}{c}{} & 0&2617 & \multicolumn{2}{c}{} & 0&6719 & \multicolumn{2}{c}{} & 0&0717\tabularnewline
 &  & MSE &  & 0&4643 & \multicolumn{2}{c}{} & 0&5121 & \multicolumn{2}{c}{} & 0&4804 & \multicolumn{2}{c}{} & 0&4106\tabularnewline
 & \multirow{2}{*}{$\theta_{\mathrm{EC}}=-0.2$} & Mean Bias &  & -0&0777 & \multicolumn{2}{c}{} & -0&0764 & \multicolumn{2}{c}{} & -0&3110 & \multicolumn{2}{c}{} & -0&0441\tabularnewline
 &  & MSE &  & 0&1058 & \multicolumn{2}{c}{} & 0&1270 & \multicolumn{2}{c}{} & 0&1117 & \multicolumn{2}{c}{} & 0&1076\tabularnewline
 & Converged & \% &  & \multicolumn{2}{c}{} & \multicolumn{2}{c}{} & \multicolumn{2}{c}{} & \multicolumn{2}{c}{} & 96&1\% & \multicolumn{2}{c}{} & \multicolumn{2}{c}{100\%}\tabularnewline
 & \multirow{2}{*}{Iterations} & Median &  & \multicolumn{2}{c}{} & \multicolumn{2}{c}{} & \multicolumn{2}{c}{} & \multicolumn{2}{c}{} & \multicolumn{2}{c}{34} & \multicolumn{2}{c}{} & \multicolumn{2}{c}{9}\tabularnewline
 &  & IQR &  & \multicolumn{2}{c}{} & \multicolumn{2}{c}{} & \multicolumn{2}{c}{} & \multicolumn{2}{c}{} & \multicolumn{2}{c}{10} & \multicolumn{2}{c}{} & \multicolumn{2}{c}{3}\tabularnewline
 & Time (min.) & Total &  & \multicolumn{2}{c}{} & \multicolumn{2}{c}{} & \multicolumn{2}{c}{} & \multicolumn{2}{c}{} & 0&3011 & \multicolumn{2}{c}{} & 0&0785\tabularnewline
\hline 
\multirow{10}{*}{\begin{turn}{90}
$N=1000$
\end{turn}} & \multirow{2}{*}{$\theta_{\mathrm{M}}=1.2$} & Mean Bias &  & -0&0432 & \multicolumn{2}{c}{} & -0&0385 & \multicolumn{2}{c}{} & -0&2093 & \multicolumn{2}{c}{} & -0&0013\tabularnewline
 &  & MSE &  & 0&0210 & \multicolumn{2}{c}{} & 0&0205 & \multicolumn{2}{c}{} & 0&0480 & \multicolumn{2}{c}{} & 0&0155\tabularnewline
 & \multirow{2}{*}{$\theta_{\mathrm{C}}=-2.4$} & Mean Bias &  & 0&0952 & \multicolumn{2}{c}{} & 0&0612 & \multicolumn{2}{c}{} & 0&6636 & \multicolumn{2}{c}{} & 0&0047\tabularnewline
 &  & MSE &  & 0&1162 & \multicolumn{2}{c}{} & 0&1078 & \multicolumn{2}{c}{} & 0&4459 & \multicolumn{2}{c}{} & 0&0829\tabularnewline
 & \multirow{2}{*}{$\theta_{\mathrm{EC}}=-0.2$} & Mean Bias &  & -0&0269 & \multicolumn{2}{c}{} & -0&0122 & \multicolumn{2}{c}{} & -0&2983 & \multicolumn{2}{c}{} & -0&0039\tabularnewline
 &  & MSE &  & 0&0283 & \multicolumn{2}{c}{} & 0&0279 & \multicolumn{2}{c}{} & 0&0923 & \multicolumn{2}{c}{} & 0&0222\tabularnewline
 & Converged & \% &  & \multicolumn{2}{c}{} & \multicolumn{2}{c}{} & \multicolumn{2}{c}{} & \multicolumn{2}{c}{} & 99&7\% & \multicolumn{2}{c}{} & \multicolumn{2}{c}{100\%}\tabularnewline
 & \multirow{2}{*}{Iterations} & Median &  & \multicolumn{2}{c}{} & \multicolumn{2}{c}{} & \multicolumn{2}{c}{} & \multicolumn{2}{c}{} & \multicolumn{2}{c}{32} & \multicolumn{2}{c}{} & \multicolumn{2}{c}{7}\tabularnewline
 &  & IQR &  & \multicolumn{2}{c}{} & \multicolumn{2}{c}{} & \multicolumn{2}{c}{} & \multicolumn{2}{c}{} & \multicolumn{2}{c}{6} & \multicolumn{2}{c}{} & \multicolumn{2}{c}{1}\tabularnewline
 & Time (min.) & Total &  & \multicolumn{2}{c}{} & \multicolumn{2}{c}{} & \multicolumn{2}{c}{} & \multicolumn{2}{c}{} & 0&3585 & \multicolumn{2}{c}{} & 0&0946\tabularnewline
\hline 
\end{tabular}
\end{table}

Turning to equilibrium (ii), for which the NPL fixed point is unstable,
we have a very different picture. The results are presented in Table
\ref{tab:PSD-Eq2}. For $\infty$-NPL there is substantial bias in
all parameters and seemingly little variation around those biased
values. For example, there is attenuation bias in the competitive
effect $\theta_{\text{{\ensuremath{\mathrm{C}}}}}$, making it seem
less negative. This bias does not appear to decrease with the sample
size. The bias for $\infty$-EPL is lower by an order of magnitude
in all cases for $N=250$ and by two orders of magnitude for $N=1000$.
Relative to 1-NPL and 1-EPL, iterating in the finite sample improves
the estimates for $\infty$-EPL but worsens the estimates for $\infty$-NPL.

Table \ref{tab:PSD-Eq3} reports some results for equilibrium (iii),
which is also NPL-unstable, and therefore the results are qualitatively
similar to those for equilibrium (ii) presented in Table \ref{tab:PSD-Eq2}.

\begin{table}[h]
\caption{\label{tab:PSD-Eq3}Monte Carlo Results for \citet{PesSD2008} Equilibrium
(iii) with $N=1000$}

\centering{}%
\begin{tabular}{cccccccccc}
\hline 
\hline Parameter & Statistic &  & 1-NPL &  & 1-EPL &  & $\infty$-NPL &  & $\infty$-EPL\tabularnewline
\hline 
\multirow{2}{*}{$\theta_{\mathrm{M}}=1.2$} & Mean Bias &  & -0.0419 &  & -0.0383 &  & -0.2099 &  & -0.0003\tabularnewline
 & MSE &  & 0.0204 &  & 0.0194 &  & 0.0480 &  & 0.0174\tabularnewline
\multirow{2}{*}{$\theta_{\mathrm{C}}=-2.4$} & Mean Bias &  & 0.0948 &  & 0.0625 &  & 0.6806 &  & 0.0043\tabularnewline
 & MSE &  & 0.1127 &  & 0.1000 &  & 0.4683 &  & 0.0987\tabularnewline
\multirow{2}{*}{$\theta_{\mathrm{EC}}=-0.2$} & Mean Bias &  & -0.0277 &  & -0.0133 &  & -0.3146 &  & -0.0044\tabularnewline
 & MSE &  & 0.0286 &  & 0.0265 &  & 0.1023 &  & 0.0277\tabularnewline
Converged & \% &  &  &  &  &  & 99.8\% &  & 100\%\tabularnewline
\multirow{2}{*}{Iterations} & Median &  &  &  &  &  & 30 &  & 8\tabularnewline
 & IQR &  &  &  &  &  & 5 &  & 2\tabularnewline
Time (min.) & Total &  &  &  &  &  & 0.3486 &  & 0.1026\tabularnewline
\hline 
\end{tabular}
\end{table}

Overall, the results here illustrate the good performance of $k$-EPL
and in particular $\infty$-EPL. We see that $\infty$-EPL is generally
more efficient, is robust to unstable equilibria, and converges in
fewer iterations than $\infty$-NPL, resulting in substantial time
savings.

\subsubsection{Effects of Noisy, Inconsistent Starting Values}

Next, we consider robustness to starting values. Like convergence
results for Newton's method, our convergence results are local. That
is, the starting values (initial estimates) must be in a neighborhood
of the maximum likelihood estimates to guarantee convergence. We do
not claim, nor should we expect, global convergence results in models
with multiple equilibria. This underscores the importance of good
initial estimates, i.e., either consistent estimates or multiple starting
values, or both. First we explore using a single, polluted version
of the consistent estimates as the starting value. This is for the
purposes of exploring the effect of noise in consistent starting values
with respect to the locality of convergence. In practice, of course,
we recommend using multiple starting values.

\begin{table}[h]
\caption{\label{tab:PSD-Noise}Monte Carlo Results for \citet{PesSD2008} with
Noisy Starting Values ($N=250$)}

\centering{}%
\begin{tabular}{cccr@{\extracolsep{0pt}.}lr@{\extracolsep{0pt}.}lr@{\extracolsep{0pt}.}lr@{\extracolsep{0pt}.}lr@{\extracolsep{0pt}.}lr@{\extracolsep{0pt}.}lr@{\extracolsep{0pt}.}l}
\hline 
\hline  &  &  & \multicolumn{6}{c}{Equilibrium (i)} & \multicolumn{2}{c}{} & \multicolumn{6}{c}{Equilibrium (ii)}\tabularnewline
Parameter & Statistic &  & \multicolumn{2}{c}{$\infty$-NPL} & \multicolumn{2}{c}{} & \multicolumn{2}{c}{$\infty$-EPL} & \multicolumn{2}{c}{} & \multicolumn{2}{c}{$\infty$-NPL} & \multicolumn{2}{c}{} & \multicolumn{2}{c}{$\infty$-EPL}\tabularnewline
\hline 
\multirow{2}{*}{$\theta_{\mathrm{M}}=1.2$} & Mean Bias &  & -0&0827 & \multicolumn{2}{c}{} & 0&0266 & \multicolumn{2}{c}{} & -0&2116 & \multicolumn{2}{c}{} & -0&0855\tabularnewline
 & MSE &  & 0&0648 & \multicolumn{2}{c}{} & 0&0756 & \multicolumn{2}{c}{} & 0&0630 & \multicolumn{2}{c}{} & 0&1054\tabularnewline
\multirow{2}{*}{$\theta_{\mathrm{C}}=-2.4$} & Mean Bias &  & 0&1286 & \multicolumn{2}{c}{} & -0&0431 & \multicolumn{2}{c}{} & 0&6738 & \multicolumn{2}{c}{} & 0&2055\tabularnewline
 & MSE &  & 0&1282 & \multicolumn{2}{c}{} & 0&2234 & \multicolumn{2}{c}{} & 0&4832 & \multicolumn{2}{c}{} & 0&5810\tabularnewline
\multirow{2}{*}{$\theta_{\mathrm{EC}}=-0.2$} & Mean Bias &  & -0&0672 & \multicolumn{2}{c}{} & 0&0020 & \multicolumn{2}{c}{} & -0&3107 & \multicolumn{2}{c}{} & -0&0965\tabularnewline
 & MSE &  & 0&0251 & \multicolumn{2}{c}{} & 0&0073 & \multicolumn{2}{c}{} & 0&1118 & \multicolumn{2}{c}{} & 0&1275\tabularnewline
Converged & \% &  & 89&2\% & \multicolumn{2}{c}{} & 96&6\% & \multicolumn{2}{c}{} & 99&3\% & \multicolumn{2}{c}{} & 99&9\%\tabularnewline
\multirow{2}{*}{Iterations} & Median &  & \multicolumn{2}{c}{72} & \multicolumn{2}{c}{} & \multicolumn{2}{c}{9} & \multicolumn{2}{c}{} & \multicolumn{2}{c}{34} & \multicolumn{2}{c}{} & \multicolumn{2}{c}{10}\tabularnewline
 & IQR &  & \multicolumn{2}{c}{36} & \multicolumn{2}{c}{} & \multicolumn{2}{c}{3} & \multicolumn{2}{c}{} & \multicolumn{2}{c}{11} & \multicolumn{2}{c}{} & \multicolumn{2}{c}{3}\tabularnewline
Time (min.) & Total &  & 0&4024 & \multicolumn{2}{c}{} & 0&1217 & \multicolumn{2}{c}{} & 0&2533 & \multicolumn{2}{c}{} & 0&0889\tabularnewline
\hline 
\end{tabular}
\end{table}

In Table \ref{tab:PSD-Noise} we used initial choice probabilities
that were an equally weighted average of the consistent CCP estimates
and $\mathrm{{Uniform}}(0,1)$ noise. We then re-computed each of
the converged estimates --- $\infty$-NPL and $\infty$-EPL ---
using these noisy starting values with the small sample size $N=250$.
For equilibrium (i), comparing with the consistent starting values
from the top panel of Table \ref{tab:PSD-Eq1}, we see that the added
noise increases the MSE values and decreases convergence rates for
both estimators, but the increase in bias is smallest for $\infty$-EPL.
Furthermore, the convergence rate of $\infty$-EPL decreases less
than the convergence rate for $\infty$-NPL.

For equilibrium (ii) we can compare with the consistent starting values
from the top panel of Table \ref{tab:PSD-Eq2}. In this case the bias
and MSE for $\infty$-NPL only changed slightly because the estimates
were previously also biased. There is only a slight increase in bias
and MSE as a result of the noisy starting values, but the results
are largely the same as before. The convergence percentage for $\infty$-NPL
actually increased with the added noise, but the iterations still
converge to inconsistent estimates. On the other hand, the bias and
MSE values for $\infty$-EPL increased---especially for $\theta_{\text{{C}}}$---while
the convergence percentage only decreased from 100\% to 99.9\%.

Overall, while good starting values are important, these results show
that $k$-EPL is also somewhat robust starting values with fairly
severe estimation errors. Note that we do not actually recommend using
only a single starting value if first-stage CCP estimation accuracy
is a concern. With that in mind, in the next section we consider moving
away from consistent starting values entirely.

\subsubsection{$k$-EPL as a Computational Procedure (Random Starting Values)}

Rather than rely solely on a single consistent estimate, we consider
here using $k$-EPL as a computational procedure to compute the MLE
using multiple starting values (in practice, ideally with a consistent
estimate among them). A similar procedure was suggested by \citet{AgMir2007}
to compute the NPL estimator by attempting to use the $k$-NPL algorithm,
with multiple starting values, to compute all NPL fixed points, and
taking the estimate that maximizes the likelihood. However, for datasets
generated by equilibria for which the NPL mapping is unstable, the
initial guess may need to be exactly correct to reach those fixed
points.\footnote{\citet{AgMarcoux2019} show that this issue can even arise when the
data are generated from a stable equilibrium.} $k$-EPL, however, is stable and has a faster rate of convergence,
with the maximum likelihood estimator as a fixed point. So, in this
section we consider using this approach with $k$-EPL to compute the
MLE.

\begin{table}
\caption{\label{tab:PSD-MultiStart}Monte Carlo Results for \citet{PesSD2008}
Without Consistent Starting Values}

\centering{}%
\begin{tabular}{ccccr@{\extracolsep{0pt}.}lr@{\extracolsep{0pt}.}lr@{\extracolsep{0pt}.}lr@{\extracolsep{0pt}.}lr@{\extracolsep{0pt}.}lr@{\extracolsep{0pt}.}lr@{\extracolsep{0pt}.}l}
\hline 
\hline  &  &  &  & \multicolumn{6}{c}{Equilibrium (i)} & \multicolumn{2}{c}{} & \multicolumn{6}{c}{Equilibrium (ii)}\tabularnewline
Obs. & Parameter & Statistic &  & \multicolumn{2}{c}{$\infty$-NPL} & \multicolumn{2}{c}{} & \multicolumn{2}{c}{$\infty$-EPL} & \multicolumn{2}{c}{} & \multicolumn{2}{c}{$\infty$-NPL} & \multicolumn{2}{c}{} & \multicolumn{2}{c}{$\infty$-EPL}\tabularnewline
\hline 
\multirow{10}{*}{\begin{turn}{90}
$N=250$
\end{turn}} & \multirow{2}{*}{$\theta_{\mathrm{M}}=1.2$} & Mean Bias &  & -0&0127 & \multicolumn{2}{c}{} & 0&0295 & \multicolumn{2}{c}{} & -0&2035 & \multicolumn{2}{c}{} & -0&0083\tabularnewline
 &  & MSE &  & 0&0359 & \multicolumn{2}{c}{} & 0&0325 & \multicolumn{2}{c}{} & 0&0601 & \multicolumn{2}{c}{} & 0&0812\tabularnewline
 & \multirow{2}{*}{$\theta_{\mathrm{C}}=-2.4$} & Mean Bias &  & 0&0252 & \multicolumn{2}{c}{} & -0&0516 & \multicolumn{2}{c}{} & 0&6634 & \multicolumn{2}{c}{} & 0&0088\tabularnewline
 &  & MSE &  & 0&0566 & \multicolumn{2}{c}{} & 0&0552 & \multicolumn{2}{c}{} & 0&4699 & \multicolumn{2}{c}{} & 0&4649\tabularnewline
 & \multirow{2}{*}{$\theta_{\mathrm{EC}}=-0.2$} & Mean Bias &  & -0&0255 & \multicolumn{2}{c}{} & -0&0034 & \multicolumn{2}{c}{} & -0&3078 & \multicolumn{2}{c}{} & -0&0122\tabularnewline
 &  & MSE &  & 0&0113 & \multicolumn{2}{c}{} & 0&0046 & \multicolumn{2}{c}{} & 0&1100 & \multicolumn{2}{c}{} & 0&1228\tabularnewline
 & Converged & \% &  & 89&8\% & \multicolumn{2}{c}{} & 97&2\% & \multicolumn{2}{c}{} & 95&2\% & \multicolumn{2}{c}{} & \multicolumn{2}{c}{100\%}\tabularnewline
 & \multirow{2}{*}{Iterations} & Median &  & \multicolumn{2}{c}{347} & \multicolumn{2}{c}{} & \multicolumn{2}{c}{53} & \multicolumn{2}{c}{} & \multicolumn{2}{c}{176} & \multicolumn{2}{c}{} & \multicolumn{2}{c}{53}\tabularnewline
 &  & IQR &  & \multicolumn{2}{c}{117} & \multicolumn{2}{c}{} & \multicolumn{2}{c}{18} & \multicolumn{2}{c}{} & \multicolumn{2}{c}{46} & \multicolumn{2}{c}{} & \multicolumn{2}{c}{10}\tabularnewline
 & Time (min.) & Total &  & 1&9699 & \multicolumn{2}{c}{} & 0&5606 & \multicolumn{2}{c}{} & 1&2573 & \multicolumn{2}{c}{} & 0&4207\tabularnewline
\hline 
\multirow{10}{*}{\begin{turn}{90}
$N=1000$
\end{turn}} & $\theta_{\mathrm{M}}=1.2$ & Mean Bias &  & -0&0044 & \multicolumn{2}{c}{} & 0&0019 & \multicolumn{2}{c}{} & -0&2002 & \multicolumn{2}{c}{} & -0&0100\tabularnewline
 &  & MSE &  & 0&0083 & \multicolumn{2}{c}{} & 0&0057 & \multicolumn{2}{c}{} & 0&0444 & \multicolumn{2}{c}{} & 0&0173\tabularnewline
 & $\theta_{\mathrm{C}}=-2.4$ & Mean Bias &  & 0&0076 & \multicolumn{2}{c}{} & -0&0027 & \multicolumn{2}{c}{} & 0&6517 & \multicolumn{2}{c}{} & 0&0264\tabularnewline
 &  & MSE &  & 0&0106 & \multicolumn{2}{c}{} & 0&0065 & \multicolumn{2}{c}{} & 0&4303 & \multicolumn{2}{c}{} & 0&0962\tabularnewline
 & $\theta_{\mathrm{EC}}=-0.2$ & Mean Bias &  & -0&0059 & \multicolumn{2}{c}{} & -0&0016 & \multicolumn{2}{c}{} & -0&2915 & \multicolumn{2}{c}{} & -0&0146\tabularnewline
 &  & MSE &  & 0&0018 & \multicolumn{2}{c}{} & 0&0008 & \multicolumn{2}{c}{} & 0&0884 & \multicolumn{2}{c}{} & 0&0255\tabularnewline
 & Converged & \% &  & 95&1\% & \multicolumn{2}{c}{} & 98&7\% & \multicolumn{2}{c}{} & 99&6\% & \multicolumn{2}{c}{} & \multicolumn{2}{c}{100\%}\tabularnewline
 & \multirow{2}{*}{Iterations} & Median &  & \multicolumn{2}{c}{362} & \multicolumn{2}{c}{} & \multicolumn{2}{c}{50} & \multicolumn{2}{c}{} & \multicolumn{2}{c}{161} & \multicolumn{2}{c}{} & \multicolumn{2}{c}{52}\tabularnewline
 &  & IQR &  & \multicolumn{2}{c}{83} & \multicolumn{2}{c}{} & \multicolumn{2}{c}{12} & \multicolumn{2}{c}{} & \multicolumn{2}{c}{18} & \multicolumn{2}{c}{} & \multicolumn{2}{c}{8}\tabularnewline
 & Time (min.) & Total &  & 2&8453 & \multicolumn{2}{c}{} & 0&6559 & \multicolumn{2}{c}{} & 1&5548 & \multicolumn{2}{c}{} & 0&5956\tabularnewline
\hline 
\end{tabular}
\end{table}

Using the same model as before and focusing on equilibria (i) and
(ii), we proceed in the following way for each of 1000 replications.
First, we generated five completely random starting values for the
choice probabilities $P$. For each, we compute and store the corresponding
1-NPL estimate for $\theta$. Then we compute the $\infty$-NPL and
$\infty$-EPL estimates for each starting value.\footnote{As in the previous simulations, we allow for up to 100 iterations
per starting value but terminate early if convergence is achieved.} Finally, for both $\infty$-NPL and $\infty$-EPL we select from
among the five estimates the one that maximizes the likelihood function.
The results in Table \ref{tab:PSD-MultiStart} are calculated using
the best estimates for each of the 1000 replications. Reported iteration
counts, convergence percentages, and computational times include all
5 starting values over all replications.

Overall, the comparisons between $k$-EPL and $k$-NPL are qualitatively
similar to the case before with initial consistent estimates. For
the NPL-stable equilibrium (i), the small sample results for bias
are mixed, but $k$-EPL is faster, converges more often, and has smaller
MSE. In the large sample, $k$-EPL always has lower bias and MSE in
addition to being more stable and computationally lighter. Also as
before, $k$-EPL is robust to the NPL-unstable equilibrium. These
results show that $\infty$-EPL can be used as a computational procedure
to compute the MLE without initial consistent estimates, by using
multiple starting values and choosing the best estimates. This is
true even for equilibrium data generating processes where using the
same procedure with $\infty$-NPL would yield inconsistent estimates.

\subsection{\citet{PesSD2010}: Static Game, Unstable Equilibrium}

As a simple illustration of the performance of $k$-EPL, we consider
estimating the static game ($\beta=0$) of incomplete information
from \citet{PesSD2010}. This example is particularly interesting
because it was constructed as an example where inconsistency of $\infty$-NPL
can be shown analytically. We discuss only some relevant details of
the model and refer the reader to \citet{PesSD2010} for a full description.

There are two agents (players), $j\in\{1,2\}$, and two possible actions,
$a\in\{0,1\}$. The structural parameter is a scalar: $\theta\in[-10,-1]$.
The choice probabilities are $\Pr(a^{j}=1\mid\theta,P^{-j})=1-F_{\alpha}(-\theta P^{-j})$,
where $0<\alpha<1$. $F_{\alpha}$ is an approximate uniform distribution
with $F_{\alpha}(v)=v$ for $v\in[\alpha,1-\alpha)$ and a more complicated
form for $v\in\mathbb{R}\setminus[\alpha,1-\alpha)$ to guarantee
that it is a proper distribution function with full support. The probability
mass in the uniform region can be made arbitrarily close to 1 by taking
$\alpha\to0$. Given a value of $\theta$, the model has three equilibria
for $\alpha$ sufficiently close to zero. The equilibrium generating
the data is described by the following fixed point equation in $P$
space:
\[
\begin{aligned}\left[\begin{array}{c}
P^{1}\\
P^{2}
\end{array}\right]= & \left[\begin{array}{c}
1+\theta P^{2}\\
1+\theta P^{1}
\end{array}\right]\\
= & \left[\begin{array}{c}
1\\
1
\end{array}\right]+\left[\begin{array}{cc}
0 & \theta\\
\theta & 0
\end{array}\right]\left[\begin{array}{c}
P^{1}\\
P^{2}
\end{array}\right],
\end{aligned}
\]
or more compactly, $P=\Psi(\theta,P)$, so this can be used in $k$-NPL.
This linear system has a unique solution if and only if $\theta\neq-1$,
and the solution is $P^{1}=P^{2}=\frac{1}{1-\theta}$. \citet{PesSD2010}
consider $\theta^{*}=-2$, so that $P_{1}^{*}=P_{2}^{*}=\frac{1}{3}$.
They show that as $N\to\infty$, $\hat{\theta}_{\infty-NPL}\overset{p}{\to}-1$.
Rather than repeat their full explanation of this result, we instead
focus on explaining why the sequence does not converge to $\theta^{*}=-2$.
The reason is, essentially, because the equilibrium is unstable. Notice
that 
\[
\nabla_{P}\Psi(\theta^{*},P^{*})=\left[\begin{array}{cc}
0 & -2\\
-2 & 0
\end{array}\right],
\]
 which has eigenvalues $\lambda=\pm2$, implying that the equilibrium
is unstable. \citet{KasShim2012} show that the non-convergence issue
in $k$-NPL can be rectified by estimating separate parameters for
each player. However, this type of adjustment may not induce convergence
in more general settings.

Consider, instead, estimating $\theta^{*}$ with $k$-EPL, with a
change of variable to $v$ space. Noting that $P^{-j}=F_{\alpha}(v^{-j})$,
in the equilibrium we have $v^{j}=\theta F_{\alpha}(v^{-j})$ where
$F_{\alpha}(v^{-j})=v^{-j}$ in the region of interest. The fixed-point
equation then reduces to: 
\[
\left[\begin{array}{c}
v^{1}\\
v^{2}
\end{array}\right]=\left[\begin{array}{c}
(1+v^{2})\theta\\
(1+v^{1})\theta
\end{array}\right].
\]
 So, we can define $Y\equiv v=(v^{1},v^{2})$ and therefore
\begin{align*}
G(\theta,v) & =v-\left(\left[\begin{array}{cc}
0 & 1\\
1 & 0
\end{array}\right]v+\left[\begin{array}{c}
1\\
1
\end{array}\right]\right)\theta\\
 & =v-\left(Av+b\right)\theta
\end{align*}
 Because $\theta$ is a scalar, $G(\theta,v)$ is linear in $\theta$
and $v$ separately (holding the other fixed) but not jointly. Linearity
in $v$ is important because $G(\theta,v)=0$ can be solved analytically
via a linear system. So, we can easily compute the finite-sample MLE
via nested fixed point as well as via EPL iterations. Additionally,
we see that $\nabla_{v}G(\theta,v)=I-A\theta$ and we can easily verify
that this is invertible if and only if $\theta\neq-1$. And since
$v^{j}=\theta P^{-j}$, we also have $q_{i}(\theta,v)=q_{i}(v)$,
so that $\theta$ only influences $q_{i}$ through $v(\theta)$. This
modification is made without loss of generality in full MLE subject
to the equilibrium constraint, so it is also valid here. For $k$-EPL,
we have $\Upsilon(\theta,\hat{\gamma}_{k-1})=\hat{v}_{k-1}-\nabla_{v}G(\hat{\theta}_{k-1},\hat{v}_{k-1}){}^{-1}G(\theta,\hat{v}_{k-1})$,
as in Algorithm (\ref{alg:EPL-Algorithm}). %
\begin{comment}
Because $G(\theta,Y)$ is linear in $\theta$, $\Upsilon$ is also
linear in $\theta$. This means that $\hat{q}_{i}(\theta,\Upsilon(\theta,\hat{\gamma}_{k-1}))=\hat{q}_{i}(\Upsilon(\theta,\hat{\gamma}_{k-1}))$
is the log of a linear function of $\theta$, so there are no additional
nonlinearities in the objective function, relative to $k$-NPL. If
we instead tried to use $Y=(P_{1},P_{2})$, then we would still get
$\Upsilon$ linear in $\theta$, but $q_{i}$ would then depend on
$\theta\Upsilon(\theta,Y)$, which is non-linear in $\theta$.
\end{comment}

All that remains now is to obtain $\hat{v}_{0}$ and $\hat{\theta}_{0}$.
Notice that the best response equations imply $\theta=\frac{P^{j}-1}{P^{-j}}$
for $j\in\{1,2\}$. So first, we obtain frequency estimators $\hat{P}_{0}^{1}$
and $\hat{P}_{0}^{2}$. We then use these to construct
\[
\hat{\theta}_{0}=\frac{\frac{\hat{P}_{0}^{1}-1}{\hat{P}_{0}^{2}}+\frac{\hat{P}_{0}^{2}-1}{\hat{P}_{0}^{1}}}{2},
\]
\[
\hat{v}_{0}^{j}=\hat{\theta}_{0}\hat{P}_{0}^{-j}.
\]

We run Monte Carlo simulations of this model to illustrate the performance
of the estimators. We simulate 500 samples, each with 5,000 observations.
We estimate the model using MLE, $\infty$-EPL, and $\infty$-NPL.\footnote{By $\infty$-EPL and $\infty$-NPL, we mean that we iterate until
$||\hat{\theta}_{k}-\hat{\theta}_{k-1}||_{\infty}<10^{-6}$ or $k$
reaches 20. Estimation was performed with Matlab R2017a using \texttt{fmincon}.
The default tolerance of $10^{-6}$ is used for the solver.} The results are summarized in Table \ref{tab:PesSD2008-Monte-Carlo}.
The MLE and $\infty$-EPL estimates achieve mean $-2.0017$ and $-2.0014$,
respectively, and mean squared error (MSE) $0.0017$ and $0.0017$.
The two-sample Kolmogorov-Smirnov p-value is equal to 1. Furthermore,
$k$-EPL obtained convergence at $k=2$ in all 500 datasets. This
is unsurprising: with so many observations and only two players/actions,
we get \emph{very} precise initial estimates, so iteration converges
very quickly. The slight difference in means and MSE are likely due
to a combination of the tolerance used in estimation and non-linearity
in the full MLE objective function.

\begin{table}[h]
\caption{\label{tab:PesSD2008-Monte-Carlo}\citet{PesSD2010} Monte Carlo Results}

\centering{}%
\begin{tabular}{ccc}
\hline 
\hline Estimator & Mean & MSE\tabularnewline
\hline 
MLE & -2.0017 & 0.0017\tabularnewline
$\infty$-EPL & -2.0014 & 0.0017\tabularnewline
$\infty$-NPL & -1.0342 & 0.9652\tabularnewline
\hline 
\end{tabular}
\end{table}

On the other hand, $\infty$-NPL performs poorly as expected since
this model was constructed to be an example where $\infty$-NPL is
inconsistent. The estimate has a mean of $-1.0342$ and MSE of $0.9651$.
Almost all of the MSE is due to the asymptotic bias, so the estimate
is reliably converging to the wrong number.\footnote{There were 17 samples for which NPL converged in 3 or fewer iterations.
The mean and MSE for these samples were $-1.9932$ and $0.0012$,
respectively. For the other 483 samples, convergence took at least
12 iterations. These had a mean and MSE of $-0.9991$ and $0.9991$,
respectively. \citet{AgMarcoux2019} explain why the estimates converge
to ``good'' values in some samples even though the equilibrium generating
the data is unstable.}

\section{Alternative Choices of $\Upsilon(\cdot)$}

We have already mentioned that the choice of $\Upsilon$ used in the
algorithm could be replaced with full Newton steps without affecting
the asymptotic results. These are only two of several choices that
yield the same asymptotic results, as shown in the next theorem.
\begin{thm}
\label{thm:Alternative-Properties}(Asymptotically Equivalent Definitions
of $\Upsilon$) The results of Theorems \ref{thm:kEPL-Properties}
and \ref{thm:Finite-Sample-Properties} hold when $\Upsilon$ is defined
as any of the following, where $\hat{\gamma}_{k-1}=(\hat{\theta}_{k-1},\hat{Y}_{k-1})$:
\begin{enumerate}
\item $\Upsilon(\theta,\hat{\gamma}_{k-1})=\hat{Y}_{k-1}-\nabla_{Y}G(\hat{\theta}_{k-1},\hat{Y}_{k-1}){}^{-1}G(\theta,\hat{Y}_{k-1})$.
\item $\Upsilon(\theta,\hat{\gamma}_{k-1})=\hat{Y}_{k-1}-Z(\hat{\theta}_{k-1},\hat{Y}_{k-1})^{-1}G(\theta,\hat{Y}_{k-1})$,
where $Z$ is a continuously differentiable function and $Z(\theta,Y_{\theta})=\nabla_{Y}G(\theta,Y_{\theta})$
for all $\theta$.
\item $\Upsilon(\theta,\hat{\gamma}_{k-1})=\hat{Y}_{k-1}-\nabla_{Y}G(\theta,\hat{Y}_{k-1}){}^{-1}G(\theta,\hat{Y}_{k-1})$.
\end{enumerate}
\end{thm}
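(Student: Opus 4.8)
The plan is to reduce everything to Lemma~\ref{prop:Newt-Props}. The first observation is that the proofs of Theorems~\ref{thm:kEPL-Properties} and~\ref{thm:Finite-Sample-Properties} invoke the operator $\Upsilon$ only through three features: (i) continuity of $\Upsilon$ and its twice-differentiability in a neighborhood of $(\theta^*,Y^*)$ and of $(\hat\theta_{\text{MLE}},\hat\gamma_{\text{MLE}})$; (ii) the three conclusions of Lemma~\ref{prop:Newt-Props}, applied at the points $(\theta^*,\gamma^*)$ and $(\hat\theta_{\text{MLE}},\hat\gamma_{\text{MLE}})$, both of which satisfy $\gamma=(\theta,Y_\theta)$ because the truth and the constrained MLE lie on the manifold $\{(\theta,Y_\theta):G(\theta,Y_\theta)=0\}$; and (iii) the fact that $\Upsilon(\theta,\gamma^*)=Y^*$ holds exactly when $G(\theta,Y^*)=0$, which is what makes $\theta^*$ the unique maximizer of $\tilde Q^*(\cdot,\gamma^*)$ given the model's identification conditions. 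No step uses the specific algebraic form of $\Upsilon$. Definition~1 is the operator of Algorithm~\ref{alg:EPL-Algorithm} itself, so nothing needs to be checked; it therefore suffices to verify that Definitions~2 and~3 also possess features (i)--(iii).

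For Definition~2, write $\gamma=(\breve\theta,Y)$ and evaluate at $\gamma_\theta=(\theta,Y_\theta)$. Then $\Upsilon(\theta,\gamma_\theta)=Y_\theta-Z(\theta,Y_\theta)^{-1}G(\theta,Y_\theta)$, and since $Z(\theta,Y_\theta)=\nabla_YG(\theta,Y_\theta)$ is non-singular by Assumption~\ref{assu:Basics}(d), this is formally identical to the expression analyzed in the proof of Lemma~\ref{prop:Newt-Props}, with $\nabla_YG$ replaced by an object that coincides with it on the manifold. Result~1 then follows from $G(\theta,Y_\theta)=0$; Result~2 follows because $\nabla_\theta\Upsilon(\theta,\gamma_\theta)=-Z(\theta,Y_\theta)^{-1}\nabla_\theta G(\theta,Y_\theta)=-\nabla_YG(\theta,Y_\theta)^{-1}\nabla_\theta G(\theta,Y_\theta)=Y'(\theta)$ by implicit differentiation of $G(\theta,Y_\theta)=0$; and Result~3 follows because every term in $\nabla_\gamma\Upsilon$ that carries a factor $G(\theta,Y_\theta)=0$ vanishes, leaving only $I-Z(\theta,Y_\theta)^{-1}\nabla_YG(\theta,Y_\theta)=0$. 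Smoothness in (i) is inherited since $G$ is thrice and $Z$ once continuously differentiable, and $Z(\hat\theta_{k-1},\hat Y_{k-1})$ is non-singular in a neighborhood of $(\theta^*,Y^*)$ (equivalently of $(\hat\theta_{\text{MLE}},\hat Y_{\text{MLE}})$) by continuity of the determinant, using that these points lie on the manifold where $Z=\nabla_YG$. Finally (iii) is immediate: $\Upsilon(\theta,\gamma^*)=Y^*$ reduces to $G(\theta,Y^*)=0$, exactly as for Definition~1.

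For Definition~3 the parameter $\theta$ additionally enters the Jacobian factor, so differentiating gives $\nabla_\theta\Upsilon(\theta,\gamma)=-\tfrac{\partial\nabla_YG(\theta,Y)^{-1}}{\partial\theta}G(\theta,Y)-\nabla_YG(\theta,Y)^{-1}\nabla_\theta G(\theta,Y)$ and $\nabla_Y\Upsilon(\theta,\gamma)=I-\tfrac{\partial\nabla_YG(\theta,Y)^{-1}}{\partial Y}G(\theta,Y)-\nabla_YG(\theta,Y)^{-1}\nabla_YG(\theta,Y)$, while the $\breve\theta$-block of $\nabla_\gamma\Upsilon$ is identically zero since $\breve\theta$ does not appear. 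At any point where $G=0$---in particular at $(\theta,\gamma_\theta)$, at $(\theta^*,\gamma^*)$, and at $(\hat\theta_{\text{MLE}},\hat\gamma_{\text{MLE}})$---every term premultiplied by $G$ drops out, leaving $\nabla_\theta\Upsilon=-\nabla_YG^{-1}\nabla_\theta G=Y'(\theta)$ (Result~2) and $\nabla_\gamma\Upsilon=0$ (Result~3), while Result~1 is immediate from $G(\theta,Y_\theta)=0$ and invertibility of $\nabla_YG(\theta,Y_\theta)$. For (i), one restricts attention to a neighborhood of $(\theta^*,Y^*)$ (resp.\ of $(\hat\theta_{\text{MLE}},\hat Y_{\text{MLE}})$) on which $\nabla_YG(\theta,\hat Y_{k-1})$ is non-singular by continuity and Assumptions~\ref{assu:Basics}(d) and~\ref{assu:MLE-non-singular}; this is harmless because the relevant asymptotic and finite-sample arguments are all local. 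Condition (iii) again reduces $\Upsilon(\theta,\gamma^*)=Y^*$ to $G(\theta,Y^*)=0$. With features (i)--(iii) established for all three definitions, the proofs of Theorems~\ref{thm:kEPL-Properties} and~\ref{thm:Finite-Sample-Properties} carry over verbatim---in particular the zero-Jacobian property still forces $\nabla_{\theta\gamma'}\tilde Q^*(\theta^*,\gamma^*)=0$ and makes the analogue of equation~\eqref{eq:Upsilon-jacobian} vanish at $(\hat\theta_{\text{MLE}},\hat\gamma_{\text{MLE}})$. The one place warranting care is Definition~3: because $\nabla_YG$ is evaluated at the running value of $\theta$ rather than at $\hat\theta_{k-1}$, its differentiation produces the extra $\partial\nabla_YG^{-1}/\partial(\cdot)$ terms above, and the argument works precisely because Lemma~\ref{prop:Newt-Props}'s properties are invoked only at points where $G=0$, where those terms disappear.
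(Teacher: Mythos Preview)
Your proof is correct and follows the same route as the paper: verify that each alternative $\Upsilon$ satisfies the properties of Lemma~\ref{prop:Newt-Props} (in particular the zero Jacobian property) at points on the constraint manifold, so that the arguments of Theorems~\ref{thm:kEPL-Properties} and~\ref{thm:Finite-Sample-Properties} carry through verbatim. The paper's proof is a one-line assertion to this effect, while you supply the explicit derivative computations for Definitions~2 and~3 that make this claim concrete.
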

\begin{proof}
All of the listed $\Upsilon(\cdot)$ functions satisfy the zero Jacobian
property, so the results from the proofs of Theorems \ref{thm:kEPL-Properties}
and \ref{thm:Finite-Sample-Properties} carry through.
\end{proof}
The first definition of $\Upsilon$ in the Theorem \ref{thm:Alternative-Properties}
is the one we have worked with so far. The second definition is a
generalization of the first and can allow researchers to circumvent
the need for an initial $\hat{\theta}_{0}$ if they can find some
$Z(\hat{\theta}_{k-1},\hat{Y}_{k-1})=Z(\hat{Y}_{k-1})$ or even $Z(\hat{\theta}_{k-1},\hat{Y}_{k-1})=A$
that has the required properties.\footnote{Of course, $Z(\hat{\theta}_{k-1},\hat{Y}_{k-1})=\nabla_{Y}G(\hat{\theta}_{k-1},\hat{Y}_{k-1})$
is an option, so definition 2 includes definition 1.} We will show later on that this definition can be used in single-agent
dynamic discrete choice models. The third definition, which is an
exact Newton step, is likely the least useful because it requires
inverting $G_{Y}(\theta,\hat{Y}_{k-1})$ at multiple values of $\theta$,
which can be computationally burdensome and also will introduce additional
nonlinearities in the objective function for optimization.\footnote{More precisely, it only requires solving the linear system $G_{Y}(\theta,\hat{Y}_{k-1})b=G(\theta,\hat{Y}_{k-1})$
for $b$. However, the point about computational burden remains.} However, we include it for completeness. For all of the definitions
of $\Upsilon$ in the theorem, the results of Lemma \ref{lem:Newt-Props}
hold when all appropriate terms are replaced with $(\theta^{*},Y^{*})$
or $(\hat{\theta}_{\text{{MLE}}},\hat{Y}_{\text{{MLE}}})$. So, the
proof techniques from Theorems \ref{thm:kEPL-Properties} and \ref{thm:Finite-Sample-Properties}
can be used to prove Theorem \ref{thm:Alternative-Properties}.
\end{document}